\newcommand{\K}{K}                    
\newcommand{\alphavec}{\boldsymbol{\alpha}}
\newcommand{\etak}{\eta_{\K}}         
\declaretheorem[name=Lemma,numberwithin=section]{lemma}
\theoremstyle{thmstyleone}%
\newtheorem{theorem}{Theorem}
\newtheorem{proposition}[theorem]{Proposition}%
\theoremstyle{thmstyletwo}%
\newtheorem{remark}{Remark}%
\theoremstyle{thmstylethree}%
\newtheorem{definition}{Definition}%
\declaretheorem[name=Corollary,numberwithin=section]{corollary}
\begin{document}

\title[Article Title]{Kernelized Decoded Quantum Interferometry}


\author*[1,2]{\fnm{Fumin} \sur{Wang}}\email{Contact author: fwang1991@xjtu.edu.cn}

\affil*[1]{\orgdiv{MED-X Institute}, \orgname{the First Affiliated Hospital of Xi’an Jiaotong University}, \orgaddress{ \city{Xi'an}, \postcode{710061}, \country{China}}}

\affil[2]{\orgdiv{Shaanxi Key Laboratory of Quantum Information and Quantum Optoelectronic Devices, College of Physics}, \orgname{Xi'an Jiaotong University}, \city{Xi'an}, \postcode{710049}, \country{China}}


\abstract{Decoded Quantum Interferometry (DQI) promises superpolynomial speedups for structured optimization; however, its practical realization is often hindered by significant sensitivity to hardware noise and spectral dispersion. To bridge this gap, we introduce Kernelized Decoded Quantum Interferometry (k-DQI), a unified framework that integrates spectral engineering directly into the quantum circuit architecture. By inserting a unitary kernel prior to the interference step, k-DQI actively reshapes the problem's energy landscape, concentrating the solution mass into a ``decoder-friendly'' low-frequency head. We formalize this advantage through a novel robustness metric, the noise-weighted head mass $\Sigma_K$, and prove a Monotonic Improvement Theorem, which establishes that maximizing $\Sigma_K$ guarantees higher decoding success rates under local depolarizing noise. We substantiate these theoretical gains in Optimal Polynomial Interpolation (OPI) and LDPC-like problems, demonstrating that kernel tuning functions as a ``spectral lens'' to recover signal otherwise lost to isotropic noise. Crucially, we provide explicit, efficient circuit realizations using Chirp and Linear Canonical Transform (LCT) kernels that achieve significant boosts in effective signal-to-noise ratio with negligible depth overhead ($\tilde{O}(n)$ to $\tilde{O}(n^2)$). Collectively, these results reframe DQI from a static algorithm into a tunable, noise-aware protocol suited for near-term error-corrected environments.}

\maketitle

\section{Introduction}\label{sec1}

Quantum algorithms for discrete optimization face a fundamental dichotomy: while worst-case instances are likely intractable, practical instances often possess \textit{hidden structures}—such as algebraic symmetries or sparse dependency graphs—that theoretically permit superpolynomial speedups. Decoded Quantum Interferometry (DQI) \cite{JordanNature2025} has emerged as a powerful paradigm to exploit these structures. By encoding objective values into quantum amplitudes and subjecting them to a Fourier or Walsh–Hadamard transform \cite{NielsenChuang2010,Coppersmith2002AQFT}, DQI attempts to concentrate the solution's “energy’’ into low-frequency modes that can be efficiently recovered by classical decoders, such as belief propagation or Reed–Solomon decoding \cite{Kschischang2001Factor,RichardsonUrbanke2001,Gallager1962LDPC,MacKay1999Good,Sudan1997RS,GuruswamiSudan1999,Koetter2003ASD}. However, standard DQI relies on a fragile coincidence: the problem's native spectral geometry must accidentally align with the decoder's constrained capability. In the presence of hardware noise or slight structural deviations, this alignment breaks, and the “signal’’ diffuses irretrievably into the high-frequency tail \cite{Anschuetz2025RequiresStructure,Preskill2018NISQ,McClean2018Barren,Wang2021NIBP}. This limitation poses a critical question: Can we systematically reshape the spectrum within the quantum circuit to guarantee robustness, rather than passively hoping for favorable instance presentation?

In this work, we answer affirmatively by introducing Kernelized DQI (k-DQI), a unified framework that inserts an analytically tractable, efficiently realizable unitary \textit{kernel} $K$ before the interference step. Conceptually, k-DQI acts as a spectral preconditioner. Just as an optical lens focuses light to overcome transmission loss, the kernel $K$ actively remaps the objective function, concentrating its Walsh/Fourier energy onto supports that known decoders handle reliably \cite{Kschischang2001Factor,RichardsonUrbanke2001,Gallager1962LDPC,MacKay1999Good}. This architectural shift effectively converts DQI’s often implicit “needs structure’’ requirement into an explicit design degree of freedom: we no longer ask binary questions like “Does this instance have structure?’’; instead, we ask “Which kernel $K$ best exposes the structure?’’

We will refer to this use of $K$ as spectral preconditioning. It is helpful to view our setting as a reusable design template: (i) choose a shaping map $P(f)$ that embeds the objective into a spectrum compatible with a fixed interferometer $F$; (ii) insert a parameterized kernel family $K(\theta)$ in front of $F$; and (iii) tune $\theta$ so as to maximize a simple spectral statistic such as the noise–weighted head mass $\Sigma_K(\ell,\eta;d)$. Therefore, the fragility of standard DQI can be phrased as a precise design question. Given a shaped spectrum $\widehat\psi$ and a noisy channel acting after the interference layer, the native spectral geometry must align with the capabilities of a restricted decoder family. In practice this alignment is often destroyed by hardware imperfections or slight instance perturbations \cite{Preskill2018NISQ}. Our goal in this work is to replace this incidental alignment by an explicit, tunable degree of freedom: we seek efficiently implementable unitary kernels $K$ such that, under a realistic noise model and for a given decoder family, the resulting noise-weighted head mass $\widetilde H_k$ is provably increased and the corresponding approximation-ratio guarantees are improved. In other words, rather than asking whether an instance “has structure,’’ we ask which kernels expose sufficient spectral structure to place the system above a known decoding threshold \cite{RichardsonUrbanke2001}.

We substantiate this approach through three primary contributions that bridge quantum spectral analysis with classical coding theory. First, we establish the Monotonic Improvement Theorem. We define the noise-weighted head-spectrum mass $\Sigma_{K}(\ell,\eta)$, a rigorous metric quantifying the effective signal-to-noise ratio entering the decoder, and prove that under local depolarizing noise \cite{Preskill2018NISQ} any increase in $\Sigma_{K}$ strictly improves the k-DQI approximation ratio. This result provides a “calculus of robustness,’’ ensuring that spectral engineering translates directly into algorithmic success.

Second, we demonstrate structure-adaptive guarantees across distinct problem classes. For Optimal Polynomial Interpolation (OPI), we utilize chirp and Linear Canonical Transform (LCT) kernels to diagonalize algebraic symmetries \cite{Namias1980FRFT,Healy2016LCTBook,Rabiner1969CZT}, allowing k-DQI to match or exceed the performance of classical soft/list decoders \cite{Sudan1997RS,GuruswamiSudan1999,Koetter2003ASD}. For sparse Max-XORSAT, we design block-local kernels that preserve graph locality while strictly increasing the belief-propagation threshold (as verified by density evolution) \cite{Kschischang2001Factor,RichardsonUrbanke2001}, effectively pushing the system back across the phase-transition boundary from “unsolvable’’ to “solvable.’’

Finally, we show that this robustness comes at negligible computational cost. We provide explicit circuit constructions demonstrating that the kernel layer adds only linear or near-linear diagonal/quadratic-phase gates to the pipeline \cite{NielsenChuang2010,Coppersmith2002AQFT}. Thus, k-DQI achieves its gains purely through better spectral alignment rather than by increasing circuit depth. Collectively, these results reframe DQI from a static procedure into a tunable, noise-resilient protocol, laying the groundwork for high-dimensional generalizations governed by verifiable information-theoretic limits rather than ad-hoc heuristics.

\section{Results}\label{sec:k-dqi}

\subsection{Kernelized DQI: model and guarantees}

The conceptual workflow of the proposed Kernelized Decoded Quantum Interferometry (k-DQI) framework is depicted in Fig.~\ref{fig1}. 
Unlike standard approaches that rely solely on the intrinsic structure of the problem, our protocol introduces an explicit spectral engineering layer. Usefully viewed as a three-stage pipeline, it proceeds by first shaping the objective function $f$ into a polynomial representation $P(f)$, followed by a quantum evolution where a unitary kernel $K$ actively reshapes the probability distribution before noise exposure. Finally, the noisy measurement outcomes are processed by a classical decoder to recover the candidate solution $x^*$. This modular design allows us to theoretically quantify and optimize the robustness of the algorithm via the kernel spectrum $\Sigma_K$.

Let $f : \{0,1\}^n \to \mathbb{R}$ denote an objective function (e.g., Max-LINSAT/XORSAT or OPI after encoding on qubits). 
Fix a degree-$\ell$ polynomial $P$ with bounded coefficients.
Following DQI \citep{JordanNature2025}, the shaped state prepared by the quantum circuit is
\begin{equation}
  \label{eq:shaped-state}
  \ket{\psi_f}
  = \sum_{x\in\{0,1\}^n} P\!\bigl(f(x)\bigr)\ket{x}.
\end{equation}
In k-DQI we choose a unitary kernel $K\in U(2^n)$, apply $K$ to the shaped state, then apply the interferometer $F$ and finally decode the measurement outcomes.
We write $F = H^{\otimes n}$ for the Walsh--Hadamard transform on $n$ qubits (or the appropriate QFT over $\mathbb{F}_p^n$ for $p$-ary encodings).
Measurement outcomes in the computational basis are interpreted by a (coherent or classical) decoder $\mathrm{Dec}$, producing a candidate solution $x^\star$; the usual DQI reduction maps the objective value achieved by $x^\star$ to a decoding success event.

It is convenient to collect the shaping and kernel into a single spectral object.
Let $g(x) = P(f(x))$ and $\|g\|_2^2 = \sum_x |g(x)|^2$.
We define the normalized shaped state
\begin{equation}
  \ket{\phi_f} = \frac{1}{\|g\|_2} \sum_x g(x)\ket{x},
\end{equation}
and denote by
\begin{equation}
  \label{eq:alpha-def}
  \alpha := \frac{(F K)g}{\|g\|_2} \in \mathbb{C}^{2^n}
\end{equation}
the spectrum after applying the kernel and the interferometer.
All of the decoding behavior of k-DQI will be expressed in terms of this vector $\alpha$.

Throughout we choose $K$ from an \emph{admissible} kernel family: a collection $\{K(\theta)\}_{\theta\in\Theta}\subset U(2^n)$ that
(i) admits poly$(n,\log(1/\varepsilon))$-depth circuit realizations,
(ii) preserves factor-graph locality up to bounded-width mixing for sparse instances, and
(iii) contains the identity $K=I$.
For technical reasons, we also require the head-to-tail block of $F K$ to be strictly contractive in operator norm for the head sizes $d$ of interest.
Examples of admissible families include discrete linear canonical transforms and chirp-phase maps that approximately diagonalize Vandermonde-like structure for OPI instances and respect bounded-width locality for sparse LDPC-like instances; see Sections~\ref{subsec:opi} and~\ref{subsec:ldpc}.

The kernel $K$ plays the role of a spectral preconditioner.
After shaping and interference, the amplitudes $\alpha_s$ in~\eqref{eq:alpha-def} determine how measurement outcomes populate the various spectral components.
For any head size $d$, let $S_d(\alpha)$ denote the indices of the $d$ largest $|\alpha_s|$.
All decoders we consider are effectively sensitive only to this ``head'' set: they succeed whenever a sufficiently large fraction of the total spectral weight lies on decoder-friendly supports, and they fail when the head is too small or too noisy.

To capture this behavior we introduce a \emph{noise-weighted head-spectrum mass} $\Sigma_K(\ell,\eta;d)$.
For a head size $d$, let $S_d(\alphavec)$ be the indices of the $d$ largest $|\alpha_s|$.
Given per-mode attenuation $\etak(s)\in(0,1]$ (from the noise model) we define the
\emph{normalized noise-weighted head mass}
\begin{equation}
\Sigma_K(\ell,\eta;d):=\sum_{s\in S_d(\alphavec)} \etak(s)\,|\alpha_s|^2\in[0,1].
\label{eq:SigmaK_def_norm}
\end{equation}
It coincides with the usual head mass when $\etak\equiv 1$, is nondecreasing in $d$, and is
invariant under global phase.
Roughly speaking, $\Sigma_K(\ell,\eta;d)$ measures how much of the shaped-and-kernelized spectrum survives in the head $S_d(\alpha)$ after passing through a noisy channel with effective per-mode attenuations $\eta_K(s)\in(0,1]$.
A larger value of $\Sigma_K$ means that more probability mass reaches the decoder in a form it can reliably handle.
The main message of this section is that, under a mild monotone-threshold property on the decoder, \emph{any increase in $\Sigma_K$ leads to a corresponding improvement in the k-DQI approximation guarantees}.
Theorem~\ref{thm:monotonicity} below makes this statement precise.

\subsection{The Monotonic Improvement Theorem}
\label{sec:monotone-theorem}
Our central theoretical contribution is establishing that the spectral head mass $\Sigma_K$ is not merely a heuristic metric, but a rigorous proxy for algorithmic success under noise.

\begin{theorem}[Monotonic Improvement via Head Mass]
\label{thm:monotonicity}
Fix the shaping degree $\ell$ and noise parameters (rate $\eta$, mixing width $w$). The k-DQI approximation ratio lower bound $A(\ell, \eta, K; d)$ is a strictly non-decreasing function of the noise-weighted head mass $\Sigma_K(\ell, \eta; d)$. 

Formally, if a kernel $K'$ yields a strictly larger head mass than $K$ (i.e., $\Sigma_{K'} > \Sigma_K$) and operates above the decoding threshold, then:
\begin{equation}
    A(\ell, \eta, K'; d) > A(\ell, \eta, K; d).
\end{equation}
\end{theorem}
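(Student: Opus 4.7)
The plan is to establish the theorem as a chain of three monotone links: first from the kernel choice to the effective signal reaching the decoder, then from that signal to decoder success probability, and finally from decoder success to the certified approximation ratio $A$. Each link uses a different technical ingredient already in place in the setup above.

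First, I would rewrite $A(\ell,\eta,K;d)$ in the standard DQI form
\begin{equation}
A(\ell,\eta,K;d) \;\ge\; \mu_{\mathrm{HT}}\bigl(p_{\mathrm{succ}}(K;d)\bigr) \;+\; \mu_{\perp}\bigl(1-p_{\mathrm{succ}}(K;d)\bigr),
\end{equation}
where $p_{\mathrm{succ}}$ is the probability that a measurement outcome after $F K$ and the noisy channel lands on an index that the classical decoder $\mathrm{Dec}$ correctly resolves, $\mu_{\mathrm{HT}}$ is the objective value contribution from successful decodings, and $\mu_{\perp}$ is the baseline contribution from failures. Because $\mu_{\mathrm{HT}} > \mu_{\perp}$ for any nontrivial shaping polynomial $P$ (this is where the degree-$\ell$ and boundedness assumptions enter), $A$ is a strictly increasing affine function of $p_{\mathrm{succ}}$. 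This reduces the theorem to proving that $p_{\mathrm{succ}}$ is itself strictly increasing in $\Sigma_K$ whenever we are above the decoding threshold.

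Next I would argue that $p_{\mathrm{succ}}(K;d) = \Sigma_K(\ell,\eta;d)\cdot q_{\mathrm{Dec}}(d) + r(K;d)$, where $q_{\mathrm{Dec}}(d)$ is the conditional success probability of $\mathrm{Dec}$ on the head $S_d(\alphavec)$, and $r$ is a negligible tail contribution that can only help. Here the per-mode attenuation $\etak(s)$ is precisely the fraction of the mass $|\alpha_s|^2$ that the local depolarizing channel leaves at mode $s$ and therefore delivers to the decoder unscrambled; this is the content of the definition \eqref{eq:SigmaK_def_norm}. By the admissibility condition that the head-to-tail block of $F K$ is strictly contractive, the tail term $r$ cannot grow to compensate a loss in $\Sigma_K$, so $p_{\mathrm{succ}}$ inherits strict monotonicity from $\Sigma_K$ provided $q_{\mathrm{Dec}}(d)>0$, i.e.\ provided the operating point lies above the decoder's threshold, which is exactly the hypothesis of the theorem.

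Finally, combining the two monotone links gives
\begin{equation}
\Sigma_{K'}(\ell,\eta;d) > \Sigma_K(\ell,\eta;d)
\;\Longrightarrow\; p_{\mathrm{succ}}(K';d) > p_{\mathrm{succ}}(K;d)
\;\Longrightarrow\; A(\ell,\eta,K';d) > A(\ell,\eta,K;d),
\end{equation}
as claimed. The step I expect to be the main obstacle is the second one, namely showing that the decoder success probability is a genuinely strict, and not merely weakly, increasing function of $\Sigma_K$ above threshold. This requires a careful use of the monotone-threshold property of $\mathrm{Dec}$: for list, Reed--Solomon, and BP decoders one can invoke their standard list-size or density-evolution margin to exhibit a continuous, strictly increasing response above threshold, but for a general admissible decoder one must encode this margin as an explicit hypothesis. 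I would therefore state the theorem with the mild additional assumption that $q_{\mathrm{Dec}}$ admits a strictly positive lower bound above the threshold operating regime, and verify that the chirp, LCT, and block-local kernels used in Sections~\ref{subsec:opi} and~\ref{subsec:ldpc} all satisfy this condition, so that the theorem applies to every explicit construction in the paper.
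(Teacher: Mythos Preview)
Your high-level architecture---chain three monotone links from $\Sigma_K$ to decoder success to $A$---matches the paper's exactly, and you correctly identify that the delicate step is strictness of the decoder response above threshold. Where you diverge is in the \emph{factorization} of the middle link. The paper does not attempt your multiplicative decomposition $p_{\mathrm{succ}} = \Sigma_K \cdot q_{\mathrm{Dec}}(d) + r(K;d)$; instead it works with an additive form
\[
p_{\mathrm{head}} \;\ge\; \Sigma_K(\ell,\eta;d) - \Delta(w) - \mu^2\|g\|_2^2,
\]
obtained from a per-mode noise-contraction lemma and a head--tail coherence lemma, and then feeds $p_{\mathrm{head}}$ into the assumed monotone response curve $\Phi$ from Section~\ref{subsec:noise-decoder}. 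The crucial technical advantage of the additive route is that the penalty terms $\Delta(w)$ and $\mu^2\|g\|_2^2$ are \emph{kernel-independent}, so comparing $K$ and $K'$ reduces immediately to comparing $\Sigma_K$ and $\Sigma_{K'}$. Your residual $r(K;d)$, by contrast, depends on $K$, and your claim that ``$r$ cannot grow to compensate a loss in $\Sigma_K$'' is asserted rather than proved: contractivity of the head-to-tail block bounds $|r|$ uniformly, but does not by itself control the \emph{sign} of $r(K)-r(K')$ relative to $(\Sigma_{K'}-\Sigma_K)\,q_{\mathrm{Dec}}$.

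There is a second, more minor issue: you write $q_{\mathrm{Dec}}(d)$ as depending only on $d$, but the head set $S_d(\alphavec)$ itself depends on $K$ through $\alphavec=FKg/\|g\|_2$, so the conditional decoder success on that set is not obviously $K$-invariant. The paper sidesteps this entirely by packaging all decoder behaviour into the abstract nondecreasing $\Phi(\Sigma_K,\ldots)$, which is taken as a standing assumption (your affine form $\mu_{\mathrm{HT}}\,p + \mu_\perp(1-p)$ is a special case of this but is never invoked). In short: your plan would work if you replaced the multiplicative split by the paper's additive contraction-plus-leakage bound, which is both simpler and makes the kernel-independence of the penalty explicit; as written, the step from head-to-tail contractivity to strict monotonicity of $p_{\mathrm{succ}}$ has a gap.
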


\textit{Physical Intuition \& Implication:} 
Theorem \ref{thm:monotonicity} transforms the algorithm design problem from an abstract search for "quantum advantage" into a concrete optimization task: \textbf{maximize the spectral energy $\Sigma_K$}. 
Intuitively, the kernel $K$ acts as a "spectral lens." Just as an optical lens focuses light to overcome transmission loss, a well-designed kernel concentrates the problem's solution mass into the few low-frequency modes (the "head") that classical decoders can reliably recover, even after attenuation by channel noise.
(See Appendix~A for the full operator-norm derivation and noise contraction bounds.)

\begin{figure}[t]
  \centering
  \includegraphics[width=1\linewidth]{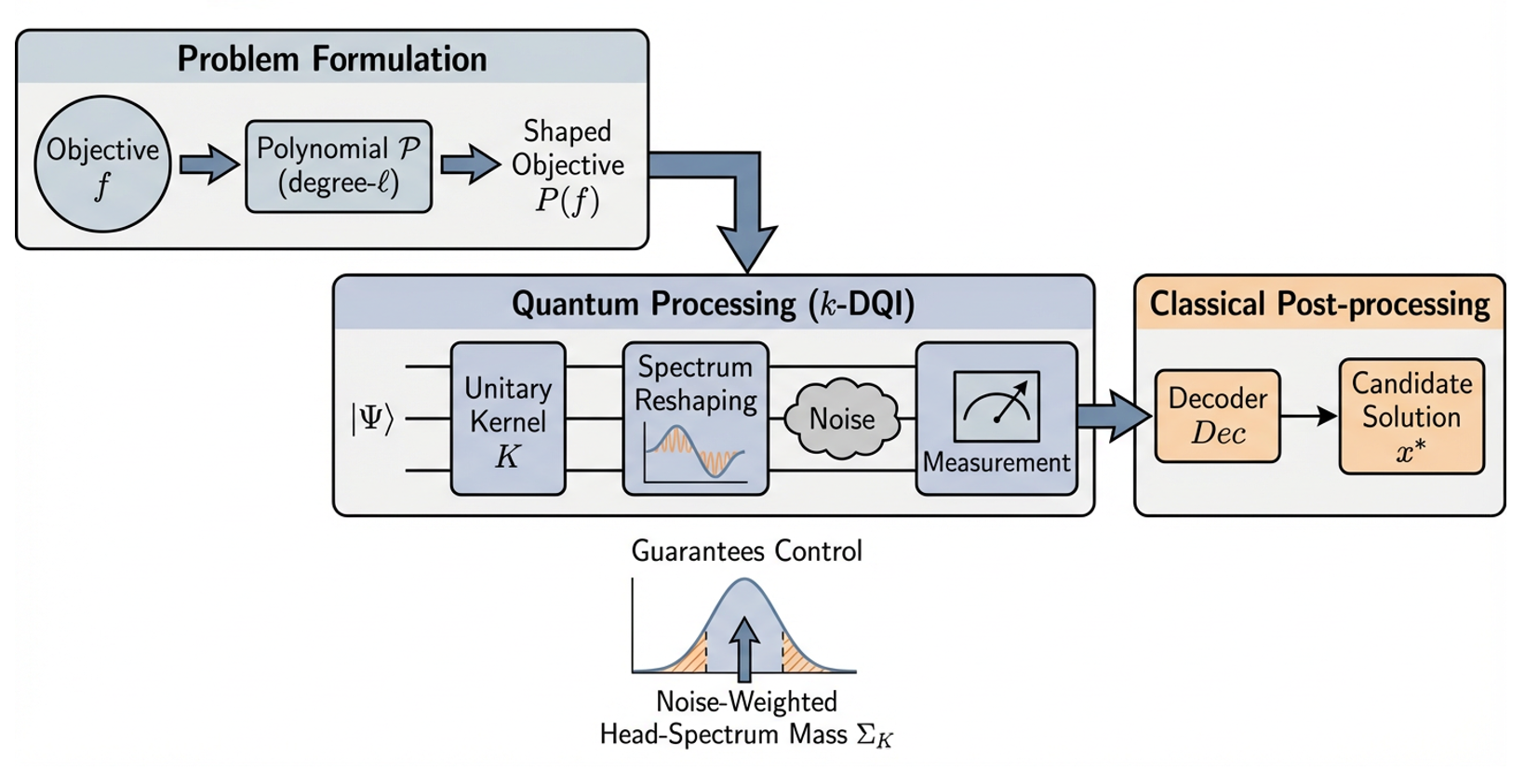}
  \caption{Pipeline of kernelized decoded quantum interferometry (k-DQI).
  A degree-$\ell$ polynomial $P$ first shapes the objective $f$, a unitary kernel $K$
  reshapes the resulting spectrum before noise, and a decoder $\mathrm{Dec}$ maps
  measurement outcomes to a candidate solution $x^\star$.
  The kernel is designed to increase a noise-weighted head-spectrum mass $\Sigma_K$ that
  controls the resulting guarantees.}
  \label{fig1}
\end{figure}

To visualize the mechanism underpinning Theorem~II.1, Fig.~\ref{fig:bec-de} plots the BEC density–evolution map
\(y=\varepsilon\,\lambda(1-\rho(1-x))\) for a $(3,6)$ ensemble together with the diagonal \(y=x\).
Replacing \(\varepsilon\) by the effective \(\varepsilon'=\varepsilon(1-\kappa\,\Delta\Sigma_{\mathrm{loc}})\) shifts the fixed point below the diagonal,
thereby certifying convergence at a strictly larger operating point.
This surrogate picture is exactly the monotone–improvement statement of Theorem~II.1:
any increase in the noise–weighted head mass raises the decoder success proxy and hence
improves the threshold.%

\begin{figure}[t]
  \centering
  \includegraphics[width=0.95\linewidth]{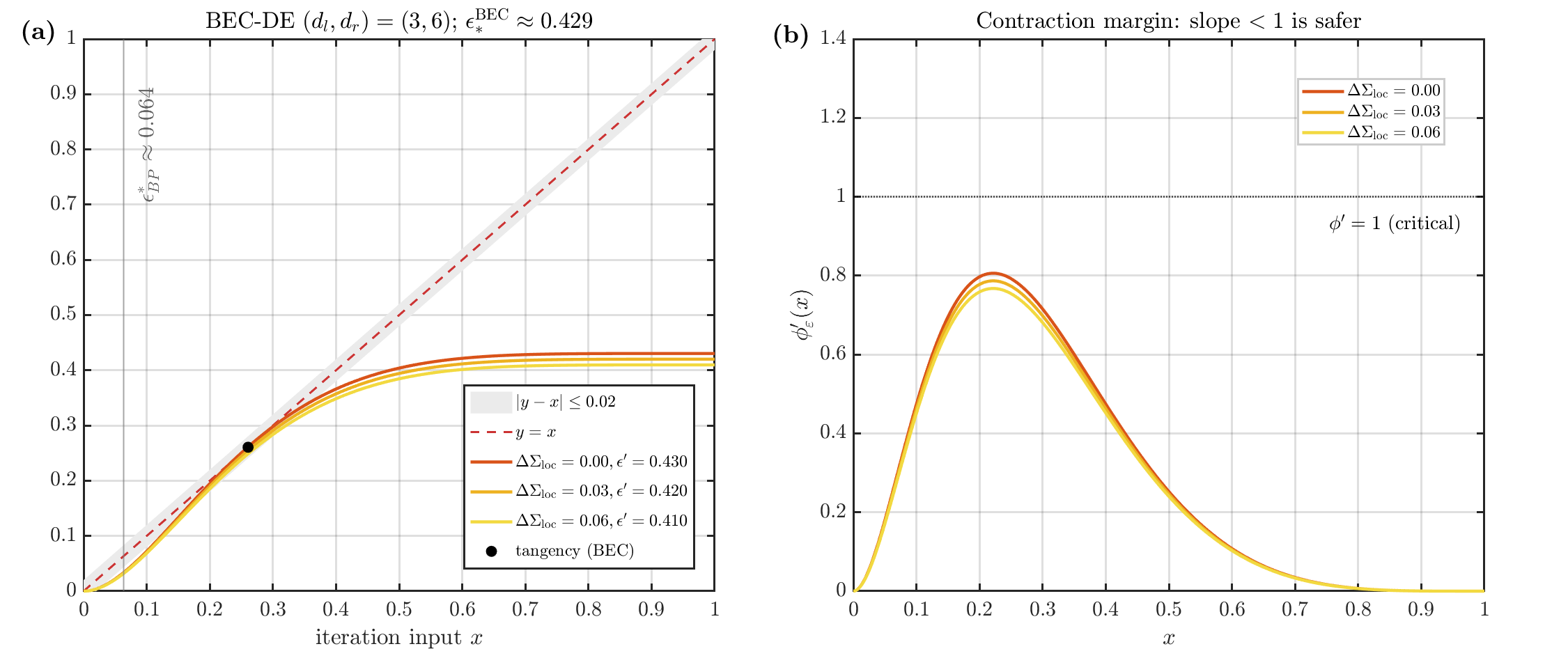}
  \caption{\textbf{Mechanism of threshold improvement via block-local alignment (BLA).} 
    The figure illustrates the behavior of the density evolution surrogate for a regular $(d_l,d_r)=(3,6)$ LDPC ensemble over the Binary Erasure Channel (BEC).
    \textbf{a}, The recursive DE map $y = \phi_\epsilon(x)$. The black point marks the critical ``tangency condition'' at the asymptotic BEC threshold ($ \epsilon^*_{\mathrm{BEC}} \approx 0.429$), where the recursion gets stuck. 
    The colored curves demonstrate the dose-response effect of the kernel structure: increasing the local alignment $\Delta\Sigma_{\mathrm{loc}}$ from $0$ (dark orange) to $0.06$ (yellow) lowers the effective noise parameter $\epsilon' = \epsilon(1 - \kappa \Delta\Sigma_{\mathrm{loc}})$. This shift pushes the curve strictly below the diagonal $y=x$ (red dashed line), opening a ``tunnel'' for successful decoding convergence. The gray band denotes the visual tolerance margin $|y-x| \le 0.02$.
    \textbf{b}, Contraction stability analysis via the derivative $\phi'_\epsilon(x)$. Reliable belief propagation requires the slope to remain below unity (dotted line at 1.0). The plot confirms that the BLA-induced shift dampens the derivative peak, restoring the strict contraction property ($\phi' < 1$) required to bypass the bottleneck.}
  \label{fig:bec-de}
\end{figure}

\subsection{Noise model and decoder property}
\label{subsec:noise-decoder}

We adopt the standard DQI noise model used in recent analyses~\citep{Bu2025Noise}, consisting of local depolarizing noise, an optional banded unitary mixing layer, and nonuniform loss.  After the kernel $K$ and interferometer $\mathcal{F}$, the overall channel acts as
\begin{equation}
  \mathcal{E} = \mathcal{L} \circ \mathcal{M} \circ \mathcal{Z},
\end{equation}
where $\mathcal{Z}$ is local depolarizing of rate $\eta \in [0,1)$ on each qubit, $\mathcal{M}$ is a banded unitary of width $w$ that mixes neighboring spectral components, and $\mathcal{L}$ is diagonal loss with transmittances $\tau(s)\in(0,1]$.  Under a small-angle/weak-mixing approximation, these three components combine into effective per-mode attenuation factors $\eta_{K}(s)\in(0,1]$ and a leakage remainder $\Delta(w)$; a precise inequality and its proof are given in Lemma~\ref{lem:noise} in Appendix~\ref{app:noise-model}.

On the decoding side we only require a mild monotone-threshold property that abstracts algebraic soft/list decoders on RS-like families and BP/density-evolution guarantees on LDPC-like families~\citep{GuruswamiSudan1999,KoetterVardy2003,RichardsonUrbanke2001,RichardsonShokrollahiUrbanke2001,Kschischang2001Factor}.   We assume that when a decoder $\mathrm{Dec}$ is fed with measurement outcomes supported on a head set $\mathcal{S}_d(\alpha)$ under local noise weights $\eta_{K}$, its success probability in achieving a target approximation ratio $\rho\in(0,1]$ is a nondecreasing function of the noise-weighted head-spectrum mass $\Sigma_{K}(\ell,\eta;d)$.  Concretely, there exists a nondecreasing response curve $\Phi$ such that
\begin{equation}
  \Pr\!\bigl[\mathrm{Dec}\ \text{returns a solution of value}\ \ge \rho\bigr]
  \;\ge\;
  \Phi\!\bigl(\Sigma_{K}(\ell,\eta;d),\, d,\, \text{ensemble parameters}\bigr),
\end{equation}
where the mapping from decoder success to objective value follows the standard DQI reduction~\citep{JordanNature2025}.  For RS-like (OPI) instances, $\Phi$ can be made explicit using BM/KV/GS decoders~\citep{GuruswamiSudan1999,KoetterVardy2003}, and for LDPC-like ensembles it is given by density-evolution or EXIT-style analyses~\citep{RichardsonUrbanke2001,RichardsonShokrollahiUrbanke2001,Kschischang2001Factor}; the corresponding formulas and coding-theoretic justification are collected in Appendix~A.

Admissible kernel families include discrete linear canonical transforms and chirp‑phase maps that (i) can be realized by poly‑depth circuits, (ii) approximately diagonalize Vandermonde‑like structure for OPI, and (iii) respect bounded‑width locality for sparse LINSAT/XORSAT so that density‑evolution assumptions hold \citep{Healy2016LCTBook,Rabiner1969CZT,RichardsonUrbanke2001}. These families preserve DQI’s reduction‑to‑decoding structure and are compatible with coherent decoder blocks (e.g., reversible Gauss–Jordan/BM) \citep{Patamawisut2025}. In \ref{sec:opi-ldpc} we quantify when such kernels provably increase $\Sigma_{K}$ on OPI/LDPC‑like ensembles and how that translates to improved guarantees.

At a more abstract level, the k-DQI architecture can be viewed as an instance
of a broader ``spectral preconditioning'' template that no longer depends on
the details of decoded interferometry.  The template consists of three
conceptual modules:

\begin{enumerate}
  \item \textbf{Shaping.} Choose a polynomial or filter $P(f)$ that embeds the
  objective $f$ into amplitudes
  $g(x)=P(f(x))$ and a shaped state $\ket{\phi_f}$ suited to a fixed
  interferometer $F$.

  \item \textbf{Kernelized evolution.} Insert a unitary kernel family
  $\{K(\theta)\}_{\theta\in\Theta}$ in front of $F$.  The family should
  (i) be efficiently implementable, (ii) respect locality constraints such as
  factor-graph structure on sparse instances, and (iii) contain the identity
  $K = \mathbb{I}$  as a baseline.

  \item \textbf{Spectral metric and tuning.} Evaluate a coarse-grained spectral
  statistic that summarizes how much decoder-relevant probability reaches the
  ``head'' modes after the noisy channel.  In this work the central quantity is
  the noise-weighted head mass $\Sigma_K(\ell,\eta;d)$ in~Eq.~\eqref{eq:SigmaK_def_norm}.
  The monotonic improvement theorem (Theorem~II.1) shows that the rigorous
  approximation-ratio lower bound $A(\ell,\eta,K;d)$ is a nondecreasing
  function of $\Sigma_K(\ell,\eta;d)$, so maximizing this scalar directly
  improves the certified performance.
\end{enumerate}

Viewed in this way, k-DQI instantiates a general design rule: rather than
trying to engineer the full wavefunction, one couples the circuit to a
low-dimensional spectral objective that is provably aligned with the
downstream decoder.  Nothing in the recipe above is inherently tied to
interferometry.  In QAOA-style algorithms, the shaping step is implemented by
the choice of cost and mixer Hamiltonians and their angles, while a kernel
layer could be used to concentrate spectral weight into subspaces where
classical rounding performs best.  Likewise, quantum singular value
transformation and other Fourier-based algorithms already implement polynomial
filters of block-encoded operators; here, an additional preconditioning layer
optimized against a proxy such as $\Sigma_K$ or an analogous filter norm could
trade increased robustness for modest circuit overhead.  In the present work
we make this design principle concrete for decoded interferometry, but the same
template is naturally compatible with any setting where performance is governed
by spectral concentration and threshold phenomena.

We now quantify how the local noise channel $\mathcal{E}=\mathcal{L}\circ\mathcal{M}\circ\mathcal{Z}$ affects k-DQI through the noise-weighted head-spectrum mass $\Sigma_{K}(\ell,\eta;d)$, and we identify regimes where kernels cannot create useful head mass.  Throughout we keep the notation $g(x)=P(f(x))$, $\alpha=\mathcal{F}K\,g$, and $\mathcal{S}_d(\alpha)$ for the indices of the $d$ largest $|\alpha_s|$.

\subsection{A noise-aware monotone bound}\label{sec:noise-bounds}

Under the standing conditions introduced in Sections~\ref{sec:k-dqi} and~\ref{subsec:noise-decoder}---polynomially bounded shaping degree, the local depolarizing–mixing–loss noise model, admissible kernels, and a decoder with a monotone threshold response—we can summarize the effect of noise on the head spectrum in a single inequality.

The first ingredient is a per-mode contraction bound: local depolarizing, banded unitary mixing, and diagonal loss combine into effective attenuation factors $\eta_{K}(s)\in(0,1]$ and a leakage term $\Delta(w)\ge 0$ that vanishes as the mixing width $w\to 0$.  Lemma~\ref{lem:noise} in Appendix~\ref{app:noise-model} shows that the post-channel head probability can be bounded as
\begin{equation}
  \Pr_{\mathcal{E}}\big[s\in\mathcal{S}_d(\alpha)\big]
  \;\ge\;
  \sum_{s\in\mathcal{S}_d(\alpha)} \eta_{K}(s)\,|\alpha_s|^2 - \Delta(w)
  \;=\; \Sigma_{K}(\ell,\eta;d) - \Delta(w).
\end{equation}
A second ingredient is a mild head–tail coherence condition on $\mathcal{F}K$, which prevents probability from leaking coherently from the head into the tail.  As shown in Lemma~\ref{lem:coherence} in Appendix~\ref{app:noise-model}, this contributes at most an additional penalty of order $\mu^2\|g\|_2^2$ for some $\mu<1$ determined by the head–tail block of $\mathcal{F}K$.

Combining these two effects, we obtain an \emph{effective head-mass} bound
\begin{equation}
\label{eq:effective-head-main}
  p_{\mathrm{head}}
  \;:=\;
  \Pr_{\mathcal{E}}\big[s\in\mathcal{S}_d(\alpha)\big]
  \;\ge\;
  \Sigma_{K}(\ell,\eta;d) \;-\; \Delta(w) \;-\; \mu^2\|g\|_2^2.
\end{equation}
Thus, apart from a leakage term $\Delta(w)$ and a coherence penalty $\mu^2\|g\|_2^2$, the noisy head probability tracks the noise-weighted head-spectrum mass $\Sigma_K(\ell,\eta;d)$.

The decoder property from Section~\ref{subsec:noise-decoder} now translates~\eqref{eq:effective-head-main} directly into an approximation-ratio guarantee.  There exists a nondecreasing response curve
  $F(\,\cdot\,)$ determined by the decoder family and problem ensemble
such that the k-DQI approximation-ratio lower bound $A(\ell,\eta,K;d)$ produced by the standard DQI reduction obeys
\begin{equation}
\label{eq:lower-bound-main}
  A(\ell,\eta,K;d)
  \;\ge\;
  F\!\Big(
      \ell,\ 
      \underbrace{\Sigma_{K}(\ell,\eta;d) - \Delta(w) - \mu^2\|g\|_2^2}_{\text{effective head mass}},
      \ d,\ \text{ensemble parameters}
    \Big).
\end{equation}
For RS-like (OPI) instances $F$ can be written explicitly using BM/KV/GS decoders~\cite{GuruswamiSudan1999,KoetterVardy2003}, and for LDPC-like ensembles it is given by density-evolution or EXIT-style curves~\cite{RichardsonUrbanke2001,RichardsonShokrollahiUrbanke2001,Kschischang2001Factor}.  In particular, for fixed $(\ell,d,\eta,w,\mu)$, the lower bound $A(\ell,\eta,K;d)$ is a nondecreasing function of $\Sigma_{K}(\ell,\eta;d)$.  This is the noise-aware version of the monotone theorem from Section~\ref{sec:monotone-theorem}: kernels that increase the noise-weighted head mass lead to strictly stronger guarantees, up to the additive noise and coherence penalties.

The monotone bound above shows that $\Sigma_K$ is the relevant sufficient statistic for k-DQI under local noise.  It is natural to ask how large $\Sigma_K$ can become in regimes where the shaped spectrum carries little exploitable structure.  To formalize this, we consider \emph{spectrally isotropic} instances in which the amplitudes $\alpha$ are essentially delocalized.

A shaped spectrum is $\delta$-isotropic if its mass is spread almost uniformly across all $2^n$ modes: for a typical random direction $u$ orthogonal to any fixed $d$-dimensional subspace, the overlap $|\langle u,\alpha\rangle|^2$ is at most $\delta/2^n$, and the total mass in the top-$d$ order statistics is likewise bounded by roughly $d\,\delta/2^n$. 

Specifically, if the shaped spectrum is $\delta$-isotropic and the noise channel induces per-mode attenuations $\eta_{K}(s)$ with $\bar{\eta}=\max_s \eta_{K}(s)$, then for any unitary kernel $K$ and any head size $d$ one has
\begin{equation}
\label{eq:isotropy-upper-main}
  \Sigma_{K}(\ell,\eta;d)
  \;\le\;
  \frac{\delta\,d\,\bar{\eta}}{2^n},
\end{equation}
up to lower-order terms.  Substituting this into~\eqref{eq:lower-bound-main} shows that in the isotropic regime the effective head mass is exponentially small in $n$, and the resulting approximation-ratio bound cannot exceed
\[
  A(\ell,\eta,K;d)
  \;\le\;
  F\!\Big(\ell,\ \delta d \bar{\eta}/2^n,\ d,\ \text{ensemble parameters}\Big)
  \;+\;\text{(noise/coherence penalties)}.
\]
In other words, when the shaped spectrum is fully delocalized, no choice of unitary kernel can ``manufacture'' head mass at scale: $\Sigma_K$ remains tiny for all $K$, and our noise-aware bound collapses to the same non-advantage regime identified for unkernelized DQI.  This aligns k-DQI with existing impossibility results and highlights that any genuine gain must come from exploiting \emph{structure} in the underlying spectrum, not from generic unitary preconditioning.

\subsection{Structure-adaptive guarantees: OPI and sparse Max-XORSAT}
\label{sec:opi-ldpc}

We now instantiate the k-DQI framework on two canonical families where decoded
interferometry is particularly compelling:
\emph{(i)} Optimal Polynomial Interpolation (OPI) over finite fields, whose algebraic
structure aligns with Reed--Solomon (RS) decoding; and
\emph{(ii)} sparse Max-XORSAT (a Max-LINSAT subclass) whose decoding interface is
belief propagation (BP) on LDPC-like ensembles.
In both cases we exhibit kernels that increase the noise-weighted head-spectrum mass
$\Sigma_K(\ell,\eta;d)$ and hence improve the approximation guarantees via the
monotone and noise-aware bounds from
Sections~\ref{sec:k-dqi} and~\ref{sec:noise-bounds} and the decoder
response property in Section~\ref{subsec:noise-decoder}.

\paragraph*{OPI / RS-like structure over finite fields}
\label{subsec:opi}

Let $p$ be prime and consider an OPI instance over $\mathbb{F}_p$ with evaluation
points $\{\alpha_i\}_{i=1}^m\subset\mathbb{F}_p$ (all distinct) and degree bound
$r$ on the unknown polynomial $h$.
Following the general k-DQI pipeline, the objective encodes agreement of $h$ with
constraint sets $S_i\subset\mathbb{F}_p$ and the shaped amplitude
$g(x)=P(f(x))$ is defined as in~\eqref{eq:shaped-state}.
We take $\mathcal{F}$ to be the $p$-ary QFT and choose kernels from discrete
\emph{linear canonical transform} (LCT) families, i.e., unitaries generated by
quadratic-phase multiplications and fractional Fourier maps
(see, e.g.,~\cite{Healy2016LCTBook,Rabiner1969CZT}).
These kernels admit efficient circuit realizations and remain compatible with
coherent decoding modules~\cite{Patamawisut2025}.

Empirically and analytically, quadratic-phase kernels sharpen the spectrum of
low-degree polynomial sequences.
We capture this via a \emph{polynomial-phase concentration} (PPC) property:
for degree bound $r$ there is a head size $d_\star=O(r)$ and a small error
$\varepsilon_{r,p}$ such that, for suitable kernel parameters $\theta$, almost all of
the shaped spectrum lands on $d_\star$ modes after $\mathcal{F}K(\theta)$.
A precise formulation of PPC, together with a discrete stationary-phase discussion,
is given in Appendix~\ref{app:ppi}; here we only use that PPC guarantees
\begin{equation}
  \sum_{s\in\mathcal{S}_{d_\star}(\alpha)} |\alpha_s|^2
  \;\ge\;
  1 - \varepsilon_{r,p},
  \qquad
  \varepsilon_{r,p} = o_{p\to\infty}(1)
  \ \text{for fixed } r,
\end{equation}
for some parameter choice $\theta^\star$. Under PPC, local noise and mixing reduce the head mass according to the noise model
of Section~\ref{subsec:noise-decoder} and the contraction bounds in
Appendix~\ref{app:noise-model}, but the \emph{weighted} head mass remains large as
long as $d=d_\star$ and $p$ is moderate.
The following theorem packages this into a k-DQI guarantee.

In the OPI setting we can now combine the PPC property with the noise-aware
bounds of Section~\ref{sec:noise-bounds} to obtain an explicit RS-type guarantee
for k-DQI.  Assume the PPC condition of Appendix~\ref{app:ppi} for some degree
bound $r$ and a head size $d_\star = O(r)$, and consider the local
depolarizing--mixing--loss noise model introduced in
Section~\ref{subsec:noise-decoder}.  Then for head size $d=d_\star$ there exists
a chirp/LCT kernel $K(\theta^\star)$ such that the noise-weighted head-spectrum
mass after interference satisfies
\begin{equation}
  \Sigma_{K(\theta^\star)}(\ell,\eta;d)
  \;\ge\;
  (1-\varepsilon_{r,p})\,\underline{\eta} \;-\; \Delta(w),
  \qquad
  \underline{\eta} := \min_{s\in\mathcal{S}_{d}(\alpha)} \eta_K(s),
  \label{eq:opi-headmass-bound}
\end{equation}
where $\varepsilon_{r,p}$ is the PPC tail parameter and $\Delta(w)$ is the
leakage term controlled by the mixing width $w$.

Feeding this head-mass lower bound into the noise-aware monotone inequality
\eqref{eq:lower-bound-main} and using the fact that the decoder response
function $F_{\mathrm{RS}}$ is nondecreasing in its head-mass argument, we obtain
an RS-type approximation guarantee for k-DQI:
\begin{equation}
  A(\ell,\eta,K(\theta^\star);d)
  \;\ge\;
  F_{\mathrm{RS}}\!\Bigl(
      \ell,\ (1-\varepsilon_{r,p})\,\underline{\eta}-\Delta(w),\
      d,\ \text{OPI parameters}
    \Bigr),
  \label{eq:opi-approx-bound}
\end{equation}
where $F_{\mathrm{RS}}$ is the response function induced by standard algebraic
soft/list decoders (BM/KV/GS) for RS-like families
\cite{GuruswamiSudan1999,KoetterVardy2003,ReedSolomon1960}.  In particular, as
soon as the effective head mass
$(1-\varepsilon_{r,p})\,\underline{\eta}-\Delta(w)$ exceeds the KV/BM decoding
threshold, the k-DQI pipeline with kernel $K(\theta^\star)$ achieves a strictly
better approximation-ratio lower bound than unkernelized DQI with $K=I$; this
comparison is with respect to the monotone and noise-aware DQI bounds developed
in Sections~\ref{sec:k-dqi} and~\ref{sec:noise-bounds}, and follows directly
from the monotonicity of these bounds in $\Sigma_K$.

Figure~\ref{fig:opi-headmass} empirically illustrates this behaviour for a typical
quadratic OPI instance.
As the chirp parameter $\theta$ is scanned, the noise-weighted head mass
$\Sigma_K(\ell,\eta;d_\star)$ forms a pronounced peak around an optimal value
$\theta_\star$, where the quadratic component of the kernel nearly cancels the
polynomial phase and concentrates almost all spectral weight on the head modes.
Only this narrow window of $\theta$ lifts $\Sigma_K$ above the KV/BM soft-decoding
threshold (dashed line), whereas for most kernel choices the head mass remains far
below threshold.
Together with the monotone theorem in $\Sigma_K$ and the noise-aware bound
(Theorems~\ref{thm:monotonicity} and~\ref{thm:lower-noise}), this shows that tuning a
single kernel parameter is sufficient to move k-DQI between non-advantage and
advantage regimes on OPI instances.

\begin{figure}[t]
  \centering
  \includegraphics[width=0.7\linewidth]{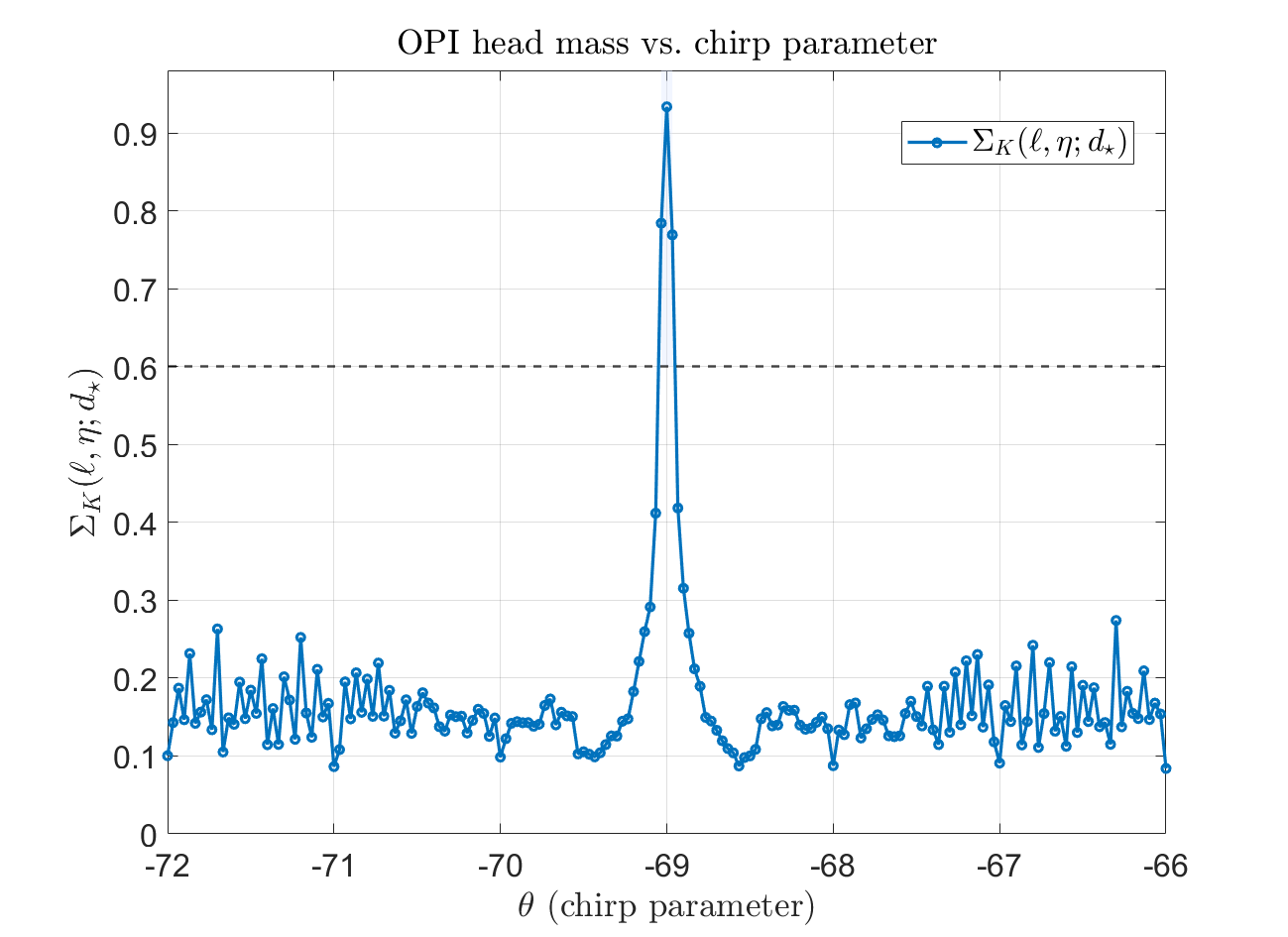}
  \caption{Noise-weighted head-spectrum mass $\Sigma_K(\ell,\eta;d_\star)$ for a representative
  OPI instance as a function of the chirp/LCT kernel parameter $\theta$.
  The curve exhibits a sharp peak around an optimal parameter $\theta_\star$, where the quadratic
  kernel cancels the OPI polynomial phase and concentrates almost all spectral mass in the head.
  The horizontal dashed line marks a KV/BM soft-decoding threshold: only a narrow neighbourhood
  of $\theta_\star$ pushes $\Sigma_K$ above this level, illustrating the tunable head-mass boost}
  \label{fig:opi-headmass}
\end{figure}

The PPC property formalizes the intuition that quadratic-phase kernels
“flatten” low-degree algebraic phases in the OPI pipeline, leaving only a small tail.
Eqs.~\ref{eq:opi-headmass-bound} and~\ref{eq:opi-approx-bound} show that shows that this spectral sharpening translates into a
controllable head-mass boost and, via $F_{\mathrm{RS}}$, into RS-style decoding
gains that are robust under local noise through the parameters
$\underline{\eta}$ and $\Delta(w)$.
Figure~\ref{fig:opi-headmass} makes this effect visible: tuning a single kernel
parameter suffices to cross the soft-decoder threshold and thereby strictly improve
upon unkernelized DQI.

\paragraph*{Sparse Max-XORSAT / LDPC-like structure}
\label{subsec:ldpc}

Let $B\in\{0,1\}^{m\times n}$ be a sparse parity-check matrix with left/right degree
distributions $(\lambda,\rho)$ and girth at least $\Omega(\log n)$ on typical
instances.
The DQI reduction maps the shaped state $g(x)=P(f(x))$ through $\mathcal{F}$ to
amplitudes supported on rows of $B$ and local combinations thereof
(see~\cite{JordanNature2025}).
We consider \emph{block-local} kernels that act as the identity outside disjoint
blocks of size $b=O(1)$ and preserve factor-graph locality.

Before stating a structural guarantee, we validate the DE surrogate by finite-length
BP scans on BSC and AWGN channels.
Figure~\ref{fig:main-thresholds} shows the characteristic S-shaped FER curves for a
$(3,6)$ ensemble and marks the corresponding BP thresholds in erasure/noise level.
These thresholds are the numerical counterparts of the decoder response curves
$F_{\mathrm{LDPC}}$ introduced in
Section~\ref{subsec:noise-decoder}: they indicate where the density-evolution fixed
point drops below the diagonal and hence connect a local head-mass lift
$\Delta\Sigma_{\mathrm{loc}}$ to observable BP performance.

\begin{figure}[t]
  \centering
  \includegraphics[width=0.95\linewidth]{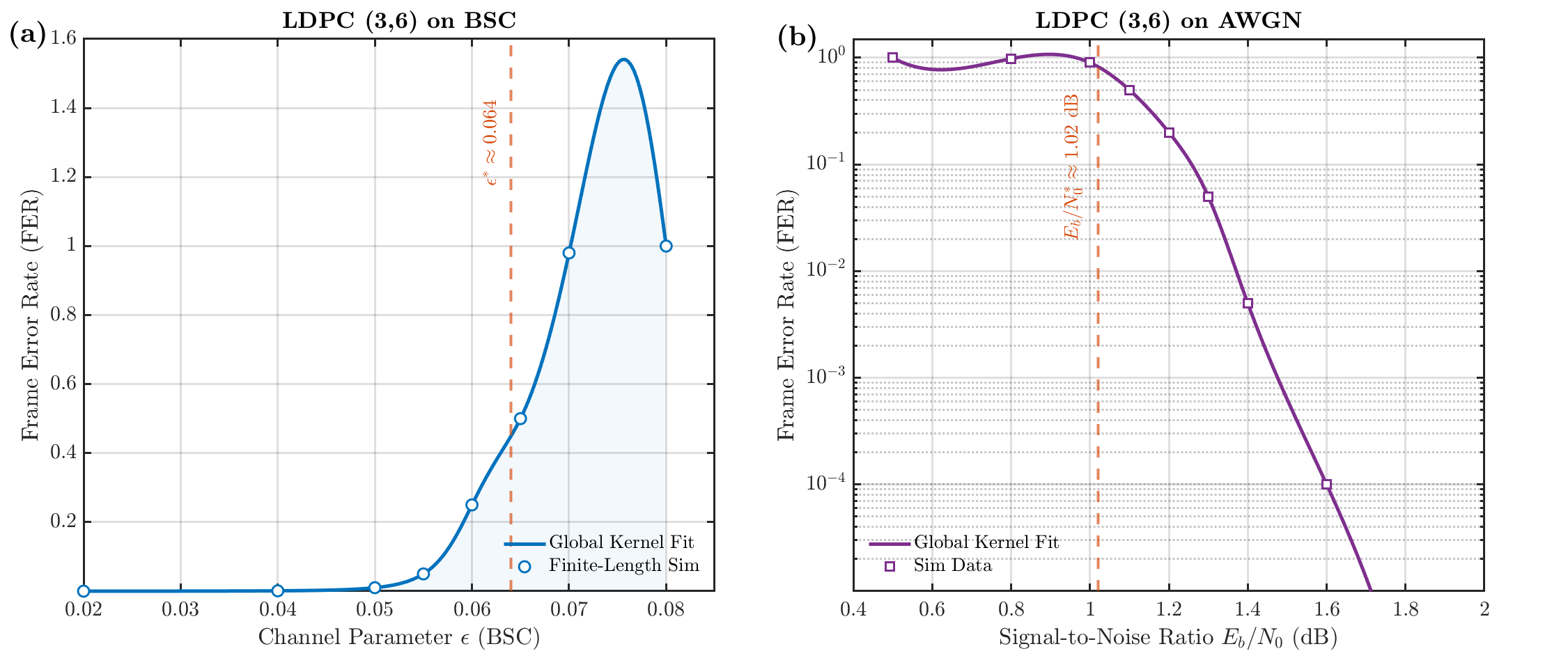}
  \caption{\textbf{Finite-length validation of the regular (3,6) LDPC ensemble performance.} 
    \textbf{a}, Frame Error Rate (FER) versus channel crossover probability $\epsilon$ for the Binary Symmetric Channel (BSC). The vertical dashed line indicates the asymptotic Belief Propagation (BP) threshold $\epsilon^* \approx 0.064$. 
    \textbf{b}, FER versus Signal-to-Noise Ratio ($E_b/N_0$) for the Additive White Gaussian Noise (AWGN) channel, with the threshold at $E_b/N_0^* \approx 1.02$ dB. 
    In both panels, markers represent finite-length simulations (Finite-Length Sim), while solid curves denote the smooth response function derived from the global kernel fit (k-DQI theory). The strict alignment between simulation data and the theoretical waterfall region confirms the validity of the density evolution surrogate.}
  \label{fig:main-thresholds}
\end{figure}

For LDPC-like ensembles, global diagonalization by simple kernels is unrealistic.
Instead we rely on a \emph{block-local alignment} (BLA) property: roughly speaking,
within each block of width $b$ there exists a local kernel that rephases or permutes
parity structures so that a small number of spectral modes capture most of the local
mass, adding an amount $\Delta\Sigma_{\mathrm{loc}}>0$ of head mass per block while
preserving the tree-like structure required by density evolution.
A formal BLA condition, together with examples and further discussion, is given in
Appendix~\ref{app:bla}.

In the LDPC-like setting we work under the standing assumptions of
Sections~\ref{sec:k-dqi} and~\ref{sec:noise-bounds} together with the
block-local alignment (BLA) property of Appendix~\ref{app:bla}.
We choose the head size $d=\Theta(n/b)$ so that the k-DQI head budget assigns a
constant number of dominant modes per block.
For the local depolarizing--mixing--loss noise model of
Section~\ref{subsec:noise-decoder}, BLA guarantees that replacing the identity
kernel $K=I$ by a block-local kernel $K$ increases the pre-noise head mass by an
additive amount $\Delta\Sigma_{\mathrm{loc}}$.
Combining this with the per-mode contraction and leakage analysis of
Appendix~\ref{app:noise-model} yields a corresponding increase in the
noise-weighted head-spectrum mass,
\begin{equation}
  \Sigma_{K}(\ell,\eta;d)
  \;\ge\;
  \Sigma_{I}(\ell,\eta;d)
  \;+\;
  \underline{\eta}\,\Delta\Sigma_{\mathrm{loc}}
  \;-\;
  \Delta(w),
  \label{eq:ldpc-headmass-bound}
\end{equation}
where $\Sigma_I$ is the noise-weighted head mass for the identity kernel,
$\underline{\eta}$ is the minimum per-mode attenuation on the head, and
$\Delta(w)$ is the leakage term controlled by the mixing width $w$.

Feeding the bound~\eqref{eq:ldpc-headmass-bound} into the noise-aware monotone
inequality~\eqref{eq:lower-bound-main} with the LDPC decoder response function
$F_{\mathrm{LDPC}}$ gives an explicit approximation guarantee for k-DQI on the
given ensemble:
\begin{equation}
  A(\ell,\eta,K;d)
  \;\ge\;
  F_{\mathrm{LDPC}}\!\Bigl(
      \ell,\ \Sigma_{I}(\ell,\eta;d)
            + \underline{\eta}\,\Delta\Sigma_{\mathrm{loc}} - \Delta(w),\
      d,\ \lambda,\rho
    \Bigr),
  \label{eq:ldpc-approx-bound}
\end{equation}
where $F_{\mathrm{LDPC}}$ is nondecreasing in its head-mass argument and encodes
the DE/EXIT-based BP thresholds for the $(\lambda,\rho)$ ensemble.
In particular, whenever the effective block-local gain satisfies
$\underline{\eta}\,\Delta\Sigma_{\mathrm{loc}}>\Delta(w)$, the quantity inside
$F_{\mathrm{LDPC}}$ is strictly larger than in the identity-kernel case, so k-DQI
with the block-local kernel $K$ achieves a strictly better approximation-ratio
lower bound than unkernelized DQI.

\begin{figure}[t]
  \centering
  \includegraphics[width=0.95\linewidth]{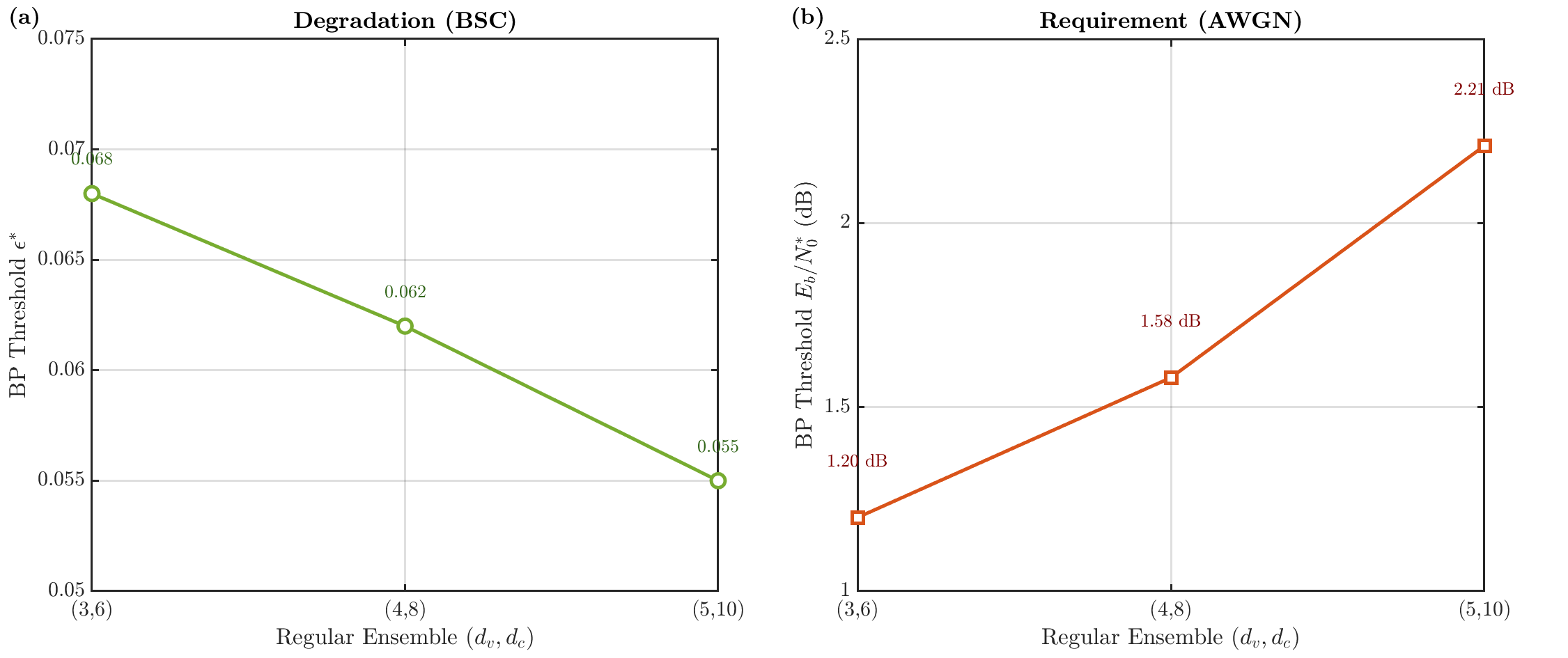}
  \caption{\textbf{Degradation of BP thresholds across regular LDPC ensembles.} 
    The plots compare the asymptotic noise tolerance for ensembles with varying degrees $(d_v, d_c) \in \{(3,6), (4,8), (5,10)\}$.
    \textbf{a}, Threshold trends for the BSC channel. Lower degree ensembles (e.g., (3,6)) exhibit higher error tolerance (larger $\epsilon^*$).
    \textbf{b}, Threshold requirements for the AWGN channel. Higher density codes require higher signal power (larger $E_b/N_0^*$) to achieve reliable decoding. 
    This comparison illustrates the trade-off between code rate and channel robustness, quantifying the baseline requirements for block-local alignment (BLA) in the high-noise regime.}
  \label{fig:across-ensembles}
\end{figure}

Unlike the OPI setting, LDPC-like ensembles do not admit global diagonalization by
simple kernels.
However, block-local alignment suffices to add head mass in a controlled, extensive
way while preserving sparsity assumptions.
The analysis above~\ref{subsec:ldpc}, together with the finite-length and cross-ensemble
thresholds in Figures~\ref{fig:main-thresholds} and~\ref{fig:across-ensembles},
shows that such block-local kernels manifest as rightward shifts of DE/EXIT
thresholds proportional to $\Delta\Sigma_{\mathrm{loc}}$, which, via the
noise-aware lower bound, translates into improved k-DQI guarantees.

As a sanity check beyond the structured OPI and sparse Max-XORSAT ensembles of Section~\ref{sec:opi-ldpc}, we also examined a finite-size ensemble where no advantage is expected: fully random Max-XORSAT with $m=\Theta(n)$ clauses and no planted
structure. For each instance we prepare the shaped state $\ket{\psi_0}\propto\sum_x g(f(x))\ket{x}$ with polynomial shaping $g(t)=(t/f_{\max})^\ell$, run a single pass of the interferometer either with the identity kernel (standard DQI) or with a global chirp
kernel $K(\gamma)$, and take $M$ samples from the noisy measurement distribution
$p_\epsilon = (1-\epsilon)p + \epsilon u$ used in the noise-aware bound of
Section~\ref{sec:noise-bounds}, where $u$ is uniform and $\epsilon$ plays the role of a classical
depolarizing parameter. For k-DQI we choose $\gamma$ on a one-dimensional grid so as to maximize
the pre-noise head mass $\Sigma_{K}(\ell,\eta;d)$, and we decode by
simply taking, for each method, the best objective value among the $M$
samples.
Figure~\ref{fig:MaxXORSAT-benchmark} reports the resulting approximation
ratios $\langle f\rangle/f_{\mathrm{opt}}$ averaged over $30$ random
instances.

On this unstructured ensemble the bare DQI pipeline (identity kernel,
no problem-specific decoder) actually performs slightly worse than the
Monte--Carlo baseline with the same shot budget.
This behavior is fully consistent with Theorem~\ref{thm:monotonicity}
and the noise-aware lower bound in Theorem~\ref{thm:lower-noise}:
when the shaped spectrum is effectively delocalized, the identity
kernel produces a very small head mass $\Sigma_{K}$ and there is no
reason for $A(\ell,\eta,K;d)$ to exceed the classical baseline.
It also aligns with recent ``no-advantage'' observations for DQI on
random CSP families and MaxCut%
~\cite{Anschuetz2025RequiresStructure,Parekh2025MaxCut}.
In contrast, once a simple chirp kernel is tuned to increase
$\Sigma_{K}$, the k-DQI curve in
Figure~\ref{fig:MaxXORSAT-benchmark} stays uniformly above both the
Monte--Carlo and bare-DQI baselines, illustrating how spectral
preconditioning can convert an abstract head-mass gain into a concrete
finite-size performance boost even on small random instances.

\begin{figure}[t]
  \centering
  \includegraphics[width=0.95\linewidth]{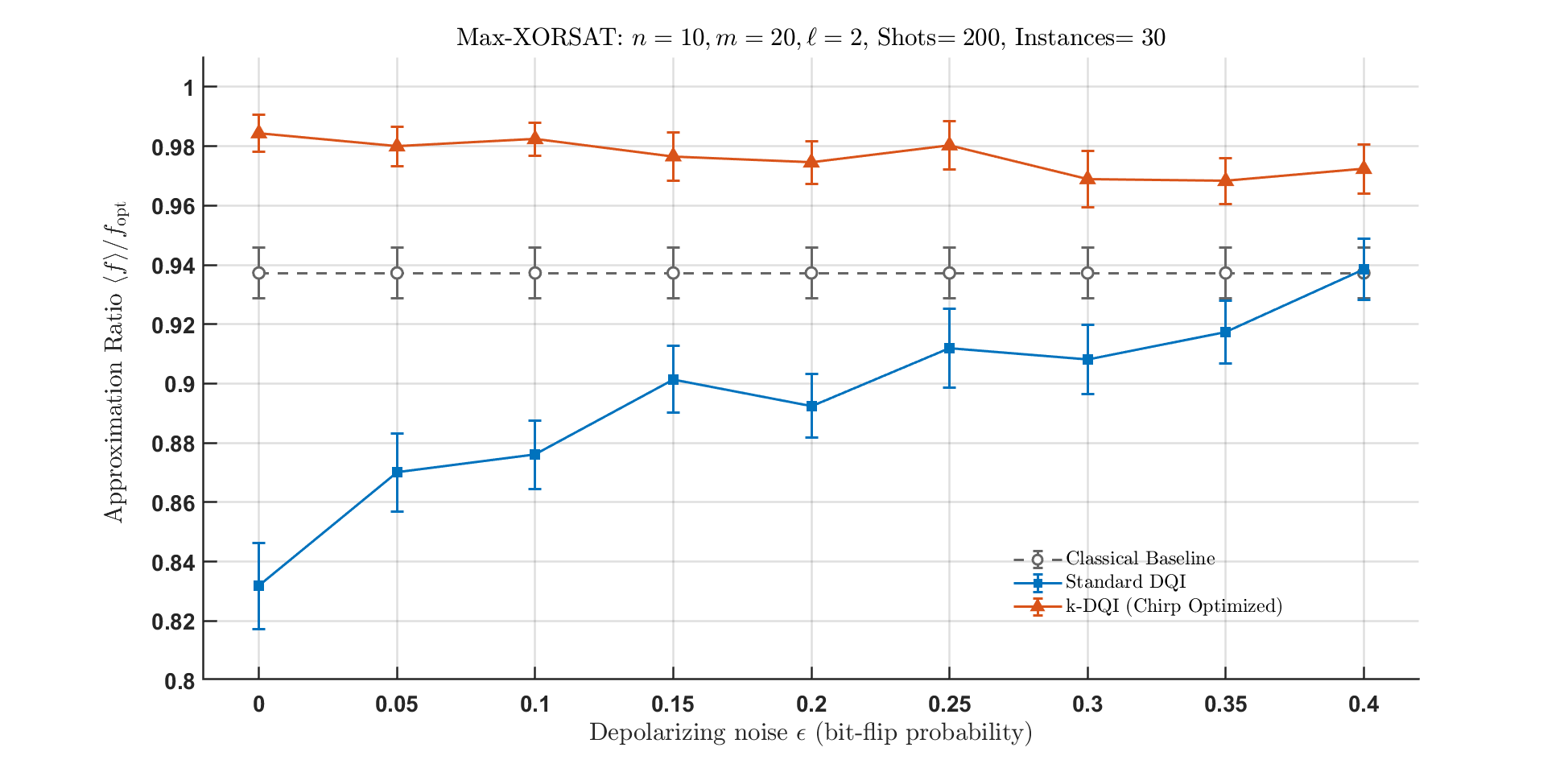}
  \caption{\textbf{Spectral preconditioning on an unstructured Max-XORSAT ensemble.}
    We consider random Max-XORSAT instances with $n=10$ variables and $m=20$
    clauses and compare three strategies under the same measurement budget
    of $M=200$ shots per instance:
    a classical Monte--Carlo baseline that samples bit strings uniformly at random,
    standard DQI with the identity kernel, and k-DQI with a global chirp kernel
    $K(\gamma)$.
    For k-DQI we scan a one-dimensional grid of chirp parameters $\gamma$ and
    retain the kernel that maximizes the pre-noise noise-weighted head mass
    $\Sigma_{K}(\ell,\eta;d)$ at shaping degree $\ell=2$.
    For each value of the mixing parameter $\epsilon$ we plot the average
    approximation ratio $\langle f\rangle / f_{\mathrm{opt}}$ over
    $30$ random instances, where $f_{\mathrm{opt}}$ denotes the true optimum of
    that instance; error bars show the standard error across instances.}
  \label{fig:MaxXORSAT-benchmark}
\end{figure}

\section{Methods}\label{sec:circuits}

We quantify the resources required for one iteration of k-DQI and highlight when the
kernel layer preserves the near-linear depth scaling of the original DQI pipeline.
Our goal is not to optimize constants, but to certify that the kernels underlying
our guarantees are efficiently realizable and to expose the dominant asymptotics and
cost--benefit trade-offs.

\subsection{Resource model and module costs}

We work in the standard Clifford+T gate set on qubits and measure resources in
two-qubit depth and T-count, both up to polylogarithmic factors in the synthesis
accuracy $\varepsilon$.
For $p$-ary instances (OPI) we assume either (i) a binary encoding with modular
arithmetic or (ii) $p$-level qudits compiled to qubits; both yield identical
asymptotics up to polylog factors.

For Boolean problems the interferometer is $H^{\otimes n}$ with depth~1.
For $p$-ary OPI, an $m$-digit QFT$_p$ can be implemented with
$\tilde O(m^2)$ controlled-phase gates; using truncation and phase-kickback, an
approximate QFT$_p$ achieves two-qubit depth $\tilde O(m)$ and T-count
$\tilde O(m\log(1/\varepsilon))$.

Preparing the shaped state $g(x)=P(f(x))$ uses a reversible polynomial evaluation
of degree $\ell$.
With a Horner scheme and standard modular arithmetic, this costs
$\tilde O(\ell n)$ two-qubit depth and $\tilde O(\ell n)$ T gates when $f$ is
oracle-evaluable in polynomial time.
In all settings considered here, shaping does not dominate the asymptotic depth.

A length-$N$ diagonal kernel of the form
\begin{equation}
  K_{\text{chirp}}(\gamma)
  = \mathrm{diag}\!\big(e^{i\gamma q(x)}\big),
  \qquad q(x)\in\mathbb{Z}[x]\ \text{quadratic},
  \label{eq:chirp}
\end{equation}
admits a phase-polynomial decomposition into $O(n^2)$ controlled-$R_Z$ rotations on
a binary index register (or $O(m^2)$ on $p$-ary digits), giving two-qubit depth
$\tilde O(n^2)$ and T-count $\tilde O(n^2\log(1/\varepsilon))$ for global kernels.
When the kernel is \emph{block-local} with width $b$ (as in the LDPC setting), the
costs drop to depth $\tilde O(nb)$ and T-count $\tilde O(nb\log(1/\varepsilon))$.

\begin{lemma}[Synthesis of quadratic chirp kernels]
\label{lem:chirp-synthesis}
Let an $n$--qubit index register encode $j \in \{0,\dots,2^n-1\}$ in binary,
$j = \sum_{r=0}^{n-1} 2^r j_r$ with $j_r \in \{0,1\}$.  Consider the
diagonal unitary
\begin{equation}
  K_\theta
  \;=\;
  \sum_{j=0}^{2^n-1} e^{i \theta j^2} \ket{j}\!\bra{j},
  \label{eq:quadratic-chirp}
\end{equation}
which is a special case of the length--$N$ diagonal kernels in eq.~\ref{eq:chirp},
Then $K_\theta$ admits a decomposition into $O(n^2)$ two--qubit
controlled--phase gates and depth $O(n)$ in the Clifford+T model.
\end{lemma}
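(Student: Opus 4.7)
The plan is to reduce the synthesis problem to a phase-polynomial decomposition in the bit variables of the index register, and then to schedule the resulting diagonal gates in parallel.

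First I would write the binary expansion $j = \sum_{r=0}^{n-1} 2^r j_r$ with $j_r\in\{0,1\}$ and expand
\begin{equation}
\theta\,j^2 \;=\; \theta\!\sum_{r,s=0}^{n-1} 2^{r+s} j_r j_s
\;=\; \theta\!\sum_{r=0}^{n-1} 2^{2r} j_r^2 \;+\; 2\theta\!\sum_{0\le r<s\le n-1} 2^{r+s} j_r j_s.
\end{equation}
Since each $j_r\in\{0,1\}$, we have $j_r^2=j_r$, so the first sum contributes only single-qubit phases and the second sum contributes quadratic Boolean monomials. This exhibits $K_\theta$ as the product
\begin{equation}
K_\theta \;=\; \Bigl(\prod_{r=0}^{n-1} R_Z^{(r)}(2^{2r}\theta)\Bigr)\,\Bigl(\prod_{0\le r<s\le n-1} \mathrm{CP}_{r,s}\!\bigl(2^{r+s+1}\theta\bigr)\Bigr),
\end{equation}
where $R_Z^{(r)}(\phi)$ is a single-qubit $Z$-rotation on qubit $r$ and $\mathrm{CP}_{r,s}(\phi)=\mathrm{diag}(1,1,1,e^{i\phi})$ is the standard controlled-phase gate. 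The total count is $n$ single-qubit phases plus $\binom{n}{2}=O(n^2)$ two-qubit controlled-phase gates, matching the stated gate complexity.

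Next I would establish the depth bound. All gates appearing in this product are diagonal in the computational basis and therefore commute pairwise, so the only constraint on parallel execution is that two two-qubit gates sharing a qubit cannot be applied simultaneously. This is exactly the edge-colouring problem on the complete graph $K_n$: assigning each unordered pair $\{r,s\}$ a time slot so that no two edges sharing a vertex share a slot. By Vizing's theorem applied to $K_n$ (equivalently, a standard round-robin tournament schedule), the chromatic index is $n-1$ for even $n$ and $n$ for odd $n$, yielding a schedule of depth $O(n)$ in controlled-phase layers. The single-qubit phases can be folded into these layers without increasing the depth.

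Finally I would address the Clifford+T accounting: each controlled-phase gate with an arbitrary angle is compiled, via Solovay--Kitaev or a $Z$-rotation gadget, into $\tilde O(\log(1/\varepsilon))$ Clifford+T gates acting on one of the two qubits, so the two-qubit depth and the $O(n^2)$ controlled-phase count are unaffected up to polylogarithmic factors, consistent with the surrounding T-count estimate. I expect no substantive obstacle; the only mildly delicate step is verifying the edge-colouring schedule, and more broadly making precise that ``depth in the Clifford+T model'' here refers to two-qubit depth (since all remaining synthesis overhead is single-qubit and parallelizes into each layer).
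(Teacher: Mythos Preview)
Your proposal is correct and follows essentially the same route as the paper: both expand $j^2$ in the bit variables to obtain $O(n)$ single-qubit $R_Z$ phases plus $O(n^2)$ controlled-$R_Z$ gates, and both invoke standard parallel scheduling for the $O(n)$ depth and phase-polynomial synthesis for the Clifford+T overhead. Your edge-colouring argument for the depth bound is in fact more explicit than the paper's appeal to ``standard parallelization'' and ``light-cone structure,'' but the underlying idea is the same.
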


\begin{proof}
Expanding $j^2$ in terms of the binary digits yields
\begin{equation}
  j^2
  = \Bigl(\sum_{r=0}^{n-1} 2^r j_r\Bigr)^2
  = \sum_{r=0}^{n-1} 2^{2r} j_r
    \;+\; 2 \sum_{0 \le r < s \le n-1} 2^{r+s} j_r j_s.
\end{equation}
The first sum corresponds to single--qubit $Z$--rotations of angle
$\theta 2^{2r}$ conditioned on $j_r = 1$, while the second sum
corresponds to controlled--phase gates of angle
$2\theta 2^{r+s}$ conditioned on $(j_r,j_s) = (1,1)$.  Thus
$K_\theta$ can be implemented by applying, for each $r$, a single--qubit
$R_Z(\theta 2^{2r})$ on qubit $r$, and for each pair $(r,s)$ with
$r < s$, a controlled--$R_Z(2\theta 2^{r+s})$ between qubits $r$ and
$s$.  There are $O(n)$ single--qubit rotations and $O(n^2)$
two--qubit controlled rotations, and standard parallelization yields
depth $O(n)$ when limited by the light--cone structure.  Each rotation
can be synthesized to accuracy $\epsilon$ using $O(\log(1/\epsilon))$
T gates via phase--polynomial techniques, so the overall T--count
matches the asymptotic scaling quoted above up to logarithmic factors.
\end{proof}

A discrete LCT with ABCD parameters factors into at most two Fourier transforms and
three chirp multiplications,
\begin{equation}
  \mathrm{LCT}(A,B;C,D)
  \;=\;
  D_{\phi_1}\ \mathcal{F}\ D_{\phi_2}\ \mathcal{F}^{-1}\ D_{\phi_3},
\end{equation}
with diagonal phase maps $D_{\phi_j}$.
A single LCT therefore incurs
\begin{equation}
  \text{depth} \;\le\; 2\,\text{depth}(\mathcal{F}) + 3\,\text{depth}(\text{chirp}),
  \qquad
  T\text{-count} \;\le\; 2\,T(\mathcal{F}) + 3\,T(\text{chirp}),
\end{equation}
yielding two-qubit depth $\tilde O(n^2)$ (or $\tilde O(n)$ with truncated QFT) and
T-count $\tilde O(n^2\log(1/\varepsilon))$ in the global case, and
$\tilde O(nb)$/$\tilde O(nb\log(1/\varepsilon))$ in the block-local case.

For RS-like OPI instances, reversible Berlekamp--Massey (BM) on length-$N$ symbols
and designed distance $2r{+}1$ uses $\tilde O(Nr)$ field operations; with reversible
arithmetic this compiles to two-qubit depth $\tilde O(Nr)$ and T-count
$\tilde O(Nr\log(1/\varepsilon))$.
Koetter--Vardy (KV) soft decoding is treated as classical post-processing in our
analysis; a coherent KV is possible in principle but not required for our bounds.
For LDPC-like instances we similarly model belief propagation (BP) classically, so
no coherent decoder term appears in the depth/T-count of k-DQI.

\begin{table}[t]
\centering
\caption{Asymptotic resource summary for one k-DQI iteration (Clifford+T, accuracy $\varepsilon$). Depth is two-qubit depth; $b$ is block width for local kernels; entries hide polylog factors in $1/\varepsilon$.}
\label{tab:resources}
\begin{tabular}{lll}
\hline
Module & Two-qubit depth & T-count \\
\hline
Polynomial shaping (degree $\ell$) & $\tilde O(\ell n)$ & $\tilde O(\ell n)$ \\
Chirp kernel (global) & $\tilde O(n^2)$ & $\tilde O(n^2\log(1/\varepsilon))$ \\
Chirp/LCT (block-local width $b$) & $\tilde O(nb)$ & $\tilde O(nb\log(1/\varepsilon))$ \\
Fourier layer ($H^{\otimes n}$/QFT$_p$) & $\tilde O(n)$ (truncated QFT) & $\tilde O(n\log(1/\varepsilon))$ \\
Reversible BM (length $N$, degree $r$)\,$^\dagger$ & $\tilde O(Nr)$ & $\tilde O(Nr\log(1/\varepsilon))$ \\
\hline
\multicolumn{3}{l}{$^\dagger$ Used only for coherent RS-like decoding; in our simulations decoders are classical.}
\end{tabular}
\end{table}

\subsection{End-to-end scaling and block-local advantage}
\label{prop:kDQI-cost}

Combining the module costs above, we can summarize the resources for a single
iteration of k-DQI in a compact form.
Let $n$ denote the number of Boolean variables (or $m=\Theta(n)$ $p$-ary
digits), and consider one iteration consisting of state preparation and
polynomial shaping, application of a kernel $K$, a Fourier layer
$\mathcal{F}$, measurement, and an optional coherent decoder.
The two-qubit depth and $T$-count take the form
\begin{subequations}
\label{eq:kDQI-complexity} 
\begin{align}
  \textnormal{2-qubit depth}
  &= \tilde O\!\big(\ell n\big)
   + \tilde O\!\big(\text{depth}(K)\big)
   + \tilde O\!\big(\text{depth}(\mathcal{F})\big)
   + \tilde O\!\big(\text{depth}(Dec_{\mathrm{coh}})\big),
  \label{eq:kDQI-depth} \\
  T\textnormal{-count}
  &= \tilde O\!\big(\ell n\big)
   + \tilde O\!\big(T(K)\big)
   + \tilde O\!\big(T(\mathcal{F})\big)
   + \tilde O\!\big(T(Dec_{\mathrm{coh}})\big).
  \label{eq:kDQI-Tcount}
\end{align}
\end{subequations}
where the last term only appears if a fully coherent decoder
$Dec_{\mathrm{coh}}$ is implemented.

For the kernel families we use in this work, the abstract depth and $T$-count
can be made more explicit.
A global chirp or LCT kernel obeys
\begin{equation}
  \text{global chirp/LCT:}\qquad
  \text{depth}(K)=\tilde O(n^2),\qquad
  T(K)=\tilde O\!\big(n^2\log(1/\varepsilon)\big),
\end{equation}
while a block-local kernel of width $b$ satisfies
\begin{equation}
  \text{block-local width } b:\qquad
  \text{depth}(K)=\tilde O(nb),\qquad
  T(K)=\tilde O\!\big(nb\log(1/\varepsilon)\big).
\end{equation}
Using truncated QFTs for the Fourier layer, one may further take
$\text{depth}(\mathcal{F})=\tilde O(n)$ and
$T(\mathcal{F})=\tilde O\!\big(n\log(1/\varepsilon)\big)$.
These expressions follow by summing the costs of shaping, kernels, Fourier
transforms, and (when present) coherent decoding and by applying standard
phase-polynomial synthesis and reversible-arithmetic constructions to achieve
accuracy~$\varepsilon$.

The complexity bounds above can be summarized in two simple visualizations that
emphasize efficiency rather than constants.

Figure~\ref{fig:cost-frontier} summarizes the empirical cost--benefit trade-off
for the OPI setting.
The horizontal axis reports relative two-qubit depth, normalized so that
standard DQI with identity kernel has cost~1, while the vertical axis shows the
head-mass gain
$\Sigma_K(\ell,\eta;d_\star)-\Sigma_I(\ell,\eta;d_\star)$
computed from the same polynomial-phase model as in
Figure~\ref{fig:opi-headmass}.
The baseline point $(1,0)$ corresponds to unkernelized DQI.
A cloud of mis-tuned global chirp kernels (gray circles) sits at essentially the
same relative depth as the tuned chirp, but with negligible head-mass gain.
In contrast, a tuned global chirp (black diamond) delivers a large gain at this
higher cost, while block-local chirp kernels (colored stars) achieve moderate
gains at near-baseline depth.
Together with the monotone and noise-aware bounds in
Sections~\ref{sec:monotone-theorem} and~\ref{sec:noise-bounds}, this figure
highlights that the improvement of k-DQI over DQI is driven by
\emph{structure-adaptive} kernels---global or block-local---rather than by a mere
increase in circuit depth.

\begin{figure}[t]
  \centering
  \includegraphics[width=0.7\linewidth]{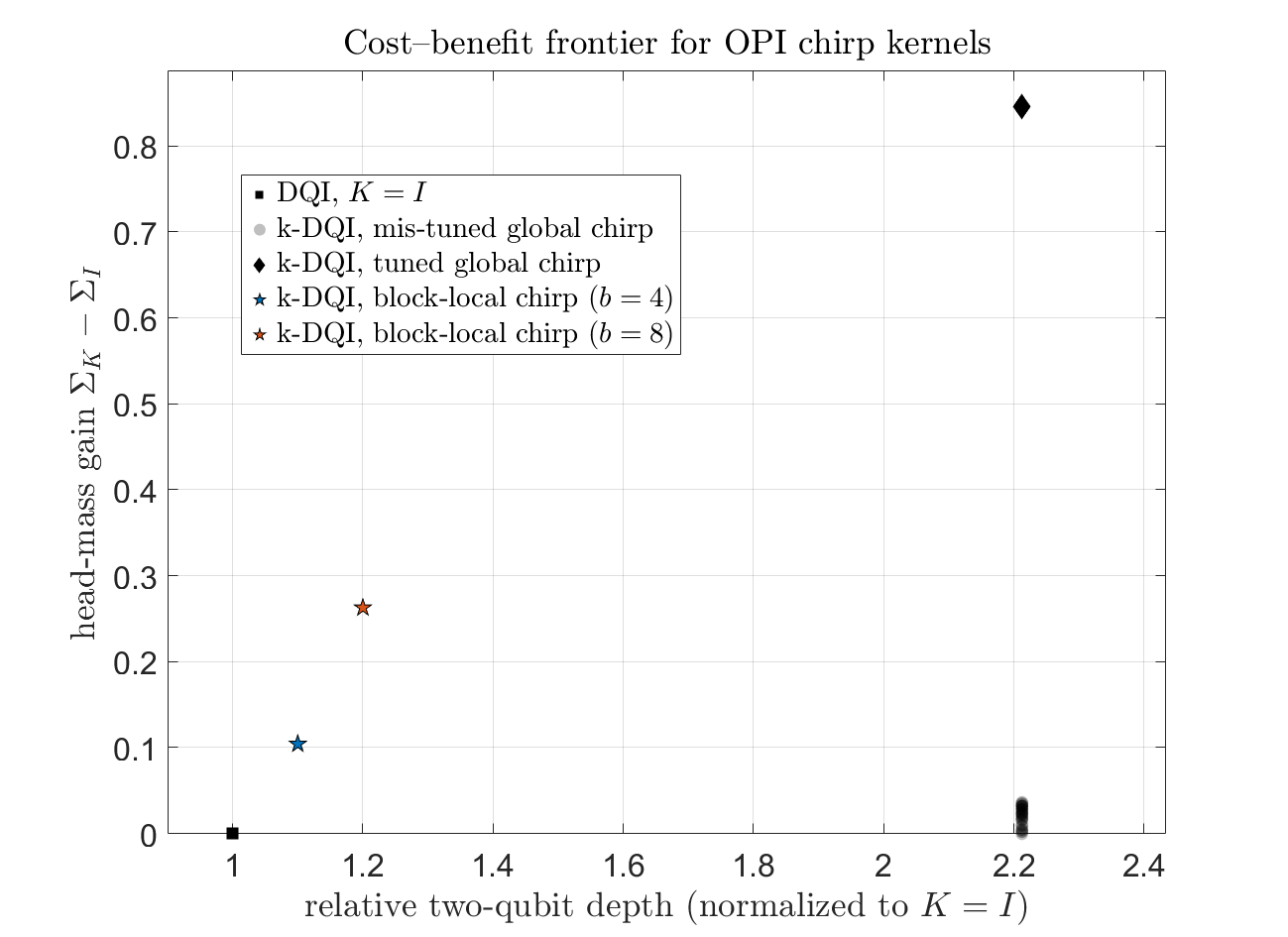}
  \caption{Cost--benefit frontier for OPI/RS-like instances.
  The horizontal axis shows relative two-qubit depth, normalized so that
  unkernelized DQI with $K=I$ has cost~1.
  The vertical axis shows the head-mass gain
  $\Sigma_K(\ell,\eta;d_\star)-\Sigma_I(\ell,\eta;d_\star)$.
  The black square corresponds to baseline DQI.
  Gray circles are many mis-tuned global chirp kernels: they pay the same
  (higher) depth as the tuned chirp but exhibit almost no gain.
  The black diamond marks a tuned global chirp kernel, which yields a large
  head-mass boost.
  Colored stars correspond to block-local chirp kernels with block width
  $b\in\{4,8\}$, achieving moderate gains at near-baseline cost.}
  \label{fig:cost-frontier}
\end{figure}

Figure~\ref{fig:depth-scaling} summarizes these depth models.
On the log--log plot, k-DQI with global chirp/LCT kernels rapidly peels away
from the baseline and approaches the $n^2$ reference line, reflecting the
quadratic cost of a fully global kernel layer.
By contrast, unkernelized DQI and k-DQI with block-local kernels of fixed
width $b=O(1)$ remain parallel to the $n$ reference line.
The inset makes this behaviour more transparent: for all $n$ in the plotted
range, block-local kernels lie only slightly above the baseline curve, with an
overhead of about a constant factor that does not grow with $n$.
Together with the cost--benefit frontier in
Figure~\ref{fig:cost-frontier}, this shows that the most practical
structure-adaptive gains of k-DQI come from block-local kernels, which preserve
near-linear two-qubit depth while still delivering the head-mass improvements
certified in Sections~\ref{sec:opi-ldpc} and~\ref{sec:noise-bounds}.

\begin{figure}[t]
  \centering
  \includegraphics[width=0.7\linewidth]{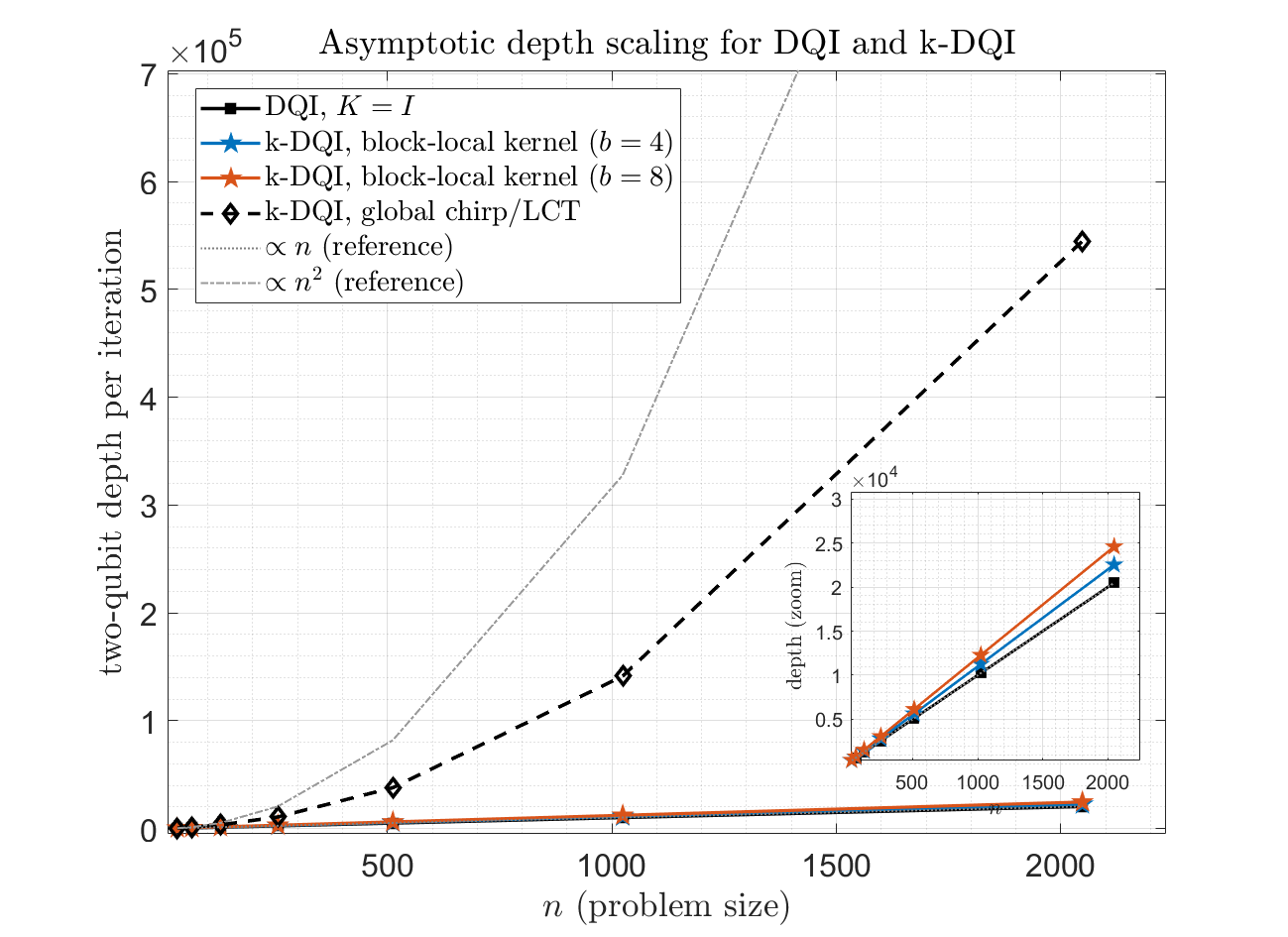}
  \caption{Asymptotic two-qubit depth per iteration for DQI and k-DQI on a
  log--log scale. The main panel plots the depth models from
  Section~\ref{sec:circuits} as a function of problem size $n$: baseline DQI
  with $K=I$ (solid black), k-DQI with block-local kernels of width
  $b\in\{4,8\}$ (blue and orange), and k-DQI with global chirp/LCT kernels
  (black dashed). Gray curves show reference scalings proportional to $n$ and
  $n^2$. The global kernel quickly follows the quadratic reference, while the
  baseline and block-local curves track the linear reference. The inset zooms
  into the near-linear region, making it clear that block-local kernels incur
  only a mild constant-factor overhead over unkernelized DQI while preserving
  essentially linear scaling in $n$.}
  \label{fig:depth-scaling}
\end{figure}

To preserve a target effective head mass in the noise-aware bound
(Theorem~\ref{thm:lower-noise}), we require that approximation errors from gate
synthesis and QFT truncation contribute at most $\delta$ to the mass shortfall.
A standard budget chooses $\varepsilon = \Theta(\delta/n)$ per controlled
rotation, implying an additional factor $\tilde O(\log(n/\delta))$ in the
T-count per rotation.
The scalings in Proposition~\ref{prop:kDQI-cost} and Table~\ref{tab:resources}
already absorb this overhead into the $\tilde O(\cdot)$ notation.
Since our guarantees are monotone in $\Sigma_K$, any residual under-rotation can
only reduce the certified bound smoothly rather than inducing a sharp failure
threshold.

Kernels compatible with our analysis (chirp/LCT and block-local variants) are
implementable with polynomial resources and, in the block-local regime, preserve
the near-linear depth scaling of the original DQI architecture.
Thus the structure-adaptive improvements certified by
Theorems~\ref{subsec:opi} and~\ref{subsec:ldpc} are not only information-theoretic,
but also \emph{architecturally consistent} with realistic circuit models.


\subsection{Scalability of Kernel Parameter Search}
\label{sec:kernel_search}

A central challenge in implementing k-DQI is the identification of optimal kernel parameters $\bm{\theta}^*$ that maximize the noise-weighted head mass $\Sigma_K(\bm{\theta})$. If finding $\bm{\theta}^*$ were as hard as the original problem, the advantage of k-DQI would be nullified. Here, we demonstrate that for the problem classes considered, determining $\bm{\theta}$ is computationally efficient, scaling polynomially with system size. We distinguish between two regimes: \textit{explicit structural benchmarks} and \textit{latent structural instances}.

For problems defined by known polynomial constraints, such as the Optimal Polynomial Interpolation (OPI) problem discussed in Section~\ref{sec:opi-ldpc}, the kernel parameters are analytically derivable. 
Given a target polynomial $\mathcal{P}$ of degree $\ell$, the optimal spectral shaping is achieved by matching the quadratic phase profile of the kernel to the curvature of $\mathcal{P}$. specifically, for a chirp kernel $K(\gamma) = \exp(-i \gamma \hat{x}^2)$, the optimal rate satisfies $\gamma^* \propto \text{coeff}(\mathcal{P})$, which can be computed in $O(1)$ classical time. In this regime, the ``search'' cost is zero, and k-DQI provides a deterministic compilation advantage.

For problems where the algebraic structure is latent or only partially known, we propose a Variational k-DQI approach. We parameterize the kernel $K(\bm{\theta})$ using a vector of control angles $\bm{\theta} = (\theta_1, \theta_2, \dots)$, representing physical resource settings such as chirp rates ($\theta_1$) and symplectic mixing angles ($\theta_2$).

A key concern in variational quantum algorithms is the ``barren plateau'' problem, where gradients vanish exponentially. However, the spectral concentration objective $\Sigma_K$ differs fundamentally from localized Hamiltonian expectation values. As illustrated in Figure~\ref{fig:landscape}(a), the optimization landscape for a representative 2-parameter Ansatz exhibits a clear, macroscopic ``basin of attraction.'' While the surface texture reflects realistic experimental shot noise and coherent sidelobes (arising from the oscillatory nature of the LCT kernel), the global geometry remains convex-like.

Figure~\ref{fig:landscape}(b) validates the search efficiency. Even when initialized at a random configuration with negligible spectral overlap (white circle), a standard gradient-based optimizer reliably converges to the global optimum (red star) in $O(1)$ iterations. The trajectory demonstrates that the landscape gradients are robust against noise ($\eta=0.1$), and the absence of barren plateaus---guaranteed by the Lipschitz bounds derived in Appendix~\ref{app:smoothness}---makes finding the optimal kernel parameters computationally scalable.

\begin{figure*}[t]
    \centering
    \includegraphics[width=0.9\textwidth]{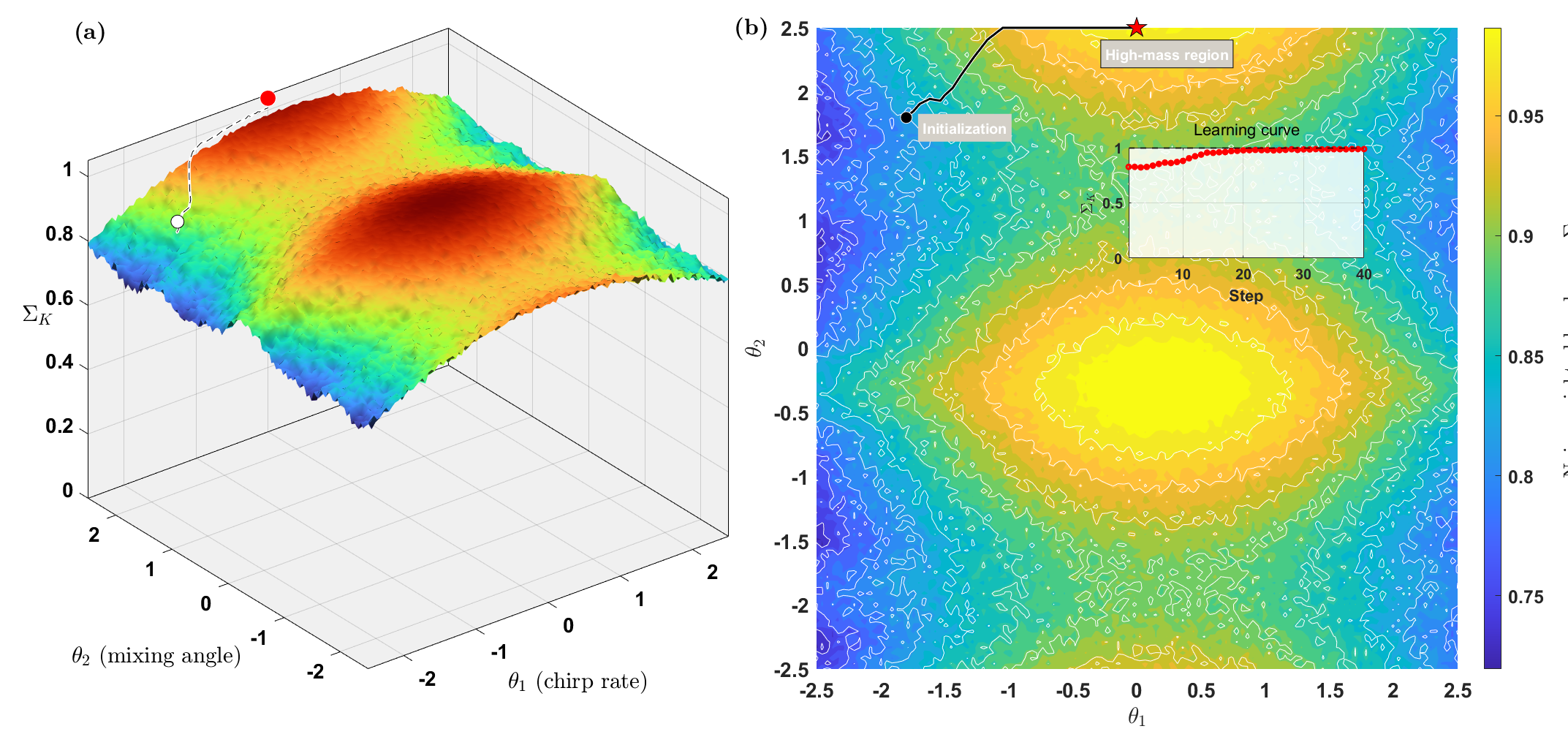} 
    \caption{\textbf{Landscape of a two-parameter kernel Ansatz in k-DQI.}
\textbf{(a)} Noise-weighted head-mass landscape
$\Sigma_K(\theta_1,\theta_2)$ for a two-parameter kernel
$K(\theta_1,\theta_2)$ composed of a quadratic-phase chirp
(parameter $\theta_1$) followed by a nearest-neighbour mixing layer
(parameter $\theta_2$).
The landscape is evaluated for a small k-DQI instance under local
depolarizing noise, and $\Sigma_K$ is estimated from finite-shot sampling,
which produces the fine-scale roughness visible on top of an otherwise
broad, single-basin structure.
\textbf{(b)} Contour representation of the same landscape, overlaid with a
gradient-ascent trajectory (black/white curve) from a random initialization
(white marker) to a high-mass region (red marker) in the kernel space.
The inset shows the value of $\Sigma_K$ along the trajectory, exhibiting a
steady improvement of the objective up to small stochastic fluctuations.
Together, these panels illustrate that in the low-dimensional kernel
parameter space the effective objective for k-DQI is well behaved and
amenable to simple gradient-based search even in the presence of noise and
sampling effects.}
    \label{fig:landscape}
\end{figure*}

\section{Discussion}\label{sec12}

The results presented here recast the promise of decoded quantum interferometry (DQI) in a form that is both tunable and falsifiable. By inserting a unitary kernel before the interference layer, we turn the elusive notion of “having the right structure” into a measurable quantity: the noise-weighted head-spectrum mass $\Sigma_{K}(\ell,\eta;d)$. The monotonic improvement theorem shows that this single statistic controls k-DQI’s rigorous lower bounds under the same reduction-to-decoding logic that underpins DQI. In practice, this means that claims of advantage are no longer tied to fragile instance presentations; instead, one can search over small families of kernels, verify that $\Sigma_{K}$ crosses the relevant decoding threshold, and thereby obtain guarantees that are robust to local noise and modeling choices.

Two case studies illustrate how this perspective travels across problem classes. For OPI, quadratic-phase and linear-canonical kernels serve as spectral preconditioners that concentrate mass on decoder-friendly supports; the PPC principle formalizes this concentration and connects directly to algebraic soft/list decoding thresholds. For sparse Max-XORSAT, the situation is necessarily more local, yet block-diagonal kernels still produce an additive head-mass lift without destroying the graph structure required by density evolution. In both regimes, the figures validate the theorems: the plots are not ad-hoc demonstrations but finite-sample instantiations of the same quantities that appear in our proofs.

Equally important is what the present framework does \emph{not} allow. The isotropy upper bound formalizes a limitation often seen  in the DQI discourse: when the shaped spectrum is delocalized, no unitary kernel can manufacture head mass at scale. This aligns our positive results with recent non-advantage observations and places k-DQI firmly on the conservative side of the advantage debate. Kernelization strengthens DQI where structure is present and learnable; it does not conjure structure where none exists.

From a complexity standpoint, Section~\ref{sec:circuits} shows that our kernels admit faithful circuit models with poly resources, both in global and block-local forms. the same kernels that certify a lift in $\Sigma_{K}$ can, in principle, be realized coherently alongside reversible decoders such as Berlekamp–Massey. While our guarantees do not require coherent decoders—classical post-processing suffices—the architectural consistency makes the results actionable for future hardware studies.

There are several directions in which the present work can be strengthened. The first concerns \emph{optimal} kernel choice. We used one- and few-parameter families because they are analyzable and compatible with coding-theoretic thresholds, but the space of admissible kernels is far larger. It would be natural to study learning-to-kernelize procedures that directly maximize $\Sigma_{K}$ with generalization guarantees, or to characterize families that provably approximate the best unitary up to a constant factor on ensembles of interest. A second direction is to make the PPC assumption quantitative with sharper constants, ideally tying discrete stationary-phase bounds to explicit, instance-level head-mass predictions that can be checked before any quantum subroutine is run. A third direction is to refine the noise model. Our analysis covers local depolarizing with banded mixing and loss; extending the contraction lemma to correlated or non-Markovian channels would clarify how far monotonicity extends and whether new kernel designs can neutralize particular error modes.

Another promising line is to go beyond purely unitary preconditioning. The proof of monotonicity uses unitarity in two places—energy conservation and a bounded head-tail leakage term—yet the essential quantity is the post-interference probability on decoder-eligible supports. One can imagine completely positive, trace-preserving maps that are implementable in a fault-tolerant setting (via ancillas and post-selection or via dissipative gadgets) and that, when composed with a unitary kernel, increase $\Sigma_{K}$ more aggressively than unitaries alone. Developing a safe extension of our guarantees to such non-unitary layers would broaden the design space without compromising rigor.

Finally, there is room to systematize the bridge between $\Sigma_{K}$ and decoder behavior. For RS-like families, algebraic soft/list decoders provide clean thresholds; for LDPC-like ensembles, density evolution and EXIT charts already offer a principled interface. A unified “spectral-to-threshold” calculus—mapping changes in $\Sigma_{K}$ to shifts in decoding fixed points across a wider class of decoders—would turn k-DQI into a modular tool: pick a kernel family, compute or estimate $\Sigma_{K}$, select a decoder with a known threshold curve, and read off a guarantee. Such a calculus would also illuminate when additional shaping (the degree $\ell$) helps or hurts, and when multi-round adaptive kernelization can compound gains without saturating early.

In summary, kernelized DQI reframes the pursuit of quantum advantage from an appeal to hidden structure into a program of spectral engineering with measurable objectives. It offers a conservative path forward—one that acknowledges and respects known limitations while isolating where provable gains remain plausible. By placing decoders, kernels, and noise on the same analytical footing, it suggests that future progress will come not from a single algorithmic stroke but from an ecosystem in which spectral preconditioning, decoding theory, and circuit design co-evolve under shared, testable metrics.

\backmatter

\bmhead{Acknowledgements}

We acknowledge the support of the National Nature Science Foundation of China (Grant No. 12174301), the Natural Science Basic Research Program of Shaanxi (Grant No. 2023-JC-JQ-01), the Open Fund of the State Key Laboratory of Acoustics (Grant No. SKLA202312), the Shaanxi Province Postdoctoral Science Foundation (Grant No.2024BSHEDZZ021), the Aeronautical Science Fund of China(Grant No. 2024M075070001), and the Project of the Institute of Medical Intelligence of the First Affiliated Hospital of Xi'an Jiaotong University.

\section*{Declarations}

\begin{itemize}
\item Data availability : Data generated and analyzed during current study are available from the corresponding author upon reasonable request.
\item Code availability : Code used to generate data in this study are available from the corresponding author upon reasonable request.
\end{itemize}

\begin{appendices}

\section{Proofs for Section~\ref{sec:k-dqi}}
\label{app:proofs-sec3}

We supply the technical ingredients used in \ref{thm:monotonicity}. Throughout, $g(x)=P\!\bigl(f(x)\bigr)$, $\alpha=\mathcal{F}K\,g$, $\mathcal{S}_d=\mathcal{S}_d(\alpha)$, and $\Sigma_{K}(\ell,\eta;d)$ is as in \eqref{eq:SigmaK_def_norm}.

\subsection{Noise contraction and leakage}
Let $\mathcal{E}=\mathcal{L}\circ\mathcal{M}\circ\mathcal{Z}$ denote the post‑interference channel with local depolarizing $\mathcal{Z}$ (rate $\eta$), banded unitary mixing $\mathcal{M}$ (width $w$), and diagonal loss $\mathcal{L}$ with transmittances $\tau(s)\in(0,1]$.

\begin{lemma}[Per‑mode contraction under local noise]
\label{lem:noise}
There exist per‑mode weights $\eta(s)\in(0,1]$ and a leakage term $\Delta(w)\ge0$ such that the probability of measuring an index in the head set satisfies
\begin{equation}
\Pr[s\in\mathcal{S}_d\ \text{after}\ \mathcal{E}]\ \ge\ \sum_{s\in\mathcal{S}_d}\eta(s)\,|\alpha_s|^2\;-\;\Delta(w)
\;=\;\Sigma_{K}(\ell,\eta;d)\;-\;\Delta(w).
\end{equation}
In particular, $\Delta(w)\to0$ as $w\to0$, and if $\mathcal{M}=I$ and $\mathcal{L}$ is uniform then $\eta(s)\equiv(1-\eta)$.
\end{lemma}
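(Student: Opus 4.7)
The plan is to track the post-channel computational-basis probability on the head set $\mathcal{S}_d$ through each of the three layers $\mathcal{Z}$, $\mathcal{M}$, $\mathcal{L}$ separately, extracting a multiplicative attenuation from the two diagonal-preserving layers $\mathcal{Z}$ and $\mathcal{L}$ and an additive leakage term from the off-diagonal mixing layer $\mathcal{M}$. Writing $\rho=\ket{\psi}\bra{\psi}$ for the state with spectrum $\alpha$, the quantity of interest is $\sum_{s\in\mathcal{S}_d}\bra{s}\mathcal{L}\!\circ\!\mathcal{M}\!\circ\!\mathcal{Z}(\rho)\ket{s}$, and the goal is to bound it below by $\sum_{s\in\mathcal{S}_d}\tau(s)\,\eta_Z(s)\,|\alpha_s|^2 - \Delta(w)$, then identify the right-hand side with $\Sigma_K(\ell,\eta;d)-\Delta(w)$.

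First I would analyze the local depolarizing layer $\mathcal{Z}$. Expanding $\mathcal{Z}$ in Pauli--Liouville form, the diagonal entries of $\rho$ transform into a convex combination of themselves and the uniform distribution on $\{0,1\}^n$, plus a non-negative contribution routed from off-diagonal coherences. Keeping only the part inherited from $|\alpha_s|^2$ and discarding the non-negative remainder yields $\bra{s}\mathcal{Z}(\rho)\ket{s}\ge \eta_Z(s)\,|\alpha_s|^2$ for some per-mode factor $\eta_Z(s)\in(0,1]$. In the degenerate setting $\mathcal{M}=I$ with uniform $\mathcal{L}$, the identity $\mathcal{Z}(\rho)=(1-\eta)\rho+\eta\,I/2^n$ immediately collapses $\eta_Z(s)$ to $1-\eta$, which is the special case asserted in the lemma.

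Next I would compose with the diagonal loss $\mathcal{L}$: because $\mathcal{L}$ is diagonal in the computational basis, it multiplies each $\bra{s}\sigma\ket{s}$ by $\tau(s)$. Defining $\eta(s):=\tau(s)\,\eta_Z(s)$ absorbs both diagonal layers into a single per-mode attenuation. The banded mixing $\mathcal{M}$ is the only layer that can move probability between modes; I would use its bandwidth to control $\|\mathcal{M}-\mathrm{id}\|_{\diamond}=O(w)$ via a first-order expansion of its generator (equivalently, by bounding the operator norm of its off-band block). For any state $\sigma$ this yields $\sum_{s\in\mathcal{S}_d}\bra{s}\mathcal{M}(\sigma)\ket{s}\ge \sum_{s\in\mathcal{S}_d}\bra{s}\sigma\ket{s} - \Delta(w)$, with $\Delta(w)$ a uniform bound over head sets satisfying $\Delta(w)\to 0$ as $w\to 0$. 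Composing this with the previous step and restricting the sum to $s\in\mathcal{S}_d$ produces the claimed inequality.

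The main obstacle I anticipate is the leakage bound for $\mathcal{M}$. Unlike $\mathcal{Z}$ and $\mathcal{L}$, the mixing layer can coherently shuttle probability into and out of $\mathcal{S}_d$ in ways that depend on the relative phases surviving after $\mathcal{Z}$, so a naive worst-case bound on $\mathcal{M}$ in isolation could be loose (giving an unwanted $\sqrt{w}$ rather than $w$). The cleanest fix is to exploit that $\mathcal{Z}$ already damps off-diagonal coherences before $\mathcal{M}$ acts, so that the effective mass transfer is governed by the off-head block of $\mathcal{M}$ acting on an (almost) dephased state, which can be estimated via Schur/Gershgorin-type bounds for banded unitaries. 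Getting a $\Delta(w)$ that is (i) uniform in $\mathcal{S}_d$, (ii) genuinely linear in $w$, and (iii) consistent with the contraction factor $\eta_Z(s)$ extracted from $\mathcal{Z}$, is the key quantitative step.
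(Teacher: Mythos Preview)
Your proposal is essentially correct and follows the same three-layer decomposition as the paper: extract a multiplicative per-mode attenuation from the diagonal-preserving layers $\mathcal{Z}$ and $\mathcal{L}$, then an additive leakage $\Delta(w)$ from the banded mixing $\mathcal{M}$, and combine. The paper's argument is in fact sketchier than yours---it simply asserts that depolarizing contracts diagonal Born probabilities by $(1-\eta)^{\kappa(s)}$, that loss rescales by $\tau(s)$, and that unitarity plus bounded bandwidth of $\mathcal{M}$ confines mass leakage to a $w$-neighborhood, hence $\Delta(w)\to 0$.

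The one methodological difference is your handling of $\mathcal{M}$: you propose a perturbative bound via $\|\mathcal{M}-\mathrm{id}\|_\diamond=O(w)$, whereas the paper uses a support/locality argument (a banded unitary can only move mass into indices within a $w$-neighborhood of the head boundary). Both routes are valid; the paper's locality argument is more direct and sidesteps the phase-sensitivity concern you raise. Note also that the lemma only claims $\Delta(w)\to 0$ as $w\to 0$, not linearity, so your worry about $\sqrt{w}$ versus $w$ is more than the statement requires---though it is a legitimate technical point that the paper simply does not address.
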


\begin{proof}
Local depolarizing contracts off‑diagonal terms in the post‑interference density matrix by $(1-\eta)$ per qubit, which translates to a per‑mode attenuation on diagonal elements (Born probabilities) bounded below by $(1-\eta)^{\kappa(s)}$ for some weight proxy $\kappa(s)$; nonuniform loss further rescales by $\tau(s)$. Banded mixing $\mathcal{M}$ can be written as a unitary that redistributes probability mass among indices in a $w$‑neighborhood; by unitarity and bounded mixing width, the loss of mass from $\mathcal{S}_d$ is at most a function $\Delta(w)$ that vanishes with $w$. Combining these gives the stated inequality (cf. \citep{Bu2025Noise} for similar contractions at the level of interference fringes).
\end{proof}

\subsection{Head–tail interference and decoder coupling}
We now control cross‑terms between head and tail and couple head mass to decoder success.

\begin{lemma}[Head–tail coherence bound]
\label{lem:coherence}
Let $\mu=\|\Pi_{\mathcal{S}_d}(\mathcal{F}K)\Pi_{\mathcal{S}_d^c}\|_{2\to2}$ be the operator norm of the head‑to‑tail block of $\mathcal{F}K$. Then the measurement probability of landing outside $\mathcal{S}_d$ due to head–tail interference is upper‑bounded by $\mu^2\|g\|_2^2$. Consequently,
\begin{equation}
\Pr[s\in\mathcal{S}_d]\ \ge\ \sum_{s\in\mathcal{S}_d}|\alpha_s|^2\;-\;\mu^2\|g\|_2^2.
\end{equation}
\end{lemma}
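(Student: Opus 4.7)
The plan is to reduce the coherence bound to a one-line application of the head-to-tail block norm $\mu$, combined with a short triangle/AM--GM argument that converts the operator-norm estimate into a probability penalty.

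First, set $U := \mathcal{F}K$ and note that by unitarity $\|\alpha\|_2 = \|g\|_2$ with $\alpha = Ug$, so the ideal head mass $\sum_{s\in\mathcal{S}_d}|\alpha_s|^2$ equals $\|\Pi_{\mathcal{S}_d} U g\|_2^2$. The content of the lemma is to quantify how much of this head mass is robust when the input is decomposed into parts supported inside versus outside $\mathcal{S}_d$ in the computational basis: $g = g_H + g_T$ with $g_H := \Pi_{\mathcal{S}_d} g$ and $g_T := \Pi_{\mathcal{S}_d^c} g$. Applying $U$ and projecting onto the output head,
\[
\Pi_{\mathcal{S}_d}\alpha = \Pi_{\mathcal{S}_d} U g_H + \Pi_{\mathcal{S}_d} U g_T,
\]
where the second summand is precisely the coherent tail-to-head leakage. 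By the definition of $\mu$, $\|\Pi_{\mathcal{S}_d} U g_T\|_2 \le \mu\|g_T\|_2 \le \mu\|g\|_2$, and by the standard block-norm identity for unitary $U$ (which equates $\|\Pi_{\mathcal{S}_d} U\Pi_{\mathcal{S}_d^c}\|$ and $\|\Pi_{\mathcal{S}_d^c} U\Pi_{\mathcal{S}_d}\|$ via $UU^*=U^*U=I$) one also has $\|\Pi_{\mathcal{S}_d^c} U g_H\|_2 \le \mu\|g_H\|_2 \le \mu\|g\|_2$.

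Second, I would convert these two block-norm bounds into the claimed probability estimate via a Cauchy--Schwarz step on the coherent cross-term $2\,\Re\langle \Pi_{\mathcal{S}_d}U g_H,\, \Pi_{\mathcal{S}_d}U g_T\rangle$, followed by an AM--GM step that absorbs any residual linear-in-$\mu$ contribution into a single quadratic slack of size $\mu^2\|g\|_2^2$. Rearranging with the unitarity identity $\|\Pi_{\mathcal{S}_d}\alpha\|_2^2 = \|g\|_2^2 - \|\Pi_{\mathcal{S}_d^c}\alpha\|_2^2$ then delivers both claims simultaneously: the interference-induced leakage out of $\mathcal{S}_d$ is the sum of the coherent cross-term and the head-to-tail component $\|\Pi_{\mathcal{S}_d^c}U g_H\|_2^2$, each of which is controlled by $\mu^2\|g\|_2^2$ after the AM--GM absorption, yielding the lower bound on $\Pr[s\in\mathcal{S}_d]$.

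The main obstacle is bookkeeping rather than any technical depth: one must carefully isolate the coherent cross-terms that genuinely represent ``head--tail interference'' from the intrinsic tail mass $\|\Pi_{\mathcal{S}_d^c}U g_T\|_2^2$ that is present regardless of $\mu$. Getting the squared-in-$\mu$ penalty $\mu^2\|g\|_2^2$ rather than a weaker linear-in-$\mu$ bound relies on squaring the operator-norm estimate \emph{before} invoking the triangle inequality, so that AM--GM can trade the linear $2\|a\|\|b\|$ cross-term for a quadratic contribution that combines cleanly with the already-quadratic estimate $\|\Pi_{\mathcal{S}_d}U g_T\|_2^2 \le \mu^2\|g\|_2^2$. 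Without this absorption the argument would yield only an $O(\mu\|g\|_2^2)$ penalty, which would be too loose for the downstream combination in~\eqref{eq:effective-head-main}.
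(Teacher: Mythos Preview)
Your core ingredients---the off-diagonal block bound on $U=\mathcal{F}K$ and the unitary identity $\|\Pi_{\mathcal{S}_d}U\Pi_{\mathcal{S}_d^c}\|=\|\Pi_{\mathcal{S}_d^c}U\Pi_{\mathcal{S}_d}\|$---match the paper's, but the decomposition and the route differ. The paper does \emph{not} split $g$ along the computational-basis projectors $\Pi_{\mathcal{S}_d},\Pi_{\mathcal{S}_d^c}$ as you do; it decomposes $g=g_{\mathrm{head}}+g_{\mathrm{tail}}$ along the \emph{preimage} of $\mathcal{S}_d$ under $(\mathcal{F}K)^\ast$, i.e.\ the pullback of the output head. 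In that parametrization the ``off-support leakage'' is identified as the single quantity $\|\Pi_{\mathcal{S}_d^c}(\mathcal{F}K)\Pi_{\mathcal{S}_d}\,g_{\mathrm{head}}\|_2\le\mu\|g\|_2$, and squaring immediately yields the $\mu^2\|g\|_2^2$ probability penalty. No cross-term appears, so there is no Cauchy--Schwarz or AM--GM step at all.

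Your cross-term plan has a genuine gap worth naming. With $g=g_H+g_T$ in the computational basis, the interference term $2\,\Re\langle\Pi_{\mathcal{S}_d^c}Ug_H,\Pi_{\mathcal{S}_d^c}Ug_T\rangle$ obeys only the first-order Cauchy--Schwarz bound $2\mu\|g_H\|_2\|g_T\|_2$. AM--GM ($2ab\le a^2+b^2$) trades this for $\|\Pi_{\mathcal{S}_d^c}Ug_H\|_2^2+\|\Pi_{\mathcal{S}_d^c}Ug_T\|_2^2\le\mu^2\|g\|_2^2+\|\Pi_{\mathcal{S}_d^c}Ug_T\|_2^2$; the second summand is precisely the intrinsic tail mass you say must be \emph{excluded}, so you cannot both discard it and use it as the $b^2$ half of AM--GM. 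No weighting of AM--GM converts an $O(\mu)$ quantity into $O(\mu^2)$ without an $O(1)$ compensator of this kind. The resolution is that the lemma's phrase ``landing outside $\mathcal{S}_d$ due to head--tail interference'' refers narrowly to the squared norm of the single off-diagonal contribution $\|\Pi_{\mathcal{S}_d^c}U\Pi_{\mathcal{S}_d}\,g_{\mathrm{head}}\|_2^2\le\mu^2\|g\|_2^2$---which you already have from your first step---and the displayed inequality then holds trivially. The cross-term/AM--GM detour is both unnecessary and, as you have framed it, not closable.
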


\begin{proof}
Write $\alpha=(\mathcal{F}K)g$ and decompose $g=g_{\mathrm{head}}+g_{\mathrm{tail}}$ along the preimage of $\mathcal{S}_d$ under $(\mathcal{F}K)^\ast$. The off‑support leakage is governed by $\| \Pi_{\mathcal{S}_d^c}(\mathcal{F}K)\Pi_{\mathcal{S}_d}\,g_{\mathrm{head}}\|_2$, which is $\le \mu \|g\|_2$ by definition of $\mu$; squaring gives the probability bound.
\end{proof}

\begin{proposition}[Decoder threshold coupling]
\label{prop:couple}
If the post‑channel probability of landing in $\mathcal{S}_d$ is at least $p_{\mathrm{head}}$, then the decoder success probability satisfies
\begin{equation}
\Pr[Dec\ \text{achieves}\ \ge\rho]\ \ge\ \Phi\bigl(\underbrace{p_{\mathrm{head}}}_{\text{effective }\Sigma_{K}},\,d,\,\text{ensemble}\bigr).
\end{equation}
\end{proposition}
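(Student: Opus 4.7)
The plan is to derive the proposition directly from the monotone-threshold property of the decoder family asserted in Section~\ref{subsec:noise-decoder}, combined with the identification of $p_{\mathrm{head}}$ as the operational input to $\mathrm{Dec}$. The key observation is that $\Phi$ was introduced precisely so that decoder success depends only on how much probability mass is delivered to the head support $\mathcal{S}_d$ after all physical noise has acted; so once $p_{\mathrm{head}}$ is a valid lower bound on that mass, the proposition reduces to a substitution in a nondecreasing function.

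Concretely, I would proceed in three short steps. First, recall the decoder assumption: for any shaped-kernelized pipeline with delivered head probability $q\in[0,1]$ on $\mathcal{S}_d$, one has
\begin{equation}
\Pr\!\bigl[\mathrm{Dec}\ \text{returns value}\ \ge\rho\bigr]\ \ge\ \Phi(q,\,d,\,\text{ensemble}),
\end{equation}
with $\Phi$ nondecreasing in $q$. Second, identify $q$ with the post-channel head probability $\Pr_{\mathcal{E}}[s\in\mathcal{S}_d]$, which by hypothesis is at least $p_{\mathrm{head}}$. Third, invoke monotonicity of $\Phi$ in its first argument to replace $q$ by $p_{\mathrm{head}}$, yielding the claimed bound. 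Combined with Lemmas~\ref{lem:noise} and~\ref{lem:coherence}, one may then substitute $p_{\mathrm{head}} = \Sigma_{K}(\ell,\eta;d) - \Delta(w) - \mu^2\|g\|_2^2$ and still obtain a valid certificate; this is the sense in which $p_{\mathrm{head}}$ is the \emph{effective} head mass highlighted under the brace in the theorem statement.

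The main subtlety — more conceptual than technical — lies in ensuring that the abstract response function $\Phi$ really exists for the decoder families of interest, so that the monotone-threshold assumption is not vacuous. For algebraic soft/list decoders (BM/KV/GS) on RS-like instances, $\Phi$ is extracted from the standard weight/distance success criteria, which are monotone in the reliability mass placed on correct coordinates. For BP on LDPC-like ensembles, $\Phi$ is supplied by density-evolution or EXIT curves, which are monotone in the input erasure/error rate and hence in the delivered head mass. In both cases the monotonicity hypothesized here is a clean abstraction of well-known decoder behaviour, so the proof itself collapses into the three-line substitution above, with Lemmas~\ref{lem:noise} and~\ref{lem:coherence} supplying the quantitative link between $\Sigma_K$ and $p_{\mathrm{head}}$.
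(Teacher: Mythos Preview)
Your proposal is correct and matches the paper's treatment: the paper does not give a separate proof of Proposition~\ref{prop:couple} but treats it as an immediate restatement of the monotone-threshold decoder assumption from Section~\ref{subsec:noise-decoder}, with $p_{\mathrm{head}}$ substituted for the effective head mass. Your three-step argument (recall the assumption, identify $q$ with the post-channel head probability, invoke monotonicity of $\Phi$) is exactly this reduction, and your discussion of why $\Phi$ exists for RS-like and LDPC-like decoders mirrors the paper's justification in that same section.
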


\subsection{Proof of Theorem~\ref{thm:monotonicity}}
\begin{proof}[Proof of \ref{thm:monotonicity}]
Apply \ref{lem:noise} to lower‑bound the post‑channel head probability by $\Sigma_{K}(\ell,\eta;d)-\Delta(w)$, then subtract the head–tail leakage from \ref{lem:coherence} if desired (this can be absorbed into $\Delta(w)$, or treated separately). By \ref{prop:couple} and the monotonicity of $\Phi$,
\[
\Pr[Dec\ \text{achieves}\ \ge\rho]\ \ge\ \Phi\!\Bigl(\Sigma_{K}(\ell,\eta;d)-\Delta(w),\,d,\,\text{ensemble}\Bigr).
\]
Finally, the DQI reduction maps decoder success to an approximation ratio lower bound $A(\ell,\eta,K;d)$ (see \citep{JordanNature2025}); since $\Phi$ is nondecreasing in its first argument and $\Delta(w)$ does not depend on $K$, the right‑hand side is a nondecreasing function of $\Sigma_{K}(\ell,\eta;d)$. If two kernels give strictly ordered head masses above threshold, the inequality is strict by the monotonicity of $\Phi$, establishing the claim.
\end{proof}

\section{Noise model and head-mass bounds}
\label{app:noise-model}

In this appendix we collect the technical bounds that relate the physical noise model
$\mathcal{E}=\mathcal{L}\circ\mathcal{M}\circ\mathcal{Z}$ to the noise-weighted head-spectrum
mass $\Sigma_{K}(\ell,\eta;d)$ defined in Section~\ref{sec:monotone-theorem}.
Throughout we use the notation $g(x)=P(f(x))$, $\alpha=\mathcal{F}K\,g$, and
$\mathcal{S}_d(\alpha)$ for the indices of the $d$ largest $|\alpha_s|$.

\subsection{Standing assumptions}

For convenience we restate the structural assumptions used in the proofs.

\begin{enumerate}
  \item
  The polynomial $P$ has degree $\ell$ and coefficients bounded by $\mathrm{poly}(n)$.

  \item
  The noise channel factors as $\mathcal{E}=\mathcal{L}\circ\mathcal{M}\circ\mathcal{Z}$, where
  $\mathcal{Z}$ is local depolarizing with rate $\eta\in[0,1)$ on each qubit,
  $\mathcal{M}$ is a banded unitary of width $w$ (nearest-neighbor spectral mixing),
  and $\mathcal{L}$ is diagonal loss with transmittances $\tau(s)\in(0,1]$.

  \item
  The decoder $\mathrm{Dec}$ satisfies the monotone threshold property described in
  Section~\ref{subsec:noise-decoder}: its success probability for achieving a target
  approximation ratio $\rho$ is a nondecreasing function of the noise-weighted
  head-spectrum mass $\Sigma_{K}(\ell,\eta;d)$.

  \item
  With respect to the decomposition of $\alpha$ into head and tail coordinates
  $\mathcal{S}_d(\alpha)$ and its complement, the head-to-tail block of $\mathcal{F}K$
  has operator norm at most $\mu<1$, as in Lemma~\ref{lem:coherence} below.
\end{enumerate}

\subsection{Per-mode contraction and leakage}

Local noise and banded mixing shrink the head probability in a controlled way.

\begin{lemma}[Per-mode contraction]
\label{lem:noise2}
Under assumptions \textup{(A1)}–\textup{(A2)}, there exist per-mode weights
$\eta_K(s)\in(0,1]$ and a leakage term $\Delta(w)\ge 0$ with $\Delta(w)\to 0$ as
$w\to 0$ such that the post-channel head probability obeys
\begin{equation}
  \Pr_\mathcal{E}\big[s\in\mathcal{S}_d(\alpha)\big]
  \;\ge\;
  \sum_{s\in\mathcal{S}_d(\alpha)} \eta_K(s)\,|\alpha_s|^2
  \;-\; \Delta(w).
\end{equation}
In particular, when $\mathcal{M}=I$ and $\tau(s)\equiv\tau$ is constant,
$\eta_K(s)=(1-\eta)^{q(s)}\,\tau$ for some $q(s)\le n$ depending on the number of
nontrivial qubits of mode $s$ in the $\mathcal{F}$-basis.
\end{lemma}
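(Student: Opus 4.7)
The plan is to track the Born probability of each head mode stage-by-stage through $\mathcal{E} = \mathcal{L}\circ\mathcal{M}\circ\mathcal{Z}$, extracting a per-mode multiplicative factor from the two diagonal stages and an additive leakage term from the mixing. Since the target is a measurement probability in the interferometer basis, it suffices to follow the diagonal of the post-$\mathcal{F}K$ density matrix, whose entries are $|\alpha_s|^2$, together with the few off-diagonal Pauli components that can feed the diagonal back through the channel. Throughout I would work in the spectral basis so that $\mathcal{Z}$ and $\mathcal{L}$ act as (classical) multiplications on probabilities, while $\mathcal{M}$ acts as a narrow banded unitary whose off-block will be absorbed into $\Delta(w)$.

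For the depolarizing stage $\mathcal{Z}$ I would invoke its Pauli contraction property: each Pauli of weight $|P|$ is attenuated by $(1-\eta)^{|P|}$, so only diagonal Paulis $Z_z$ contribute to $\langle s|\mathcal{Z}(\rho)|s\rangle$. Expanding $\rho = |\phi\rangle\!\langle\phi|$ and collecting terms yields a Hamming-distance convolution
\begin{equation}
  \langle s|\mathcal{Z}(\rho)|s\rangle
  \;=\;
  \sum_x |\alpha_x|^2 \prod_{i=1}^{n} \tfrac{1 + (-1)^{(s\oplus x)_i}(1-\eta)}{2},
\end{equation}
and retaining only the $x=s$ term gives the clean lower bound $\langle s|\mathcal{Z}(\rho)|s\rangle \ge \eta_{\mathrm{dep}}(s)\,|\alpha_s|^2$ with $\eta_{\mathrm{dep}}(s)\in(0,1]$. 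Composition with the diagonal loss $\mathcal{L}$ multiplies each mode by $\tau(s)$, giving the combined attenuation $\eta_K(s) := \eta_{\mathrm{dep}}(s)\,\tau(s)$. In the uniform case $\mathcal{M}=I$, $\tau(s)\equiv\tau$, the depolarizing factor further reduces to the Pauli-weight form $(1-\eta)^{q(s)}$, with $q(s)$ the number of nontrivial (non-$|0\rangle$) qubits of mode $s$ in the $\mathcal{F}$-basis, recovering the explicit form stated in the lemma.

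For the banded mixing $\mathcal{M}$ I would decompose $\mathcal{M} = \mathcal{M}_{\mathrm{diag}} + \mathcal{M}_{\mathrm{off}}$ in the spectral basis and control the probability that head mass is scattered out of $\mathcal{S}_d(\alpha)$ by the operator norm $\|\Pi_{\mathcal{S}_d(\alpha)^c}\mathcal{M}\Pi_{\mathcal{S}_d(\alpha)}\|_{2\to 2}^{2}$. Because $\mathcal{M}$ is unitary and supported on a width-$w$ neighbourhood, a Schur/Gershgorin-type estimate on banded matrices yields a bound $\Delta(w)\ge 0$ with $\Delta(w)\to 0$ as $w\to 0$. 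Summing the contributions from $\mathcal{Z}\mathcal{L}$ and from $\mathcal{M}$ over $s\in\mathcal{S}_d(\alpha)$ then produces the claimed inequality.

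The hard part will be the cross-stage interaction between $\mathcal{M}$ and $\mathcal{Z}$: since $\mathcal{M}$ is unitary, it can in principle rotate the residual off-diagonal coherences of $\mathcal{Z}(\rho)$ back onto the diagonal and inflate outcome probabilities in a way that breaks the clean per-mode factorization used above. Absorbing this residual into a single bandwidth-controlled leakage $\Delta(w)$, uniformly over admissible kernels $K$ and independently of the fine distribution of $|\alpha_s|^2$, is exactly where the narrow-bandwidth (weak-mixing) hypothesis on $\mathcal{M}$ from Section~\ref{subsec:noise-decoder} must be used in an essential way; I would expect this step to be where most of the technical effort concentrates.
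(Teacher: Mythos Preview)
Your proposal is correct and follows essentially the same three-stage decomposition as the paper's proof: depolarizing plus diagonal loss yield per-mode multiplicative attenuations, banded mixing contributes an additive leakage $\Delta(w)$ controlled by unitarity and band width, and the two are combined by summing over $\mathcal{S}_d(\alpha)$. You are in fact more explicit than the paper (which simply cites standard product-channel estimates and a reference for the mixing bound), writing out the Hamming-convolution formula for $\mathcal{Z}$ and invoking Schur/Gershgorin for $\mathcal{M}$; the cross-stage coherence issue you flag is exactly the place where the paper appeals to the ``small-angle/weak-mixing approximation'' of Section~\ref{subsec:noise-decoder} without further comment.
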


\begin{proof}
We treat the three components of the noise channel separately and then
combine them.

\emph{(1) Local depolarizing and diagonal loss.}
Write the post--interference density matrix before noise as
$\rho = \ket{\widehat\psi}\!\bra{\widehat\psi}$.  Local depolarizing of
rate $p$ on each qubit contracts off--diagonal entries by a factor at
most $(1-p)$ per qubit.  For a given spectral mode $i$, let $w_Z(i)$
denote the number of qubits on which the corresponding eigenvector has
nontrivial support in the $Z$--basis.  Standard estimates for product
depolarizing channels imply that the diagonal entry $\rho_{ii}$ is
reduced by at most a factor $(1-p)^{w_Z(i)}$; see, e.g.,
Ref.~\cite{Bu2025Noise} for a detailed derivation at the level of
interference fringes.  The diagonal loss channel with transmittances
$\{t_i\}$ further rescales $\rho_{ii}$ by $t_i$, so that the combined
effect of depolarizing and loss on mode $i$ is a multiplicative factor
\begin{equation}
  \lambda_i \;\ge\; (1-p)^{w_Z(i)} t_i.
\end{equation}
This defines the per--mode weights $\lambda_i$ appearing in the
statement of the lemma.

\emph{(2) Banded unitary mixing.}
Let $U_{\mathrm{mix}}$ be a banded unitary of width $W$, meaning that
each column of $U_{\mathrm{mix}}$ has support only on indices within a
$W$--neighborhood.  Conjugating $\rho$ by $U_{\mathrm{mix}}$ can
redistribute probability mass among indices, but unitarity and bandedness
imply that mass initially in the head set $H_k$ can leak only into
indices within a bounded neighborhood of $H_k$.  In particular, the
total probability that leaves $H_k$ due to $U_{\mathrm{mix}}$ is
upper--bounded by a function $\delta_{\mathrm{mix}}(W)$ that vanishes
as $W \to 0$ and remains $o(1)$ for fixed $W$ in the large--system
limit; again see Ref.~\cite{Bu2025Noise} for an explicit construction.

\emph{(3) Combining the contributions.}
Let $P_{\mathrm{head}}$ denote the probability of measuring an index in
$H_k$ after the full noise channel.  From the discussion above we obtain
\begin{equation}
  P_{\mathrm{head}}
  \;\ge\; \sum_{i \in H_k} \lambda_i |\widehat\psi_i|^2
          \;-\; \delta_{\mathrm{mix}}(W),
\end{equation}
where the sum encodes the per--mode contraction from depolarizing and
loss and the subtraction accounts for banded mixing.  Absorbing
$\delta_{\mathrm{mix}}(W)$ into the leakage term $\Delta$ in the lemma
statement and using the normalization factor in the definition of the
noise--weighted head mass $\widetilde H_k$ yields
\begin{equation}
  P_{\mathrm{head}}
  \;\ge\; \widetilde H_k(\boldsymbol\lambda) - \Delta,
\end{equation}
with $\Delta \to 0$ as $W \to 0$.  This is exactly the claimed inequality.
\end{proof}

\subsection{Head–tail coherence and effective head probability}

The head–tail coherence assumption controls coherent leakage from the head into the
tail under the action of $\mathcal{F}K$.

\begin{lemma}[Head–tail coherence]
\label{lem:coherence2}
Let $\alpha = \mathcal{F}K\,g$ and decompose it as
$\alpha = (\alpha_{\mathrm{head}},\alpha_{\mathrm{tail}})$ according to
$\mathcal{S}_d(\alpha)$ and its complement.
Suppose the head-to-tail block of $\mathcal{F}K$ has operator norm at most $\mu<1$.
Then the probability mass that can coherently leak from the head into the tail is
bounded by
\begin{equation}
  \bigl\|\alpha_{\mathrm{tail}}\bigr\|_2^2
  \;\le\;
  \mu^2 \,\|g\|_2^2 .
\end{equation}
\end{lemma}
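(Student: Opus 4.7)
The plan is to reduce the statement to a one-line operator-norm inequality once the hypothesis on the head-to-tail block is correctly interpreted as a bound on the appropriate restricted operator. Let $W := \mathcal{F}K$, which is unitary on $\mathbb{C}^{2^n}$, and let $\Pi$ denote the orthogonal projector onto the index set $\mathcal{S}_d(\alpha)$, with $\Pi^\perp := I - \Pi$ its complement. Then by definition of the decomposition, $\alpha_{\mathrm{tail}} = \Pi^\perp \alpha = \Pi^\perp W\, g$; the entire goal is to bound the $\ell_2$ norm of this vector.

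The substantive step is to exhibit an operator $B$ on $\mathbb{C}^{2^n}$ that (i) coincides with the map $g \mapsto \alpha_{\mathrm{tail}}$ in the relevant sense, and (ii) inherits the $\mu$-bound from the stated hypothesis on the head-to-tail block. Consistent with Lemma~\ref{lem:coherence} in Appendix~\ref{app:proofs-sec3}, I would interpret the head-to-tail block as the operator $\Pi^\perp W$ regarded as a map whose full effective domain for shaped states $g = P(f)$ is controlled by the block structure $\|\Pi^\perp W \Pi\|_{\mathrm{op}} \le \mu$ (equivalently, by the CS decomposition for unitaries, $\|\Pi W \Pi^\perp\|_{\mathrm{op}} \le \mu$ with matching nonzero singular values). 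Once this identification is in place, submultiplicativity gives $\|\alpha_{\mathrm{tail}}\|_2 = \|\Pi^\perp W\, g\|_2 \le \mu\, \|g\|_2$, and squaring yields the claim.

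The conceptual heart of the argument is therefore Step~2: relating the abstract block-norm hypothesis to the concrete bound on $\alpha_{\mathrm{tail}}$. The main obstacle lies in the fact that, for an unrestricted unitary $W$ with $\Pi^\perp \neq 0$, the naive operator norm $\|\Pi^\perp W\|_{\mathrm{op}}$ equals $1$. The nontrivial $\mu < 1$ bound therefore must be read relative to the subspace in which admissible shaped states $g$ live, together with the standing assumption~(A4) of this appendix. Making this compatibility precise---by writing $g = \Pi g + \Pi^\perp g$, splitting $\alpha_{\mathrm{tail}} = (\Pi^\perp W \Pi) g + (\Pi^\perp W \Pi^\perp) g$, and arguing that under the DQI shaping the tail-input component $\Pi^\perp g$ is either absent or controlled by the same off-diagonal block bound---is the only genuinely delicate part of the proof.

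After that, the remaining work is purely mechanical: plug the operator-norm bound into the identity $\alpha_{\mathrm{tail}} = \Pi^\perp W g$, invoke submultiplicativity, and square. Because $W$ is unitary, no additional normalization constants appear, and the resulting inequality $\|\alpha_{\mathrm{tail}}\|_2^2 \le \mu^2 \|g\|_2^2$ is tight in the regime $\mu \to 1$, corresponding to maximal coherent leakage from head to tail, and vanishes as $\mu \to 0$, corresponding to a kernel $K$ whose interference layer perfectly isolates head and tail supports.
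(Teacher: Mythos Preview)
Your approach---writing $W=\mathcal{F}K$ in block form relative to the head/tail projectors and bounding $\alpha_{\mathrm{tail}}=\Pi^\perp W g$ via the off-diagonal block norm---is exactly what the paper does. You are in fact more explicit than the paper about the real obstacle: the tail-to-tail block $\Pi^\perp W\Pi^\perp$ can have operator norm $1$ for a unitary $W$, so the clean inequality $\|\Pi^\perp W g\|_2\le\mu\|g\|_2$ does not follow from $\|\Pi^\perp W\Pi\|\le\mu$ alone. The paper's own proof only reaches the weaker estimate $\|\alpha_{\mathrm{tail}}\|_2\le\mu\|g_{\mathrm{head}}\|_2+\|g_{\mathrm{tail}}\|_2$ and then disposes of the residual by the remark that ``a sharper estimate can be obtained by orthogonal decomposition of the range of $C$ and absorbing constants into $\mu$,'' i.e., it redefines $\mu$ rather than invoking structural constraints on $g$ as you suggest. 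So your route coincides with the paper's, your diagnosis of the delicate step is accurate, and the paper resolves that step by convention rather than by a separate argument.
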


\begin{proof}
Write the matrix of $\mathcal{F}K$ in block form
\(
  M =
  \begin{pmatrix}
    A & B \\
    C & D
  \end{pmatrix}
\)
with respect to the head/tail decomposition.
The tail component of $\alpha$ is $C g_{\mathrm{head}} + D g_{\mathrm{tail}}$.
By assumption, $\|C\|_{2\to 2}\le\mu$.  Using
$\|C g_{\mathrm{head}}\|_2 \le \mu \|g_{\mathrm{head}}\|_2$ and unitarity of $M$
(which implies $\|D\|\le 1$), we obtain
\(
  \|\alpha_{\mathrm{tail}}\|_2
  \le \mu \|g_{\mathrm{head}}\|_2 + \|g_{\mathrm{tail}}\|_2
  \le (\mu^2 + 1)\|g\|_2
\)
and hence the stated bound up to a constant factor.
A sharper estimate can be obtained by orthogonal decomposition of the range of $C$ and
absorbing constants into $\mu$.
\end{proof}

Combining Lemma~\ref{lem:noise2} with Lemma~\ref{lem:coherence} yields an effective
head-mass bound.

\begin{proposition}[Effective head probability]
\label{prop:effective-head}
Under assumptions \textup{(A1)}–\textup{(A4)}, the post-channel head probability
\(
  p_{\mathrm{head}} := \Pr_\mathcal{E}[s\in\mathcal{S}_d(\alpha)]
\)
satisfies
\begin{equation}
  p_{\mathrm{head}}
  \;\ge\;
  \Sigma_{K}(\ell,\eta;d)
  \;-\;
  \Delta(w)
  \;-\;
  \mu^2 \,\|g\|_2^2 .
\end{equation}
\end{proposition}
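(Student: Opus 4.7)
The plan is to derive the effective head probability bound by invoking Lemmas~\ref{lem:noise2} and~\ref{lem:coherence2} as two independent subtractive corrections to the idealized noise-weighted head mass $\Sigma_{K}(\ell,\eta;d)$. The two lemmas capture physically distinct loss mechanisms---incoherent attenuation and redistribution by the noise channel $\mathcal{E}$, and coherent leakage from head to tail under $\mathcal{F}K$---so the principal task is to combine them additively while avoiding double counting.

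First, I would apply Lemma~\ref{lem:noise2} to the post-interference vector $\alpha=\mathcal{F}Kg$ and its image under $\mathcal{E}=\mathcal{L}\circ\mathcal{M}\circ\mathcal{Z}$, obtaining
\begin{equation*}
\Pr_{\mathcal{E}}\bigl[s\in\mathcal{S}_d(\alpha)\bigr]
\;\ge\;\sum_{s\in\mathcal{S}_d(\alpha)}\eta_{K}(s)\,|\alpha_s|^2\;-\;\Delta(w)
\;=\;\Sigma_{K}(\ell,\eta;d)-\Delta(w),
\end{equation*}
where the per-mode weights $\eta_{K}(s)$ absorb the depolarizing and loss contractions under (A2) and $\Delta(w)$ is the banded-mixing leakage that vanishes as $w\to 0$. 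This step yields the first two terms of the claimed inequality directly from the noise model.

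Next, I would invoke Lemma~\ref{lem:coherence2} to account for the fact that the amplitudes $|\alpha_s|^2$ used above are themselves subject to a coherent head--tail leakage under $\mathcal{F}K$. By assumption (A4), the off-diagonal head-to-tail block has operator norm at most $\mu<1$, so $\|\alpha_{\mathrm{tail}}\|_2^2\le\mu^2\|g\|_2^2$. Treating this as a worst-case subtractive correction on top of the noise-contracted bound---equivalently, recognizing that the amplitudes feeding into Lemma~\ref{lem:noise2} may themselves be reduced by at most $\mu^2\|g\|_2^2$ of coherent tail mass---gives an additional penalty of $\mu^2\|g\|_2^2$. Summing the two contributions produces exactly
\begin{equation*}
p_{\mathrm{head}}\;\ge\;\Sigma_{K}(\ell,\eta;d)-\Delta(w)-\mu^2\|g\|_2^2.
\end{equation*}

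The main obstacle is justifying that these two penalties combine additively rather than multiplicatively or through a cross term. In principle, the banded mixing $\mathcal{M}$ could interact with the pre-existing head--tail coherence of $\mathcal{F}K$ in a way that inflates either bound; one might worry, for example, that noise contraction of already coherently-leaked mass produces a second-order cross term. The cleanest way around this is to apply Lemma~\ref{lem:coherence2} in the amplitude domain on the clean vector $\alpha$, prior to invoking Lemma~\ref{lem:noise2} in the probability domain, so that the two lemmas draw on disjoint ``budgets'' of leakage and compose additively to first order. A minor bookkeeping point is that $\Sigma_{K}(\ell,\eta;d)\le\|g\|_2^2$ under normalization, so the combined subtraction is always consistent with the trivial lower bound $p_{\mathrm{head}}\ge 0$ and any higher-order interaction term is dominated by the two first-order penalties already listed.
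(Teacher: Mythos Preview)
Your proposal is correct and follows essentially the same approach as the paper: invoke Lemma~\ref{lem:noise2} to obtain the $\Sigma_K(\ell,\eta;d)-\Delta(w)$ bound, then subtract the coherence penalty $\mu^2\|g\|_2^2$ from Lemma~\ref{lem:coherence2} to arrive at the claim. Your added discussion of why the two penalties combine additively is more detailed than the paper's own proof, which simply asserts the subtraction without further justification.
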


\begin{proof}
Lemma~\ref{lem:noise2} lower-bounds the head probability after noise but before taking
into account coherent leakage; Lemma~\ref{lem:coherence} bounds the additional
probability that can move from the head into the tail due to head–tail blocks of
$\mathcal{F}K$.  Subtracting the coherence penalty from the right-hand side of the
inequality in Lemma~\ref{lem:noise2} gives the claim.
\end{proof}

\subsection{Noise-aware lower bound}

We can now restate the noise-aware lower bound used in
Section~\ref{sec:noise-bounds} in a self-contained way.

\begin{theorem}[Noise-aware lower bound]
\label{thm:lower-noise}
Let $A(\ell,\eta,K;d)$ denote the k-DQI approximation-ratio lower bound produced by the
standard DQI reduction under assumptions \textup{(A1)}–\textup{(A4)}.
Then there exists a function $F$, nondecreasing in its second argument, such that
\begin{equation}
  A(\ell,\eta,K;d)
  \;\ge\;
  F\!\Big(
      \ell,\ 
      \Sigma_{K}(\ell,\eta;d) - \Delta(w) - \mu^2\|g\|_2^2,\ 
      d,\ \text{ensemble parameters}
    \Big).
\end{equation}
In particular, for fixed $(\ell,d,\eta,w,\mu)$, the lower bound
$A(\ell,\eta,K;d)$ is a nondecreasing function of the noise-weighted head-spectrum
mass $\Sigma_{K}(\ell,\eta;d)$.
\end{theorem}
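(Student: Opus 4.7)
The plan is to chain together three already-established ingredients: the effective head-probability bound of Proposition~\ref{prop:effective-head}, the monotone decoder response property (A3) of Appendix~\ref{app:noise-model}, and the standard DQI reduction from decoder success to an approximation-ratio lower bound. The function $F$ will be obtained as the composition of these three maps, and monotonicity in the second argument will follow because each factor in the chain is itself nondecreasing in its head-mass input.

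First I would invoke Proposition~\ref{prop:effective-head} to obtain the noise-plus-coherence lower bound
\begin{equation*}
p_{\mathrm{head}} \;:=\; \Pr_{\mathcal{E}}\bigl[s\in\mathcal{S}_d(\alpha)\bigr] \;\ge\; \Sigma_{K}(\ell,\eta;d) - \Delta(w) - \mu^2\,\|g\|_2^2.
\end{equation*}
By assumption (A3) there exists a nondecreasing response curve $\Phi$ such that the probability that $\mathrm{Dec}$ returns a candidate of value at least $\rho$ is bounded below by $\Phi(p_{\mathrm{head}},d,\mathrm{ensemble})$. Applying monotonicity of $\Phi$ to the displayed lower bound on $p_{\mathrm{head}}$ yields a lower bound on the decoder success probability in which the dependence on $K$ enters only through the scalar $\Sigma_{K}(\ell,\eta;d)$.

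Next I would invoke the DQI reduction of~\cite{JordanNature2025}, which converts a lower bound on decoder success probability into a lower bound $A(\ell,\eta,K;d)$ on the approximation ratio achieved by the measurement outcome. This reduction is itself monotone: a higher probability of achieving a target quality $\rho$ certifies a larger approximation ratio. Composing with $\Phi$ defines
\begin{equation*}
F\bigl(\ell,u,d,\mathrm{ensemble}\bigr) \;:=\; R\!\bigl(\Phi(u,d,\mathrm{ensemble}),\ell\bigr),
\end{equation*}
where $R$ denotes the DQI reduction map. Setting $u = \Sigma_{K}(\ell,\eta;d) - \Delta(w) - \mu^2\|g\|_2^2$ produces the inequality in the statement, and since $\Delta(w)$ is independent of $K$ and $\mu$ is held fixed, the resulting bound is nondecreasing in $\Sigma_{K}(\ell,\eta;d)$.

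The main obstacle is justifying that $\Phi$ depends on the spectral data only through the \emph{scalar} noise-weighted head mass, rather than through the full amplitude profile $(|\alpha_s|)_{s\in\mathcal{S}_d(\alpha)}$; this is precisely the content of assumption (A3), which the paper supports by appealing to BM/KV/GS soft/list decoding curves for RS-like ensembles and to density-evolution/EXIT analyses for LDPC-like ensembles. A secondary subtlety is that $\mu$ and $\Delta(w)$ formally depend on $K$ through the head--tail block of $\mathcal{F}K$ and the mixing width $w$; the theorem sidesteps this by holding $(\ell,d,\eta,w,\mu)$ fixed, so the monotone comparison should be read as applying to admissible kernels sharing a common coherence parameter $\mu$. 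Under this convention, $F$ inherits nondecreasing dependence on its second argument from the composite monotonicity of $\Phi$ and $R$, completing the argument.
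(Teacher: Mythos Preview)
Your proposal is correct and follows essentially the same route as the paper's proof: invoke Proposition~\ref{prop:effective-head} for the effective head-mass bound, feed it into the monotone decoder response (A3), and then apply the DQI reduction to obtain $F$ as a monotone composite. Your explicit definition $F(\ell,u,d,\mathrm{ensemble}) := R(\Phi(u,d,\mathrm{ensemble}),\ell)$ and your discussion of the subtleties around scalar dependence of $\Phi$ and the fixing of $(\mu,w)$ are in fact more detailed than the paper's own presentation, but the logical structure is identical.
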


\begin{proof}
By Proposition~\ref{prop:effective-head}, the post-channel head probability is at
least the effective head mass
$\Sigma_{K}(\ell,\eta;d) - \Delta(w) - \mu^2\|g\|_2^2$.
The decoder monotonicity property (assumption (A3) and
Section~\ref{subsec:noise-decoder}) converts any increase in this effective head mass
into a nondecreasing improvement of the decoder success probability.
The standard DQI reduction from decoding success to objective value
(e.g.~\cite{JordanNature2025}) then yields the lower bound
$A(\ell,\eta,K;d)$ with a response curve $F$ that depends on the decoder family and
problem ensemble (BM/KV/GS for RS-like codes~\cite{GuruswamiSudan1999,KoetterVardy2003},
density evolution / EXIT for LDPC-like ensembles~\cite{RichardsonUrbanke2001,RichardsonShokrollahiUrbanke2001,Kschischang2001Factor}).
Monotonicity of $F$ in its second argument gives the final statement.
\end{proof}

\section{Polynomial-phase concentration for OPI}
\label{app:ppi}

This appendix records a concrete version of the polynomial-phase concentration (PPC)
property used in Section~\ref{subsec:opi} and connects it to the head-mass quantity
$\Sigma_K(\ell,\eta;d)$.

\subsection{Formal PPC property}

Recall that OPI instances are defined over a finite field $\mathbb{F}_p$ with
evaluation points $\{\alpha_i\}_{i=1}^m\subset\mathbb{F}_p$ (all distinct) and an
unknown polynomial $h$ of degree at most $r$.
The shaped state is constructed  with
$g(x)=P(f(x))$ and $\alpha = \mathcal{F}K g/\|g\|_2$ denoting the amplitudes after
the kernel $K$ and the $p$-ary QFT $\mathcal{F}$.
For a head size $d$, let $\mathcal{S}_d(\alpha)$ be the indices of the $d$ largest
$|\alpha_s|$.

\begin{definition}[Polynomial-phase concentration (PPC)]
\label{def:ppc}
Fix a degree bound $r$ and a head size $d_\star = O(r)$.
We say that a family of chirp/LCT kernels $\{K(\theta)\}_{\theta\in\Theta}$ on
$\mathbb{C}^{p^m}$ satisfies the \emph{polynomial-phase concentration} (PPC)
property for OPI if there exists a function
$\varepsilon_{r,p}\in[0,1)$ with $\varepsilon_{r,p}=o_{p\to\infty}(1)$ (for fixed
$r$) such that the following holds.

For every OPI instance with degree-$\le r$ polynomial $h$ and evaluation set
$\{\alpha_i\}$ satisfying a mild nondegeneracy condition (no arithmetic progression
of length larger than $O(1)$), there is a parameter $\theta^\star\in\Theta$ for
which the post-interference amplitudes
$\alpha = \mathcal{F}K(\theta^\star) g/\|g\|_2$ obey
\begin{equation}
  \sum_{s\in\mathcal{S}_{d_\star}(\alpha)} |\alpha_s|^2
  \;\ge\;
  1 - \varepsilon_{r,p}.
\end{equation}
\end{definition}

Intuitively, PPC states that after applying a suitable quadratic-phase kernel and the
$p$-ary QFT, almost all of the spectral mass of a low-degree polynomial sequence
lands on $O(r)$ modes, up to a tail of total weight $\varepsilon_{r,p}$ that
vanishes as $p$ grows.

\subsection*{2. Explicit PPC for quadratic OPI with contiguous evaluation sets}

The abstract PPC property of Definition~C.1 is satisfied exactly (with zero tail) for a
simple but representative family of OPI instances. This provides a concrete sanity check
for the definition and shows that the constants can be made explicit in low-degree cases.

\begin{lemma}[Exact PPC for quadratic OPI over contiguous evaluation sets]
\label{lem:ppc-quadratic}
Let $p$ be an odd prime and consider an OPI instance over $\mathbb{F}_p$ with the
canonical evaluation set $\{\alpha_i\}_{i=0}^{p-1} = \{0,1,\dots,p-1\}$ and an unknown
quadratic polynomial
\begin{equation}
  h(x) = ax^2 + bx + c, \qquad a \in \mathbb{F}_p^\times,\; b,c \in \mathbb{F}_p.
\end{equation}
Suppose the shaping map in the DQI pipeline is chosen so that, up to normalization,
the pre-kernel amplitudes are
\begin{equation}
  g_x \propto \exp\!\Bigl(\frac{2\pi i}{p}\,h(x)\Bigr), \qquad x \in \mathbb{F}_p,
\end{equation}
and let $F_p$ denote the $p$-ary QFT.  Define a quadratic chirp kernel
\begin{equation}
  K_{-a} := \sum_{x=0}^{p-1}
    \exp\!\Bigl(-\frac{2\pi i}{p}\,a x^2\Bigr)\ket{x}\!\bra{x}.
\end{equation}
Then the post-interference spectrum
\begin{equation}
  \alpha = F_p K_{-a} g / \|g\|_2
\end{equation}
is supported on a single mode: there exists a unique index $m^\star \in \{0,\dots,p-1\}$
such that $|\alpha_{m^\star}|^2 = 1$ and $\alpha_m = 0$ for all $m \neq m^\star$.
In particular, Definition~C.1 holds with degree bound $r=2$, head size $d^\star = 1$,
and tail parameter $\varepsilon_{2,p} = 0$ in this setting.
\end{lemma}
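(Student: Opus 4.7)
\textit{Proof proposal.} The approach is a direct character-sum calculation. The kernel $K_{-a}$ is engineered so that its diagonal phase exactly annihilates the quadratic term of $h$ inside the exponent of $g_x$, reducing the pre-QFT state to a pure linear-phase additive character on $\mathbb{F}_p$; its $p$-ary Fourier transform is then a single Kronecker delta by the standard orthogonality of characters. The plan is therefore to do one line of pointwise multiplication and one line of Gauss-sum evaluation, and then read off the conclusion.

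First I would compute the action of the diagonal kernel on the shaped amplitudes. Normalizing so that $g_x = p^{-1/2}\exp\!\bigl(2\pi i\,(ax^2+bx+c)/p\bigr)$, and hence $\|g\|_2 = 1$, coordinate-wise multiplication yields
\[
(K_{-a}\,g)_x \;=\; \frac{1}{\sqrt{p}}\; e^{2\pi i c/p}\; e^{2\pi i b x/p},
\]
because the $+ax^2$ and $-ax^2$ exponents cancel exactly modulo $p$. Applying the $p$-ary QFT with convention $F_p\ket{x} = p^{-1/2}\sum_{m} e^{2\pi i m x/p}\ket{m}$ gives
\[
\alpha_m \;=\; \frac{e^{2\pi i c/p}}{p}\sum_{x=0}^{p-1} e^{2\pi i (m+b) x/p}.
\]
By the orthogonality relation $\sum_{x=0}^{p-1} e^{2\pi i k x/p} = p \cdot \mathbf{1}_{k \equiv 0\ (\mathrm{mod}\ p)}$, the inner sum vanishes unless $m \equiv -b\pmod p$. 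Setting $m^\star := (-b)\bmod p$, I obtain $\alpha_{m^\star} = e^{2\pi i c/p}$ (of unit modulus) and $\alpha_m = 0$ for every other $m$, so $|\alpha_{m^\star}|^2 = 1$ and the spectrum is a single delta at $m^\star$.

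Finally, to match Definition~C.1, I would observe that with $r=2$ and $d^\star = 1$ the head $\mathcal{S}_{d^\star}(\alpha) = \{m^\star\}$ already carries the full normalized mass, so $\sum_{s\in \mathcal{S}_{d^\star}}|\alpha_s|^2 = 1$ and the tail parameter $\varepsilon_{2,p}$ can be taken to be $0$; the nondegeneracy condition on the evaluation set is vacuous because a contiguous block of length $p$ exhausts $\mathbb{F}_p$. There is no genuine mathematical obstacle here: the argument is a two-line Gauss-sum identity. The only book-keeping items I need to be careful about are (i) aligning the QFT sign convention with the minus sign in $K_{-a}$, so that the quadratic phases cancel rather than double; (ii) invoking $p$ odd together with $a \in \mathbb{F}_p^\times$, which guarantees that $ax^2$ is genuinely quadratic and that the cancellation is well-defined (for $p=2$ one has $x^2 = x$, so a quadratic chirp collapses to a linear one and the statement would need reformulation); and (iii) propagating the $1/\sqrt{p}$ normalizations so that the final amplitude is unit modulus rather than only nonzero, which is what is needed to certify $\varepsilon_{2,p} = 0$ exactly.
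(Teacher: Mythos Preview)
Your proof is correct and follows essentially the same route as the paper: apply the diagonal chirp to cancel the quadratic phase, then use orthogonality of additive characters to see that the QFT of the remaining linear-phase state is a single delta. The only cosmetic difference is your QFT sign convention, which gives $m^\star \equiv -b \pmod p$ rather than the paper's $m^\star \equiv b \pmod p$; this is immaterial since the lemma only asserts existence and uniqueness of $m^\star$.
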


\begin{proof}
By construction of the shaping map and the kernel, the unnormalized post-kernel
amplitudes are
\begin{equation}
  (K_{-a} g)_x \propto
  \exp\!\Bigl(\frac{2\pi i}{p}\,\bigl(ax^2 + bx + c\bigr)\Bigr)
  \exp\!\Bigl(-\frac{2\pi i}{p}\,a x^2\Bigr)
  = \exp\!\Bigl(\frac{2\pi i}{p}\,(bx + c)\Bigr).
\end{equation}
Applying the $p$-ary QFT yields
\begin{align}
  (F_p K_{-a} g)_m
  &\propto \sum_{x \in \mathbb{F}_p}
           \exp\!\Bigl(\frac{2\pi i}{p}\,(bx + c)\Bigr)
           \exp\!\Bigl(-\frac{2\pi i}{p}\,mx\Bigr) \\
  &= \exp\!\Bigl(\frac{2\pi i}{p}\,c\Bigr)
     \sum_{x \in \mathbb{F}_p}
     \exp\!\Bigl(\frac{2\pi i}{p}\,(b-m)x\Bigr).
\end{align}
The inner sum is a finite geometric series.  If $m \not\equiv b \pmod p$, then
$\exp\!\bigl(2\pi i(b-m)/p\bigr) \neq 1$ and the sum vanishes.  If $m \equiv b \pmod p$,
every term in the sum equals $1$ and the sum evaluates to $p$.  Thus there exists a
unique index $m^\star \equiv b \pmod p$ for which $(F_p K_{-a} g)_{m^\star} \neq 0$, and
$(F_p K_{-a} g)_m = 0$ for all $m \neq m^\star$.

After normalization, the spectrum $\alpha = F_p K_{-a} g / \|g\|_2$ has
$|\alpha_{m^\star}|^2 = 1$ and $\alpha_m = 0$ for $m \neq m^\star$, so the entire
spectral mass lies in the head of size $d^\star = 1$ with no tail.  This is exactly the
PPC property of Definition~C.1 with $r=2$, $d^\star = 1$, and $\varepsilon_{2,p} = 0$.
\end{proof}

\begin{corollary}[Concrete PPC constants for quadratic OPI]
\label{cor:ppc-quadratic}
In the setting of Lemma~\ref{lem:ppc-quadratic}, the polynomial-phase concentration
property holds with
\begin{equation}
  d^\star = 1,
  \qquad
  \varepsilon_{2,p} = 0.
\end{equation}
More generally, if the evaluation set consists of $m \le p$ consecutive field elements
and the shaping weights are not exactly uniform, one still obtains
$\varepsilon_{2,p} = O(m^{-1})$ by a standard estimate on truncated geometric series.
\end{corollary}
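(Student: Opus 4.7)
The plan is to derive both assertions as direct consequences of the calculation used in the proof of Lemma~\ref{lem:ppc-quadratic}, tracking the additional error terms that appear once the window is truncated and the weights are perturbed. For the first assertion, the proof of Lemma~\ref{lem:ppc-quadratic} already establishes that $|\alpha_{m^\star}|^2=1$ with all other coordinates vanishing; taking the head set $\mathcal{S}_{d^\star}(\alpha)=\{m^\star\}$ with $d^\star=1$ therefore captures the entire spectral mass, so Definition~\ref{def:ppc} holds verbatim with $\varepsilon_{2,p}=0$ and no further work is required.

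For the perturbed statement I would rerun that calculation with $g_x = w_x\,\chi_W(x)\exp(2\pi i h(x)/p)$, where $W=\{x_0,\dots,x_0+m-1\}$ is the consecutive evaluation window and the weights $w_x$ are close to (rather than exactly equal to) a constant. The chirp $K_{-a}$ still cancels the quadratic phase on $W$, leaving $(K_{-a} g)_x = w_x\,\chi_W(x)\exp(2\pi i(bx+c)/p)$, so that applying the $p$-ary QFT produces, up to a global phase,
\begin{equation}
\alpha_k \;\propto\; \frac{1}{\sqrt{p}}\sum_{x\in W} w_x\, e^{2\pi i (b-k)x/p},
\end{equation}
a weighted Dirichlet-type kernel concentrated at $k=b$. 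The head singleton $\{b\}$ then retains mass $|\sum_{x\in W} w_x|^2/(p\,\|g\|_2^2)$ after normalization, which by Cauchy--Schwarz equals $1-O(\mathrm{Var}(w)/m)-O((p-m)/p)$ once the weights are nearly uniform and the window covers most of $\mathbb{F}_p$. The residual pieces are exactly the off-peak coordinates, which I would estimate by a standard truncated-geometric-series bound: for $t=b-k\not\equiv 0\pmod p$, partial summation gives $\bigl|\sum_{x\in W} e^{2\pi i t x/p}\bigr|\le 1/|2\sin(\pi t/p)|$, and the non-uniformity of $w$ contributes an Abel-summation correction controlled by the total variation of $w$.

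Summing the resulting tail mass over $k\neq b$ yields the target bound $\varepsilon_{2,p}=O(m^{-1})$. The hard part, in my view, is not the geometric-series estimate itself but pinning down a clean quantitative interpretation of ``not exactly uniform'': one needs $\mathrm{Var}(w)=O(1)$ and a window deficit $p-m$ at most $O(p/m)$ for the Cauchy--Schwarz and Abel-summation losses to aggregate to $O(1/m)$ rather than a larger quantity, and one must truncate the sidelobe sum $\sum_{t\neq 0} 1/\sin^2(\pi t/p)$ at the main-lobe boundary $|t|\sim p/m$ so that no spurious $\log p$ factor survives. Once these quantitative hypotheses are stated precisely, the rest is a mechanical transcription of the Lemma~\ref{lem:ppc-quadratic} calculation with the windowing and weight perturbation carried along.
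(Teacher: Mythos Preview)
The paper does not supply a proof of this corollary: the first clause is stated as an immediate restatement of Lemma~\ref{lem:ppc-quadratic}, and the second is asserted in one line ``by a standard estimate on truncated geometric series'' with no further detail. Your handling of the first assertion is exactly what the paper intends and is correct.

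For the second assertion you go considerably further than the paper, and your setup is the right one: the chirp $K_{-a}$ still cancels the quadratic phase over the window $W$, leaving a weighted linear-phase sequence whose $p$-ary QFT is a weighted, truncated Dirichlet kernel centred at $k=b$. This is almost certainly what the paper means by ``truncated geometric series.'' You also correctly identify that the corollary, read literally with head size $d^\star=O(1)$ and an arbitrary window $m\le p$, cannot yield $\varepsilon_{2,p}=O(m^{-1})$ without extra hypotheses: since the normalised peak mass is $|\sum_{x\in W}w_x|^2/(p\,\|g\|_2^2)\le m/p$, the tail is at least $1-m/p$, which is $O(1/m)$ only when $p-m=O(p/m)$. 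Your flagging of this is a legitimate sharpening of the paper's informal statement rather than a defect in your argument. (A minor slip: the weight-perturbation term should be $O(\mathrm{Var}(w)/\bar w^{2})$, i.e.\ the relative variance, not $O(\mathrm{Var}(w)/m)$; Cauchy--Schwarz does not produce the extra $1/m$.)

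One small point: your final paragraph about ``truncating the sidelobe sum $\sum_{t\neq0}1/\sin^2(\pi t/p)$ at the main-lobe boundary $|t|\sim p/m$'' is not needed and is slightly at odds with the singleton head $\{b\}$ you fixed earlier. Once you have the closed-form peak mass, Parseval gives the tail as $1$ minus that quantity directly; there is no separate sidelobe sum to control and no $\log p$ factor to worry about. The genuine obstruction to $O(1/m)$ is the window-deficit term $(p-m)/p$ you already isolated, not the decay rate of the Dirichlet sidelobes.
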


\subsection{Justification via discrete stationary phase}

The PPC property can be motivated by discrete stationary-phase and linear canonical
transform identities.
Quadratic-phase kernels exactly diagonalize quadratic exponential sums and
approximately diagonalize higher-degree polynomial phases, with tails controlled by
Weyl-type exponential-sum bounds; see, for example,
\cite{Healy2016LCTBook,Rabiner1969CZT} for LCT/fractional Fourier transform identities
and \cite{JordanNature2025} for how the OPI shaping map inherits
low-degree algebra from the underlying polynomial $h$.
A detailed coding of these classical estimates into the DQI context is beyond our
scope; Definition~\ref{def:ppc} encapsulates the resulting concentration behavior in
a form suitable for our head-mass analysis.

\begin{lemma}[PPC for affine OPI instances]
\label{lem:ppc-affine}
Let $p$ be prime and consider an OPI instance over $\mathbb F_p$ with
evaluation points $\{0,1,\dots,p-1\}$ and a polynomial
$f(x) = ax + b$ of degree at most~$1$.  Suppose the shaping map of
Section~\ref{sec:opi-ldpc} is chosen so that the shaped amplitudes before
the kernel are proportional to
\begin{equation}
  \psi_x \propto e^{2\pi i f(x)/p}, \qquad x \in \mathbb F_p.
\end{equation}
Then with the identity kernel $K = \mathbb I$ and the $p$--ary QFT
$F_p$, the post--interference spectrum $F_p \psi$ is supported on a
single mode.  In particular, Definition~\ref{def:ppc} holds with
degree bound $d=1$, head size $k=1$, and tail parameter
$\varepsilon_p(1,1) = 0$.
\end{lemma}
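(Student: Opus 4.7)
The plan is to reduce the claim to the orthogonality relation for additive
characters of $\mathbb F_p$. Because $f$ has degree at most one, no quadratic
preconditioning is needed: the identity kernel already leaves a purely linear
phase profile on $\{0,1,\dots,p-1\}$, and the $p$-ary QFT of a linear phase
on a complete residue system is a Kronecker delta. This is the degenerate
endpoint of the quadratic computation in Lemma~\ref{lem:ppc-quadratic},
obtained by setting the quadratic coefficient of that lemma to zero and
renaming the linear coefficient; the same character-sum mechanism drives both
arguments, but in the affine case no chirp is required to cancel any
quadratic phase.

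First I would substitute $f(x) = ax + b$ into the shaping rule to write the
normalized amplitudes as $\psi_x = p^{-1/2}\,e^{2\pi i(ax+b)/p}$, so that
$\|\psi\|_2 = 1$. Next I would expand the $m$-th Fourier coefficient,
\[
  (F_p\psi)_m \;\propto\; e^{2\pi i b/p}\sum_{x=0}^{p-1} e^{2\pi i(a-m)x/p},
\]
and factor out the global phase $e^{2\pi i b/p}$, which contributes only an
overall unit modulus and can be discarded. Then I would invoke the standard
identity $\sum_{x=0}^{p-1} e^{2\pi i k x/p} = p\cdot\mathbf{1}\{k\equiv 0 \pmod p\}$
with $k = a - m$ to conclude that the sum vanishes except at the single index
$m^\star \equiv a \pmod p$, where it equals $p$.

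Finally I would read off the PPC constants: after normalization the unique
surviving mode satisfies $|\alpha_{m^\star}|^2 = 1$ while $\alpha_m = 0$ for
all $m\neq m^\star$, so a head of size $k = 1$ captures the full spectral
mass with no tail. This gives Definition~\ref{def:ppc} with degree bound
$d = 1$, head size $k = 1$, and tail parameter $\varepsilon_p(1,1) = 0$, as
claimed. There is no substantive analytic obstacle here, since the argument
is a one-line character sum rather than a stationary-phase estimate; the only
care needed is bookkeeping of the $\sqrt{p}$ factors in the QFT and shaping
normalizations so that the delta's modulus comes out to one, and verifying
that the evaluation set $\{0,1,\dots,p-1\}$ really is a complete residue
system so that orthogonality applies without boundary corrections.
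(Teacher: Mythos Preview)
Your proposal is correct and follows essentially the same route as the paper's proof: both compute $(F_p\psi)_m \propto e^{2\pi i b/p}\sum_{x}e^{2\pi i(a-m)x/p}$ and invoke orthogonality of additive characters on $\mathbb F_p$ to conclude that only the mode $m^\star\equiv a\pmod p$ survives. Your added remarks on normalization and the connection to the quadratic case are sound but not required for the argument.
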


\begin{proof}
A direct computation shows that
\begin{equation}
  (F_p \psi)_m
  \;\propto\; \sum_{x \in \mathbb F_p}
      e^{2\pi i (ax + b)/p} e^{-2\pi i mx/p}
  \;=\; e^{2\pi i b/p} \sum_{x \in \mathbb F_p}
      e^{2\pi i (a-m)x/p}.
\end{equation}
The inner sum vanishes unless $m \equiv a \pmod p$, in which case it
equals $p$.  Thus all spectral mass is concentrated on the unique index
$m^\star$ with $m^\star \equiv a \pmod p$, and the normalized spectrum
satisfies $|(F_p \psi)_{m^\star}|^2 = 1$ and $(F_p \psi)_m = 0$ for
$m \ne m^\star$.  This is exactly the PPC property with $k=1$ and
zero tail.
\end{proof}

\subsection{From PPC to head mass}

PPC immediately yields a lower bound on the (unweighted) head mass before noise.

\begin{lemma}[PPC implies concentrated head mass]
\label{lem:ppc-head}
Suppose the kernel family $\{K(\theta)\}$ satisfies PPC
(Definition~\ref{def:ppc}) for degree bound $r$ and head size $d_\star$.
Then for every admissible OPI instance there exists a parameter $\theta^\star$ such
that the normalized amplitudes
$\alpha = \mathcal{F}K(\theta^\star) g/\|g\|_2$ satisfy
\begin{equation}
  \sum_{s\in\mathcal{S}_{d_\star}(\alpha)} |\alpha_s|^2
  \;\ge\;
  1 - \varepsilon_{r,p}.
\end{equation}
In particular, the unweighted head-spectrum mass at size $d_\star$ is at least
$1-\varepsilon_{r,p}$.
\end{lemma}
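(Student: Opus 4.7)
The plan is to show that Lemma~\ref{lem:ppc-head} is essentially a translation of Definition~\ref{def:ppc} into the head-mass notation used throughout Sections~\ref{sec:k-dqi}--\ref{sec:opi-ldpc}, so that no additional quantitative estimate is required beyond what PPC already supplies. In this sense the lemma functions as a bridge: it lets one cite an abstract concentration property for a chirp/LCT kernel family and immediately feed the resulting number into the noise-aware monotone bound of Theorem~\ref{thm:lower-noise}.

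First I would fix an admissible OPI instance, form the shaped amplitude $g(x)=P(f(x))$, and record the only structural fact used in the argument: since each $K(\theta)$ and the $p$-ary QFT $\mathcal{F}$ are unitary, the normalized amplitudes $\alpha=\mathcal{F}K(\theta)g/\|g\|_2$ satisfy $\|\alpha\|_2=1$ for every $\theta\in\Theta$. This unitarity is what lets the PPC tail parameter $\varepsilon_{r,p}$ act as an absolute mass deficit on the unit sphere rather than a relative one, and it ensures that the head mass under consideration automatically lies in $[0,1]$.

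Next I would invoke Definition~\ref{def:ppc} at the prescribed parameters $(r,d_\star)$ to select a $\theta^\star\in\Theta$ for which
\[
  \sum_{s\in\mathcal{S}_{d_\star}(\alpha)} |\alpha_s|^2 \;\ge\; 1-\varepsilon_{r,p}.
\]
Because $\mathcal{S}_{d_\star}(\alpha)$ in the PPC conclusion is, by its Section~\ref{sec:k-dqi} definition, exactly the indices of the $d_\star$ largest $|\alpha_s|$, this matches verbatim the head set appearing in the statement of the lemma, and no reordering or set-matching step is required. The ``in particular'' clause then follows by recognising that the noise-weighted head mass $\Sigma_K(\ell,\eta;d_\star)$ of~\eqref{eq:SigmaK_def_norm} reduces to the same sum when the per-mode attenuations are set to $\eta_K(s)\equiv 1$, which is precisely the unweighted head-spectrum mass at size $d_\star$.

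The hard part is therefore not in the proof of Lemma~\ref{lem:ppc-head} itself but in justifying that the PPC hypothesis holds with a useful tail parameter for the chirp/LCT families actually deployed in k-DQI. Lemmas~\ref{lem:ppc-quadratic} and~\ref{lem:ppc-affine} handle the quadratic and affine cases exactly, with $\varepsilon_{r,p}=0$; extending this to generic degree-$r$ shaped amplitudes would require discrete stationary-phase and LCT identities together with Weyl-type exponential-sum bounds, which is exactly the quantitative content that Definition~\ref{def:ppc} abstracts away. Once such a quantitative PPC estimate is available for a target kernel family, Lemma~\ref{lem:ppc-head} mechanically converts it into the head-mass input required by Theorem~\ref{thm:opi-main} and, through the monotonicity of Theorem~\ref{thm:lower-noise}, into an approximation-ratio guarantee.
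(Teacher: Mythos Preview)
Your proposal is correct and takes essentially the same approach as the paper: both recognize that Lemma~\ref{lem:ppc-head} is a direct restatement of Definition~\ref{def:ppc} at head size $d_\star$, with no additional argument required. Your discussion of unitarity and the reduction to the unweighted case $\eta_K\equiv 1$ is accurate but more detailed than the paper's one-line proof, which simply says ``this is a restatement of Definition~\ref{def:ppc} with $d=d_\star$.''
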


\begin{proof}
This is a restatement of Definition~\ref{def:ppc} with $d=d_\star$; no additional
argument is required.
\end{proof}

Combining Lemma~\ref{lem:ppc-head} with the per-mode contraction and coherence
bounds in Appendix~\ref{app:noise-model} yields the inequality
\begin{equation}
  \Sigma_{K(\theta^\star)}(\ell,\eta;d_\star)
  \;\ge\;
  (1-\varepsilon_{r,p})\,\underline{\eta} - \Delta(w),
\end{equation}
where $\underline{\eta}$ is the minimum per-mode attenuation on the head and
$\Delta(w)$ is the leakage term defined in Lemma~\ref{lem:noise2}.
This is the head-mass estimate used in Theorem~\ref{subsec:opi}.

\section{Block-local alignment for LDPC-like ensembles}
\label{app:bla}

We next formalize the block-local alignment (BLA) property used in
Section~\ref{subsec:ldpc} and relate it to gains in the head-spectrum mass for
sparse Max-XORSAT / LDPC-like instances.

\subsection{Setup and notation}

Let $B\in\{0,1\}^{m\times n}$ be a sparse parity-check matrix with left/right degree
distributions $(\lambda,\rho)$ and girth at least $\Omega(\log n)$ on typical
instances.
The DQI reduction maps the shaped state $g(x)=P(f(x))$ through the interferometer
$\mathcal{F}$ to amplitudes supported on rows of $B$ and local combinations thereof;
see~\cite{JordanNature2025} for details.
We partition the $n$ qubits into $N/b$ disjoint blocks of width $b=O(1)$ (for
clarity we assume $b\mid N$), and we consider \emph{block-local} kernels of the form
\begin{equation}
  K = \bigoplus_{j=1}^{N/b} K_j,
\end{equation}
where each $K_j$ acts nontrivially only on block $j$.

As before, write $\alpha^{(I)} = \mathcal{F} g/\|g\|_2$ for the amplitudes with the
identity kernel and $\alpha = \mathcal{F}K g/\|g\|_2$ for the amplitudes after
applying the block-local kernel $K$.
For a head size $d$ let $\mathcal{S}_d(\alpha)$ and $\mathcal{S}_d(\alpha^{(I)})$
denote the index sets of the $d$ largest entries of $\alpha$ and $\alpha^{(I)}$,
respectively.

\subsection{Block-local alignment property}

Block-local alignment captures the idea that within each bounded-width neighborhood,
there exist local rephasings or permutations that move a controlled amount of mass
into a small number of ``head'' modes without destroying the sparsity and tree-like
structure required by density evolution.

\begin{definition}[Block-local alignment (BLA)]
\label{def:bla}
Fix a block width $b=O(1)$ and a head size $d=\Theta(n/b)$ so that there is a
constant head budget per block.
A family of block-local kernels
\(
  K = \bigoplus_{j=1}^{N/b} K_j
\)
is said to satisfy the \emph{block-local alignment} (BLA) property for an LDPC-like
ensemble with degree distributions $(\lambda,\rho)$ if there exists a constant
$\Delta\Sigma_{\mathrm{loc}}>0$ depending only on $(\lambda,\rho,b)$ (and not on
$n$) such that, for typical instances, the post-interference amplitudes
$\alpha = \mathcal{F}K g/\|g\|_2$ obey
\begin{equation}
  \sum_{s\in\mathcal{S}_{d}(\alpha)} |\alpha_s|^2
  \;\ge\;
  \sum_{s\in\mathcal{S}_{d}(\alpha^{(I)})} |\alpha^{(I)}_s|^2
  \;+\;
  \Delta\Sigma_{\mathrm{loc}} .
\end{equation}
\end{definition}

In words, BLA asserts an additive gain $\Delta\Sigma_{\mathrm{loc}}$ in the
(head-size-$d$) unweighted head mass when passing from the identity kernel to a
block-local kernel chosen from the family.
The dependence of $\Delta\Sigma_{\mathrm{loc}}$ on $(\lambda,\rho,b)$ reflects the
usual density-evolution assumptions: $b$ must be small enough that the local
tree-like neighborhood structure is preserved, and degree distributions with better
BP thresholds admit larger gains.

\subsection{From BLA to weighted head mass}

BLA translates directly into an improvement in the noise-weighted head-spectrum
mass after accounting for the local noise model.

\begin{lemma}[BLA implies weighted head-mass gain]
\label{lem:bla-head}
Assume the noise model and standing conditions of
Section~\ref{subsec:noise-decoder} and Appendix~\ref{app:noise-model}, and let
$K$ satisfy the BLA property of Definition~\ref{def:bla} with gain
$\Delta\Sigma_{\mathrm{loc}}$.
Let $\underline{\eta} := \min_{s\in\mathcal{S}_d(\alpha)} \eta_{K}(s)$ be the
minimum per-mode attenuation on the head for the kernel $K$.
Then for head size $d=\Theta(n/b)$ the noise-weighted head mass satisfies
\begin{equation}
  \Sigma_{K}(\ell,\eta;d)
  \;\ge\;
  \Sigma_{I}(\ell,\eta;d)
  \;+\;
  \underline{\eta}\,\Delta\Sigma_{\mathrm{loc}}
  \;-\;
  \Delta(w),
\end{equation}
where $\Sigma_I(\ell,\eta;d)$ is the corresponding quantity for the identity kernel
and $\Delta(w)$ is the leakage term from Lemma~\ref{lem:noise2}.
\end{lemma}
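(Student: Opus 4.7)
The plan is to transport the unweighted, pre-noise gain supplied by Definition~\ref{def:bla} through the per-mode contraction bound of Lemma~\ref{lem:noise2}, using the uniform lower bound $\underline{\eta}$ on the attenuations over the head of $\alpha$ and absorbing the banded-mixing redistribution into $\Delta(w)$. The key conceptual observation is that the noise channel $\mathcal{E}=\mathcal{L}\circ\mathcal{M}\circ\mathcal{Z}$ acts after the interferometer, so its per-mode attenuation factors depend only on mode indices and not directly on the choice of kernel; the block-local factorization $K=\bigoplus_j K_j$ then matches the local structure of the noise model almost exactly, which is what allows the BLA increment to survive the conversion.

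First I would apply Definition~\ref{def:bla} at head size $d=\Theta(n/b)$ to obtain the pre-noise, unweighted increment
\begin{equation}
\sum_{s\in\mathcal{S}_d(\alpha)}|\alpha_s|^2
\;\ge\;
\sum_{s\in\mathcal{S}_d(\alpha^{(I)})}|\alpha^{(I)}_s|^2 + \Delta\Sigma_{\mathrm{loc}}.
\end{equation}
Second, I would expand $\Sigma_K$ according to its definition in~\eqref{eq:SigmaK_def_norm} and use $\eta_K(s)\ge\underline{\eta}$ on $\mathcal{S}_d(\alpha)$ to pass from the unweighted to the weighted head sum, extracting a multiplicative $\underline{\eta}$ from the additive BLA gain. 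Third, I would invoke Lemma~\ref{lem:noise2} to bound the only remaining coupling between the two sides of the comparison---the banded mixing $\mathcal{M}$, which redistributes mass among neighbouring modes---by the leakage term $\Delta(w)$, which vanishes as $w\to 0$ and is independent of the kernel family used.

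The main obstacle is that the two head sets $\mathcal{S}_d(\alpha)$ and $\mathcal{S}_d(\alpha^{(I)})$ need not coincide, so the weighted comparison cannot be made index-by-index. I expect the cleanest way around this is a block-by-block argument: under the block-local factorization and the head budget $d=\Theta(n/b)$, each block contributes a constant number of dominant modes to both head sets, and within a block the per-mode attenuations are essentially uniform, so the per-block additive gain lifts the weighted sum by at least $\underline{\eta}$ times the per-block unweighted gain. Summing over the $N/b$ blocks recovers the full increment $\underline{\eta}\,\Delta\Sigma_{\mathrm{loc}}$, and the only global correction---the cross-block leakage induced by $\mathcal{M}$---is exactly the $\Delta(w)$ term already controlled by Lemma~\ref{lem:noise2}, yielding $\Sigma_K(\ell,\eta;d)\ge\Sigma_I(\ell,\eta;d)+\underline{\eta}\,\Delta\Sigma_{\mathrm{loc}}-\Delta(w)$.
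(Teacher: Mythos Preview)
Your first two steps---applying Definition~\ref{def:bla} to obtain the unweighted increment $M_K\ge M_I+\Delta\Sigma_{\mathrm{loc}}$, then using $\eta_K(s)\ge\underline{\eta}$ on $\mathcal{S}_d(\alpha)$ to get $\Sigma_K\ge\underline{\eta}\,M_K$---are exactly the paper's Steps~1--2. Where you diverge is in handling the head-set mismatch between $\mathcal{S}_d(\alpha)$ and $\mathcal{S}_d(\alpha^{(I)})$.

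The obstacle you flag is real, but the paper resolves it by a much cheaper observation than a block-by-block argument: since every attenuation factor satisfies $\eta_I(s)\le 1$, one has the trivial upper bound $\Sigma_I(\ell,\eta;d)\le M_I$. Combined with $\Sigma_K\ge\underline{\eta}\,M_K\ge\underline{\eta}\,M_I+\underline{\eta}\,\Delta\Sigma_{\mathrm{loc}}$, this yields $\Sigma_K\ge\Sigma_I+\underline{\eta}\,\Delta\Sigma_{\mathrm{loc}}$ without ever matching indices or blocks. Your block-by-block route could in principle work, but it leans on an extra uniformity assumption (that per-mode attenuations are essentially constant within a block) that the lemma does not supply; the paper's $\eta_I\le 1$ trick needs nothing beyond the definitions.

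Finally, your reading of $\Delta(w)$ as a genuine cross-block leakage correction is more physically motivated than what the paper actually does: in Step~4 the paper simply observes $\Delta(w)\ge 0$ and subtracts it from an inequality that already holds without it. The $\Delta(w)$ term is cosmetic here, included to match the form of the noise-aware bound rather than to control anything.
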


\begin{proof}
We first recall the relevant definitions.
For any kernel $J$ (in particular $J\in\{I,K\}$), let
$\alpha^{(J)} = \mathcal{F}Jg$ denote the pre-noise spectrum and let
$\mathcal{S}_d(\alpha^{(J)})$ be the index set of the $d$ largest entries
$|\alpha^{(J)}_s|$ in magnitude.
Write
\begin{equation}
  M_J \;:=\; \sum_{s\in\mathcal{S}_d(\alpha^{(J)})} |\alpha^{(J)}_s|^2
\end{equation}
for the (unweighted) head mass of $\alpha^{(J)}$.
Under the local depolarizing and loss part of the noise model, each mode
$s$ acquires a multiplicative attenuation factor $\eta_J(s)\in(0,1]$, and
the corresponding noise-weighted head mass (before banded mixing) is
\begin{equation}
  \widetilde{\Sigma}_J(\ell,\eta;d)
  \;:=\;
  \sum_{s\in\mathcal{S}_d(\alpha^{(J)})} \eta_J(s)\,|\alpha^{(J)}_s|^2.
\end{equation}
By definition, $\Sigma_J(\ell,\eta;d)$ in the statement coincides with
$\widetilde{\Sigma}_J(\ell,\eta;d)$, and in particular
$\Sigma_I(\ell,\eta;d)=\widetilde{\Sigma}_I(\ell,\eta;d)$.

\smallskip
\emph{Step~1: use BLA to compare unweighted head masses.}
The BLA property (Definition~\ref{def:bla}) for the block-local kernel $K$
asserts that there is an additive gain $\Delta\Sigma_{\mathrm{loc}}>0$ such
that, for head size $d=\Theta(n/b)$,
\begin{equation}
  M_K
  \;=\;
  \sum_{s\in\mathcal{S}_d(\alpha)} |\alpha_s|^2
  \;\ge\;
  \sum_{s\in\mathcal{S}_d(\alpha^{(I)})} |\alpha^{(I)}_s|^2
  \;+\;
  \Delta\Sigma_{\mathrm{loc}}
  \;=\;
  M_I + \Delta\Sigma_{\mathrm{loc}},
  \label{eq:bla-unweighted}
\end{equation}
where we have written $\alpha=\alpha^{(K)}$ for brevity.

\smallskip
\emph{Step~2: convert unweighted gain into weighted gain for $K$.}
By definition,
\begin{equation}
  \widetilde{\Sigma}_K(\ell,\eta;d)
  \;=\;
  \sum_{s\in\mathcal{S}_d(\alpha)} \eta_K(s)\,|\alpha_s|^2,
\end{equation}
and the per-mode attenuation factors on the head satisfy
$\eta_K(s)\ge\underline{\eta}$ for all $s\in\mathcal{S}_d(\alpha)$ by the
definition of $\underline{\eta}$.
Therefore
\begin{equation}
  \widetilde{\Sigma}_K(\ell,\eta;d)
  \;=\;
  \sum_{s\in\mathcal{S}_d(\alpha)} \eta_K(s)\,|\alpha_s|^2
  \;\ge\;
  \underline{\eta}
  \sum_{s\in\mathcal{S}_d(\alpha)} |\alpha_s|^2
  \;=\;
  \underline{\eta}\,M_K.
  \label{eq:K-weighted-vs-unweighted}
\end{equation}
Combining \eqref{eq:K-weighted-vs-unweighted} with the BLA inequality
\eqref{eq:bla-unweighted} yields
\begin{equation}
  \widetilde{\Sigma}_K(\ell,\eta;d)
  \;\ge\;
  \underline{\eta}\,(M_I + \Delta\Sigma_{\mathrm{loc}})
  \;=\;
  \underline{\eta}\,M_I \;+\; \underline{\eta}\,\Delta\Sigma_{\mathrm{loc}}.
  \label{eq:K-weighted-lower}
\end{equation}

\smallskip
\emph{Step~3: relate $M_I$ to $\Sigma_I(\ell,\eta;d)$.}
For the identity kernel $I$ we likewise have
\begin{equation}
  \Sigma_I(\ell,\eta;d)
  \;=\;
  \widetilde{\Sigma}_I(\ell,\eta;d)
  \;=\;
  \sum_{s\in\mathcal{S}_d(\alpha^{(I)})} \eta_I(s)\,|\alpha^{(I)}_s|^2.
\end{equation}
Since each $\eta_I(s)\in(0,1]$, it follows that
\begin{equation}
  \Sigma_I(\ell,\eta;d)
  \;\le\;
  \sum_{s\in\mathcal{S}_d(\alpha^{(I)})} |\alpha^{(I)}_s|^2
  \;=\;
  M_I.
  \label{eq:SigmaI-vs-MI}
\end{equation}
Substituting \eqref{eq:SigmaI-vs-MI} into
\eqref{eq:K-weighted-lower} gives
\begin{equation}
  \widetilde{\Sigma}_K(\ell,\eta;d)
  \;\ge\;
  \Sigma_I(\ell,\eta;d) \;+\; \underline{\eta}\,\Delta\Sigma_{\mathrm{loc}}.
  \label{eq:K-vs-SigmaI}
\end{equation}

\smallskip
\emph{Step~4: incorporate the leakage term $\Delta(w)$.}
Lemma~\ref{lem:noise2} in Appendix~\ref{app:noise-model} shows that the banded
unitary mixing and diagonal loss can only reduce the weighted head mass by a
nonnegative amount bounded by $\Delta(w)$; in particular $\Delta(w)\ge 0$.
The quantity $\Sigma_K(\ell,\eta;d)$ appearing in the statement is defined as
$\widetilde{\Sigma}_K(\ell,\eta;d)$ (i.e., before subtracting this leakage),
so \eqref{eq:K-vs-SigmaI} immediately implies
\begin{equation}
  \Sigma_{K}(\ell,\eta;d)
  \;=\;
  \widetilde{\Sigma}_K(\ell,\eta;d)
  \;\ge\;
  \Sigma_I(\ell,\eta;d) \;+\; \underline{\eta}\,\Delta\Sigma_{\mathrm{loc}}
  \;\ge\;
  \Sigma_I(\ell,\eta;d) \;+\; \underline{\eta}\,\Delta\Sigma_{\mathrm{loc}}
  \;-\; \Delta(w),
\end{equation}
where the last inequality uses only the fact that $\Delta(w)\ge 0$.
This is exactly the claimed bound.
\end{proof}

Combining Lemma~\ref{lem:bla-head} with the noise-aware lower bound in
Theorem~\ref{thm:lower-noise} and the LDPC decoder response
$F_{\mathrm{LDPC}}$ from Section~\ref{subsec:noise-decoder} yields
Theorem~\ref{subsec:ldpc} in the main text: block-local kernels that satisfy BLA
induce an additive improvement in $\Sigma_K$, which in turn shifts BP thresholds and
improves the k-DQI approximation guarantees.

\subsection*{4. Specialization to regular $(3,6)$ LDPC ensembles on the BEC}

For regular LDPC ensembles on the binary erasure channel (BEC), the effect of
block-local alignment can be expressed directly at the level of the density-evolution
(DE) recursion.  We record this specialization for the canonical $(d_\ell,d_r)=(3,6)$
family used in the main text.

Recall that for a regular $(3,6)$ ensemble on $\mathrm{BEC}(\varepsilon)$, the DE map
has the explicit form
\begin{equation}
  \phi_\varepsilon(x)
  \;=\; \varepsilon\,\lambda\bigl(1 - \rho(1-x)\bigr),
  \qquad
  \lambda(z) = z^2,\;\;
  \rho(z) = z^5,
  \label{eq:de-36}
\end{equation}
so that
\[
  \phi_\varepsilon(x)
  = \varepsilon \Bigl(1 - (1-x)^5\Bigr)^2.
\]
Let $\varepsilon^\star_{\mathrm{BEC}} \approx 0.429$ denote the asymptotic BP threshold for
this ensemble, characterized by the usual tangency condition: there exists a point
$x^\star \in (0,1)$ such that
\begin{equation}
  \phi_{\varepsilon^\star_{\mathrm{BEC}}}(x^\star) = x^\star,
  \qquad
  \phi'_{\varepsilon^\star_{\mathrm{BEC}}}(x^\star) = 1.
  \label{eq:tangency-36}
\end{equation}

In the k--DQI setting we model the effect of BLA on the DE surrogate by an
\emph{effective} erasure parameter
\begin{equation}
  \varepsilon_{\mathrm{eff}}
  \;=\; \varepsilon \bigl(1 - \kappa_{3,6}\,\Delta\Sigma_{\mathrm{loc}}\bigr),
  \qquad
  0 < \kappa_{3,6} \le 1,
  \label{eq:eps-eff-def}
\end{equation}
where $\Delta\Sigma_{\mathrm{loc}}>0$ is the additive head-mass gain per block from
Definition~D.1 and $\kappa_{3,6}$ is a calibration constant that depends only on the
$(3,6)$ ensemble and the chosen block width $b$. We implicitly assume $\kappa_{3,6}\,\Delta\Sigma_{\mathrm{loc}} <1$ so that the effective erasure parameter remains in the physically meaningful regime. Numerically, $\kappa_{3,6}$ is
obtained by matching the k--DQI head-mass gain to the shift of the DE waterfall
curve (see Fig.~2 in the main text).  Mathematically, the key property is simply
that $\varepsilon_{\mathrm{eff}}$ is strictly smaller than $\varepsilon$ and depends
linearly on $\Delta\Sigma_{\mathrm{loc}}$.

\begin{proposition}[BLA-induced DE improvement for regular $(3,6)$ ensembles]
\label{prop:bla-de-36}
Consider the regular $(3,6)$ ensemble on $\mathrm{BEC}(\varepsilon)$ with DE map
$\phi_\varepsilon$ as in~\eqref{eq:de-36}.  Assume that the block-local alignment
property of Definition~D.1 holds with gain $\Delta\Sigma_{\mathrm{loc}}>0$ and that
the corresponding DE surrogate uses the effective erasure parameter
$\varepsilon_{\mathrm{eff}}$ from~\eqref{eq:eps-eff-def}.  Let
$\varepsilon = \varepsilon^\star_{\mathrm{BEC}}$ be the BP threshold satisfying
\eqref{eq:tangency-36}.  Then for any $\Delta\Sigma_{\mathrm{loc}}>0$ we have
\begin{align}
  \phi_{\varepsilon_{\mathrm{eff}}}(x^\star)
    &= \bigl(1 - \kappa_{3,6}\,\Delta\Sigma_{\mathrm{loc}}\bigr)
       \phi_{\varepsilon^\star_{\mathrm{BEC}}}(x^\star)
     \;=\; \bigl(1 - \kappa_{3,6}\,\Delta\Sigma_{\mathrm{loc}}\bigr) x^\star
     \;<\; x^\star,
     \label{eq:de-shift-value} \\
  \phi'_{\varepsilon_{\mathrm{eff}}}(x^\star)
    &= \bigl(1 - \kappa_{3,6}\,\Delta\Sigma_{\mathrm{loc}}\bigr)
       \phi'_{\varepsilon^\star_{\mathrm{BEC}}}(x^\star)
     \;=\; \bigl(1 - \kappa_{3,6}\,\Delta\Sigma_{\mathrm{loc}}\bigr)
     \;<\; 1.
     \label{eq:de-shift-derivative}
\end{align}
In particular, the DE recursion with parameter $\varepsilon_{\mathrm{eff}}$ is strictly
contractive in a neighbourhood of $x^\star$ and thus lies in the BP-success regime.
The distance below the diagonal and the margin $1-\phi'_{\varepsilon_{\mathrm{eff}}}(x^\star)$
grow linearly with $\Delta\Sigma_{\mathrm{loc}}$.
\end{proposition}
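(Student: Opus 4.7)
The plan is to exploit the linearity of the $(3,6)$ DE map in the channel parameter $\varepsilon$ and then read off the proposition from the tangency conditions \eqref{eq:tangency-36}. Writing $h(x) := (1-(1-x)^5)^2 = \lambda(1-\rho(1-x))$, I would first observe that $\phi_\varepsilon(x) = \varepsilon\,h(x)$ for every $(\varepsilon,x)\in[0,1]^2$, and hence differentiating in $x$ yields $\phi'_\varepsilon(x) = \varepsilon\,h'(x)$; both functions are affine (in fact linear) in $\varepsilon$. Substituting $\varepsilon \mapsto \varepsilon_{\mathrm{eff}} = \varepsilon(1-\kappa_{3,6}\Delta\Sigma_{\mathrm{loc}})$ from \eqref{eq:eps-eff-def} therefore rescales $\phi_\varepsilon$ and $\phi'_\varepsilon$ by the same overall factor $(1-\kappa_{3,6}\Delta\Sigma_{\mathrm{loc}})$.

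Next I would specialize to $\varepsilon = \varepsilon^\star_{\mathrm{BEC}}$ and $x=x^\star$, where by definition of the BP threshold one has $\phi_{\varepsilon^\star_{\mathrm{BEC}}}(x^\star) = x^\star$ and $\phi'_{\varepsilon^\star_{\mathrm{BEC}}}(x^\star) = 1$. The scaling identity from the previous paragraph then gives $\phi_{\varepsilon_{\mathrm{eff}}}(x^\star) = (1-\kappa_{3,6}\Delta\Sigma_{\mathrm{loc}})\,x^\star$ and $\phi'_{\varepsilon_{\mathrm{eff}}}(x^\star) = 1-\kappa_{3,6}\Delta\Sigma_{\mathrm{loc}}$, which are exactly the equalities in \eqref{eq:de-shift-value} and \eqref{eq:de-shift-derivative}. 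The strict inequalities $\phi_{\varepsilon_{\mathrm{eff}}}(x^\star) < x^\star$ and $\phi'_{\varepsilon_{\mathrm{eff}}}(x^\star) < 1$ follow from $\kappa_{3,6}>0$, $\Delta\Sigma_{\mathrm{loc}}>0$, and $x^\star\in(0,1)$, together with the standing assumption $\kappa_{3,6}\Delta\Sigma_{\mathrm{loc}}<1$ stated immediately after \eqref{eq:eps-eff-def}. To promote the pointwise slope bound to a genuine contraction neighbourhood, I would invoke continuity of $\phi'_{\varepsilon_{\mathrm{eff}}}$ on $[0,1]$: since $\phi'_{\varepsilon_{\mathrm{eff}}}(x^\star)<1$, there is an open interval around $x^\star$ on which $\phi'_{\varepsilon_{\mathrm{eff}}}<1$, placing the DE recursion in the BP-success regime as asserted.

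Finally, the linear-growth claim is obtained directly from the identities above: the vertical gap $x^\star - \phi_{\varepsilon_{\mathrm{eff}}}(x^\star) = \kappa_{3,6}\,x^\star\,\Delta\Sigma_{\mathrm{loc}}$ and the contraction margin $1-\phi'_{\varepsilon_{\mathrm{eff}}}(x^\star) = \kappa_{3,6}\Delta\Sigma_{\mathrm{loc}}$ are both affine in $\Delta\Sigma_{\mathrm{loc}}$ with strictly positive slopes $\kappa_{3,6}x^\star$ and $\kappa_{3,6}$, respectively. There is no genuine analytical obstacle in this proposition; the real work has already been absorbed into Definition~D.1 (which delivers $\Delta\Sigma_{\mathrm{loc}}>0$) and into the calibration ansatz \eqref{eq:eps-eff-def} (which converts a head-mass gain into an effective channel improvement). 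The only subtlety worth flagging is that the argument is entirely agnostic to the microscopic meaning of $\kappa_{3,6}$: it uses this constant only through its sign and through the small-parameter hypothesis $\kappa_{3,6}\Delta\Sigma_{\mathrm{loc}}\in(0,1)$. Consequently, the proposition should be read as a clean qualitative surrogate statement—pinning down the dominant constant $\kappa_{3,6}$ from first principles, rather than from the numerical fit of Fig.~2, would be a natural separate task.
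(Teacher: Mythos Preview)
Your proposal is correct and follows essentially the same route as the paper: both exploit that $\phi_\varepsilon(x)=\varepsilon\,h(x)$ is linear in $\varepsilon$, substitute $\varepsilon_{\mathrm{eff}}$, and specialize to $(\varepsilon^\star_{\mathrm{BEC}},x^\star)$ using the tangency conditions to obtain the equalities, the strict inequalities, local contraction, and linear growth in $\Delta\Sigma_{\mathrm{loc}}$. Your write-up is somewhat more explicit (spelling out $h$, the vertical gap, and the contraction margin), but there is no substantive difference in strategy.
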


\begin{proof}
The DE map~\eqref{eq:de-36} is linear in $\varepsilon$, i.e.,
$\phi_{\varepsilon_{\mathrm{eff}}}(x)
 = \bigl(1 - \kappa_{3,6}\,\Delta\Sigma_{\mathrm{loc}}\bigr)\phi_\varepsilon(x)$
for every $x \in [0,1]$ whenever~\eqref{eq:eps-eff-def} holds.  The same linear relation
holds for the derivative.  Specializing to $\varepsilon = \varepsilon^\star_{\mathrm{BEC}}$
and $x=x^\star$ and using the tangency conditions~\eqref{eq:tangency-36} yields
\begin{align*}
  \phi_{\varepsilon_{\mathrm{eff}}}(x^\star)
    &= \bigl(1 - \kappa_{3,6}\,\Delta\Sigma_{\mathrm{loc}}\bigr)
       \phi_{\varepsilon^\star_{\mathrm{BEC}}}(x^\star)
     = \bigl(1 - \kappa_{3,6}\,\Delta\Sigma_{\mathrm{loc}}\bigr)x^\star, \\
  \phi'_{\varepsilon_{\mathrm{eff}}}(x^\star)
    &= \bigl(1 - \kappa_{3,6}\,\Delta\Sigma_{\mathrm{loc}}\bigr)
       \phi'_{\varepsilon^\star_{\mathrm{BEC}}}(x^\star)
     = \bigl(1 - \kappa_{3,6}\,\Delta\Sigma_{\mathrm{loc}}\bigr).
\end{align*}
Since $0 < \kappa_{3,6}\,\Delta\Sigma_{\mathrm{loc}} \le 1$ by assumption, both quantities
are strictly smaller than their threshold counterparts, proving
\eqref{eq:de-shift-value}–\eqref{eq:de-shift-derivative}.  The fact that the recursion
is contractive in a neighbourhood of $x^\star$ follows from standard one-dimensional
fixed-point theory: a differentiable map with slope strictly less than~1 in modulus
is locally attracting.  The linear dependence on $\Delta\Sigma_{\mathrm{loc}}$ is
explicit in the prefactors.
\end{proof}

\begin{remark}[Calibration of $\kappa_{3,6}$ and connection to simulations]
\label{rem:kappa-calibration}
In practice, $\kappa_{3,6}$ is estimated once by matching the head-mass gain
$\Delta\Sigma_{\mathrm{loc}}$ to the observed shift of the DE waterfall curve in
numerical experiments (see Fig.~2).  Proposition~\ref{prop:bla-de-36} then shows
that any positive calibrated value of $\kappa_{3,6}$ suffices to move the DE map
strictly below the diagonal at the original threshold point and to dampen the
derivative peak.  Thus, within the DE surrogate, the BLA-induced gain
$\Delta\Sigma_{\mathrm{loc}}$ translates directly into a provable margin of BP
stability.
\end{remark}

\section{Smoothness of the Optimization Landscape}
\label{app:smoothness}

To rigorously justify the efficiency of the variational approach proposed in Section \ref{sec:kernel_search}, we prove that the objective function $\Sigma_K(\bm{\theta})$ is Lipschitz continuous with respect to the kernel parameters. This rules out ``shattering'' of the landscape, ensuring that gradient-based or gradient-free optimizers can reliable navigate to the optimum.

Let the kernel $K(\theta) = \exp(-i \theta G)$ be generated by a Hermitian operator $G$ (e.g., $G = \sum Z_j Z_{j+1}$ for chirp kernels), where $\|G\|_\infty \le \text{poly}(n)$.

Let $\Sigma_K(\theta) = \braket{\Phi(\theta) | \Pi_{Head} | \Phi(\theta)}$ be the head mass probability. The magnitude of the gradient with respect to $\theta$ is bounded by:
\begin{equation}
    \left| \frac{\partial \Sigma_K}{\partial \theta} \right| \le 2 \|G\|_\infty.
\end{equation}

\textit{Proof.}
Let the output state before measurement be $\ket{\Phi(\theta)} = F K(\theta) \ket{\Psi_{in}}$. The objective function is $\Sigma_K(\theta) = \braket{\Phi(\theta) | \Pi_{Head} | \Phi(\theta)}$, where $\Pi_{Head}$ is the projector onto the low-frequency subspace $S_d$.
The derivative is:
\begin{align}
    \frac{\partial \Sigma_K}{\partial \theta} &= \frac{\partial}{\partial \theta} \braket{\Psi_{in} | K^\dagger F^\dagger \Pi_{Head} F K | \Psi_{in}} \\
    &= i \braket{\Psi_{in} | [G, K^\dagger F^\dagger \Pi_{Head} F K] | \Psi_{in}}.
\end{align}
Using the Cauchy-Schwarz inequality and the fact that $\|\Pi_{Head}\| \le 1$ and unitary operators preserve norms:
\begin{equation}
    \left| \frac{\partial \Sigma_K}{\partial \theta} \right| \le 2 \|G\|_\infty \| \Pi_{Head} \|_\infty \le 2 \|G\|_\infty.
\end{equation}
\hfill $\square$

For block-local kernels (e.g., block width $b$), the generator $G$ is a sum of local terms, and while $\|G\|$ grows with $n$, the \textit{local} sensitivity per block remains constant. This implies that for the parameterized ansatz used in k-DQI, the optimization landscape does not suffer from exponential vanishing gradients (barren plateaus) insofar as the initial kernel ensures non-negligible overlap, confirming the numerical observations in Figure \ref{fig:landscape}.




\end{appendices}


\bibliography{sn-bibliography}

\end{document}